\definecolor{darkgreen}{rgb}{0, .5, 0}
\definecolor{darkred}{rgb}{.5, 0, 0}
\theoremstyle{plain}
\newtheorem{theorem}{Theorem}[section]
\newtheorem{proposition}[theorem]{Proposition}
\newtheorem{assumption}[theorem]{Assumption} 
\newtheorem{lemma}[theorem]{Lemma} 
\newtheorem{example}[theorem]{Example}
\theoremstyle{definition} 
\newtheorem{definition}[theorem]{Definition}
\numberwithin{equation}{section}
\newcommand{\E}{{\mathbb{E}}}
\providecommand{\R}{{\mathbb{R}}}
\newcommand{\dd}{{\rm d}}
\providecommand{\N}{{\mathbb N}}
\newcommand{\1}{\ensuremath{\mathbf{1}}}
\newtheorem{remark}[theorem]{Remark}
\def\th@plain{%
  \thm@notefont{}
  \itshape 
}
\def\th@definition{%
  \thm@notefont{}
  \normalfont 
}
\def\P{{\mathbb P}}
\providecommand{\abs}[1]{\ensuremath{\left\lvert#1\right\rvert}}
\begin{document}
\singlespacing %
\title{\LARGE\bf Managing Default Contagion in Inhomogeneous Financial Networks}
\author{Nils Detering\thanks{Department of Statistics and Applied Probability, University of California, Santa Barbara, CA 93106, USA. Email: detering@pstat.ucsb.edu} \thanks{Nils Detering acknowledges financial support from the "Frankfurter Institut f\"ur Risikomanagement und Regulierung (FIRM)" and the Europlace Finance Institute.}, Thilo Meyer-Brandis\thanks{Department of Mathematics, University of Munich, Theresienstra\ss{}e 39, 80333 Munich, Germany. Emails: meyerbra@math.lmu.de, kpanagio@math.lmu.de and ritter@math.lmu.de}, Konstantinos Panagiotou\thanks{Konstantinos Panagiotou has received funding from the European Research Council, ERC Grant Agreement 772606–PTRCSP.} \footnotemark[3], Daniel Ritter\footnotemark[3]  }
\maketitle

\begin{abstract}
\noindent The aim of this paper is to quantify and manage systemic risk caused by default contagion in the interbank market. We model the market as a random directed network, where the vertices 
represent financial institutions and the weighted edges monetary exposures. Our model captures the strong
degree of heterogeneity observed in empirical data and the parameters can easily be fitted to real data sets.
Our first main result allows us to determine the impact of local shocks, where initially some banks default, to the entire system and the wider economy. Here the impact is measured by some index of total systemic importance of all eventually defaulted institutions. As a central application, we characterize resilient and non-resilient cases. In particular, for the prominent case where the network has a degree sequence without second moment, we show that a small number of initially defaulted banks can trigger a substantial default cascade. Our results complement and extend earlier findings derived in the configuration model, where the existence of a second moment of the degree distribution was assumed. As a second main contribution, paralleling regulatory discussions, we determine minimal capital requirements for financial institutions sufficient to make the network resilient to small shocks. An appealing feature of these capital requirements is that they can be determined locally by each institution without knowing the complete network structure, as they depend only on the institution's exposures to its counterparties.
\end{abstract}

\medskip
\noindent\textit{Keywords:} systemic risk, financial contagion, capital requirements, inhomogeneous random graphs, weighted random graphs, directed random graphs

\section{Introduction}\label{sec:intro}
\textbf{State of the Art}~~Already in 2003 Duffie and Singleton \cite{Duffie2003b} listed \textit{Systemic Risk} as one of five types of risk financial institutions are exposed to, and after the financial crisis in 2007 it gained major importance. From today's viewpoint, research is very diverse and parallels recent regulatory discussions that take systemic risk considerations into account, see \cite{BaselCommittee2013,Fed2015}. 
An overview of the different approaches to study systemic risk can be found e.\,g.~in~\cite{book:sr}. One important line of research addresses explicitly the network structure of the financial system, where institutions correspond to vertices in the network and edges represent dependencies among them, for example monetary exposures. Note that such exposures are more various than only usual \textit{Loans}; they can result from \textit{Securities Cross-holdings}, \textit{Derivatives} and \textit{Foreign Exchange}, see~\cite{Thurner2015}.  The  monograph \cite{Hurd2016} is an excellent reference for methods relying on network models. There, recent literature is summarized and \textit{Systemic Risk} is identified as comprising of some triggering shock event and its propagation through the system by which a major impact on the macroeconomy may occur. Moreover, \cite{Hurd2016} lists the following four main channels propagating the initial shock: \textit{Asset Correlation}, \textit{Default Contagion}, \textit{Liquidity Contagion}, and \textit{Market Illiquidity and Asset Fire Sales}.

Among the above mentioned propagation mechanisms of systemic risk, \textit{Default Contagion} probably is the one that has been studied the most in the existing literature. A prominent model was developed by Eisenberg and Noe \cite{Eisenberg2001}, 
who investigate uniqueness of a clearing vector 
for liabilities when some institutions cannot fully pay off their debt. The model was extended in various works, see for example \cite{Rogers2013,Weber2016}, where several issues, such as the assumption that default happens without additional costs or additional contagion channels, are addressed. Gai and Kapadia \cite{Gai2010} even assumed a recovery rate of zero, i.\,e.~default costs of $100\%$. This is reasonable for investigating the contagion mechanism, since processing defaults may take months or even years -- time that institutions in financial distress usually do not have. Furthermore, right after the default of an institution there is large uncertainty about its assets' value and the mark-to-market recovery rate is hence likely to be very low. As a striking example, in a bond auction for the settlement of credit default swaps written on Lehman Brothers just three weeks after its default the realized recovery rate only amounted to $8.625\%$~\cite{creditfixings}.

There are various possible approaches to design an adequate network model for the financial system. The most direct one is to work with the concrete observed structure of the financial network of interest. This is the approach used in \cite{Eisenberg2001} and its extensions. Similarly, \cite{Capponi2016} develops matrix majorization tools that allow to compare financial systems with different liability concentration in terms of the systemic loss generated. Financial networks also experience some change over time, however, and to make statements about the resilience of possible scenarios of the financial system in future it might not be advisable to simply consider today's observed network. In \cite{Chong2016} for example, the authors develop a structural default model and use a Bayesian network approach to derive formulas for the joint default and survival probability. Although not directly related to a default cascade, another structural stochastic model that describes interbank lending was proposed in \cite{Fouque2013b} and modified/extended in \cite{Carmona2014,Kley2014}.

An alternative approach, and the one being followed here, is to consider a random graph which is such that a typical sample resembles the important statistical characteristics -- for example the degree distribution -- of the real network. A strength of this approach is that it allows for the employment of  probabilistic limit theorems to asymptotically obtain analytic results for large networks in the analysis of systemic risk. Moreover, these results are then robust with respect to local changes of the network over time, as they are expressed in terms of statistical characteristics of financial networks which have been shown to be relatively stable over time, see e.\,g.~\cite{Cont2013}. 
A popular choice for such a random model is the configuration model, as pursued by Amini et al.~\cite{Cont2016,Amini2014c} for example. Among other results, 
they showed that a financial network is resilient to small initial shocks if and only if a specific measure only depending 
on the number of so-called \emph{contagious links} is negative; a debt is referred to as contagious if the bank cannot sustain the default of the corresponding counterparty. Thus, in this approach resilience is a property that can be characterized in terms of local effects only.

The resilience criterion in~\cite{Cont2016} is rather strong, but the paper makes a crucial assumption on the structure of the networks: it requires that the underlying degree distribution has a finite second moment. The reason is a technical one: without this condition the probability that the configuration model generates a simple network (i.\,e.~with no multiple edges or loops) is tiny, see also \cite{Janson2009}, and hence results that hold with high probability for the configuration model are not necessarily true when conditioning on simplicity of the graph. Real networks, however, are so 
inhomogeneous that their degree distribution does not necessarily have this property. Evidence of this is given in \cite{Boss2004,Cont2013}, where the degrees of the Austrian, respectively the Brazilian banking system, are shown to be Pareto distributed with parameter $\beta\in(2,3)$. Apart from that, a key property of real world financial networks, see for example \cite{Craig2014}, is a distinct core-periphery structure, where a few large banks are connected to many other large or small banks, but small banks are connected only to few others. In models of random graphs such structures appear only when the underlying degree distribution has no second moment~\cite{Hofstad2014vol2}. 

In the literature on random graphs there have been several proposals for constructing inhomogeneous simple random graphs. In \cite{ChungLu2002} for example the network is constructed for specified expected degrees (also called weights or fitness) of the vertices via a multiplicative specification of edge probabilities. In \cite{Bollobas2007} a more general kernel function for the probabilities is chosen.  
Whereas these examples construct undirected networks, however, financial networks clearly are directed. In \cite{Detering2015a}, the first three authors of this article have studied a directed version of the random graph in \cite{ChungLu2002};  this model will be the base for the model developed here. Also see \cite{Gandy2017} for a directed fitness model in the context of financial networks.

\vspace{1mm}
\noindent  \textbf{Contribution of this Work}~~The first aim of this paper is to develop and study a general model that can be used in various relevant situations. In particular, we will complement and extend the results in~\cite{Cont2016,Amini2014c} so that they are applicable to broader and more prominent settings, where for example the degree distributions are allowed to have an infinite 
second moment. As it turns out, deriving precise results in the more general setting with less assumptions is possible; however, we will also observe that the resilience criteria are not as simplistic as in~\cite{Cont2016,Amini2014c} and capture many global effects and interactions. Another main contribution is then the determination of sufficient minimal capital requirements for controlling systemic risk in financial networks. A striking feature of the derived management strategies is that the capital requirement  can essentially be determined from local information only, i.\,e.~from the profile of the particular bank. This is in contrast to other management/allocation rules obtained in deterministic networks that can only be specified in terms of the complete network.

Our starting point is the paper \cite{Detering2015a}, where the first three authors developed a directed random graph model that makes it possible to study contagion on random graphs whose underlying degree distribution is not necessarily required to possess a second moment,  while simultaneously preserving the simplicity of the network. Whereas financial networks are clearly weighted in the sense that exposures between banks have a monetary value, \cite{Detering2015a} is only capable of investigating the following simplified contagion mechanism: each bank is assigned an integer-valued threshold that represents the number of debtors in the network that need to default in order for the bank to default as well. It is not clear how such threshold values could be determined for an observed network since the actual sizes of loans also play an important role in the contagion mechanism for financial networks. 
Having said this, we make here  the following contributions.

\vspace{1mm}
\noindent  \textit{A Random Graph Model for Financial Networks}~~
We develop a model that combines many properties observed for real financial networks such as simplicity and directedness of the links between institutions, weighted connections and a strong degree of inhomogeneity. As a basic building block we use the model from \cite{Detering2015a}, and we enhance it with capitals on the vertices and weights on the edges. This allows us to use some results from \cite{Detering2015a} in the proofs for the enhanced model. We make asymptotic statements about the impact that initial defaults have on the system and the wider economy due to their propagation through the network via a cascade mechanism.
Compared to the existing literature, where mostly only the number of defaulted banks is considered a 
systemic risk factor, our setup allows to account for a more general systemic importance that each bank may have for the network or the real economy, e.\,g.~by providing infrastructure for the payment system or a considerable share in the lending business. This is in line with latest regulatory methods, where banks are categorized depending on their systemic importance \cite{BaselCommittee2013,Fed2015}. While our model is intended primarily to describe \textit{Default Contagion}, it can also be used to describe \textit{Illiquidity Contagion} or other percolation processes on weighted networks.

\vspace{1mm}
\noindent \textit{Resilience Criteria and Systemic Risk Capital Requirements}~
We derive explicit criteria that determine whether a financial network is resilient to contagion with respect to small shocks caused by extraordinary events, such as for example a stock market crash, natural disasters or war, that trigger the default of a few institutions. Our model specification allows us to choose these shocks in an arbitrarily correlated way, which is in line with the channel of \textit{Asset Correlation} listed above. 
Under certain conditions  we can determine the amplification factor of small initial shocks to the network. We then employ the resilience criteria to derive a formula for the banks' risk capital that is sufficient to make the system resilient to initial shocks. Due to the fact that we allow for infinite second moment of the degree sequences in our model, the derived risk capital turns out to be more restrictive than simply prohibiting contagious links as proposed in \cite{Cont2016}.
We derive a   formula for a threshold that will make the system resilient, which depends on the (expected) in-degree $w^-$ of a certain bank by a sublinear form $\alpha (w^-)^\gamma$, where $\alpha>0$ and $\gamma \in (0,1)$.
For the natural case of upper tail dependent degree sequences the threshold is sharp in the sense that barriers $\alpha_\text{c}$ and $\gamma_\text{c}$ for the values of $\alpha$ and $\gamma$ exist below which the system is non-resilient and above which it is always resilient.  We then state how such threshold values can be transformed into monetary capital requirements. This contributes to the ongoing discussion about adequate risk capital that was for example addressed 
by the Basel Committee on Banking Supervision in \cite{BaselCommittee2013} or by the Board of Governors of the Federal Reserve System in \cite{Fed2015}. 

A further important aspect in the context of risk management is the question about how to allocate the total risk to the individual institutions and the related question of deriving capital charges for them. In this context, complicated questions like `What is a fair allocation?' or `Should one impose higher systemic risk charges on institutions that in some sense are prone to transmitting systemic risk or on institutions that connect to such systemic banks and hence expose themselves to systemic risk?' arise. Approaches to these problems have for example been developed in \cite{Biagini2017,Feinstein2017}. However, to the best of our knowledge, all existing proposals of systemic risk allocations in the literature require the knowledge of the complete network. While a regulator might possess this knowledge, it seems complicated to communicate such systemic risk charges when the institutions cannot reconstruct those themselves. An  appealing feature of our approach is that each institution can compute its own systemic risk charge by basically just considering its local neighborhood, i.\,e., by knowing its counter parties.
In particular, this inherently ensures fairness in the sense that a bank's capital requirement only depends on its own business decisions. Furthermore, agents are prevented from manipulating capital charges of competitors. 
We remark that our results are in line with the analysis in \cite{Capponi2016} where it was found that the real network topology is {\it unbalancing} and as a consequence a deconcentration of exposures is desirable for stability.  

\vspace{1mm}
\noindent \textbf{Outline}~~The article is structured as follows: Section \ref{random:graph} describes our model for financial networks and states our main result about the damage caused by a default cascade in the model. In the third section, we first state sufficient criteria for resilience respectively non-resilience and then provide a rigorous proof of the formula describing the necessary risk capital of banks in a financial system. In Section \ref{simulation:study} we pursue simulation studies to support our findings. All proofs in this article are placed in Section \ref{sec:proofs}.

\section{Default Contagion on a Weighted, Directed Random Graph}\label{random:graph}
We shall present a stochastic model for a weighted, directed financial network. It will be based on the directed random graph model proposed in \cite{Detering2015a} (see Subsection \ref{ssec:special:case:threshold:model}) but complemented by edge weights. 
The main objective will be to assess the damage caused by default contagion asymptotically when the network size grows to infinity.

\subsection{Default Contagion and Systemic Importance}\label{ssec:default:contagion:systemic:importance}
We first describe the process of default contagion on a given (deterministic) financial network. If $n\in\N$ is the size of the network, we label the institutions (for simplicity called banks hereafter) by indices $i \in [n]$, where $[n]:=\{1,\dots,n \}$, and interpret them as vertices in a graph. Furthermore, $e_{i,j}\in\R_{+,0}$ describes the exposure of bank $j$ to bank $i$ and we include a directed edge of weight $e_{i,j}$ from $i$ to $j$ in the graph if $e_{i,j}>0$. We do not allow for self-loops or multiple edges between two vertices pointing in the same direction. That is, $e_{i,i}=0$ for all $i\in[n]$ and the network structure is completely determined by the exposure matrix $(e_{i,j})_{i,j\in[n]}$. Moreover, consider for each bank $i\in[n]$ its capital/equity $c_i\in\R_{+,0,\infty}:=\R_{+,0}\cup\{\infty\}$ and a value of \emph{systemic importance} $s_i\in\R_+$ which measures the potential damage caused by the default of bank~$i$. A regulating institution can specify $s_i$ 
in various ways, see for example the approaches 
developed by the \textit{Basel Committee on Banking Supervision}
\cite{BaselCommittee2013}, 
or
the \textit{Board of Governors of the Federal Reserve System} \cite{Fed2015}, 
as well as the concept of \textit{DebtRank} as introduced in \cite{Battiston2012a}. 

We call bank $i$ \emph{solvent} if $c_i>0$ and \emph{insolvent/defaulted} if $c_i=0$ (caused by some exogenous shock to the network). The set of \emph{initially defaulted} banks is hence given by $\mathcal{D}_0 = \{ i \in [n] \,:\, c_i = 0\}$. They trigger a default cascade $ \mathcal{D}_0\subseteq  \mathcal{D}_1\subseteq ...$ given by 
\begin{equation}\label{eqn:default:cascade}
\mathcal{D}_k = \Bigg\{ i \in [n] \,:\, c_i \le \sum_{j\in \mathcal{D}_{k-1}} e_{j,i} \Bigg\},
\end{equation} 
where in each step $k\geq1$ of the cascade process bank $i$ has to write off its exposures to banks that defaulted in step $k-1$ and goes bankrupt as soon as its total write-offs exceed its initial capital. The chain of default sets clearly stabilizes after at most $n-1$ steps and we call $\mathcal{D}_n=\mathcal{D}_{n-1}$ the {\em final default cluster} in the network induced by $\mathcal{D}_0$. We could easily introduce a constant recovery rate $R\in[0,1)$ to our model by multiplying exposures $e_{j,i}$ by a factor $1-R$ in \eqref{eqn:default:cascade}.

A first approach, that is often pursued in current literature, is to identify the damage caused to the financial network with the fraction $n^{-1}\vert\mathcal{D}_n\vert$. That is, damage is bearable if only few banks default as a result of the external shock event and the thereby started cascade process and it becomes the more threatening the larger the final fraction of defaulted banks $n^{-1}\vert\mathcal{D}_n\vert$ gets. In line with current regulator considerations, however, it is more realistic to consider the more general index of systemic importance of defaulted banks to really measure the damage to the economy. Instead of the size of the final default cluster $\mathcal{D}_n$, in the following we hence consider its total systemic importance $\mathcal{S}_n:=\sum_{i\in\mathcal{D}_n}s_i$ as a measure for the damage caused. Clearly, the particular  case $\mathcal{S}_n=\vert\mathcal{D}_n\vert$ is covered by setting $s_i=1$ for each $i\in[n]$.

\subsection[A Special Case: the Threshold Model]{A Special Case: the Threshold Model from \cite{Detering2015a}}\label{ssec:special:case:threshold:model}
Consider for now the special case that $e_{i,j}\in\{0,1\}$ and $c_i\in\N_{0,\infty}:=\N_0\cup\{\infty\}$. That is, whether or not a bank in the network defaults depends on the number of defaulted debtors (and the bank's individual integer-valued capital $c_i$). This particular setting, which we call the threshold model, forms the basis of the analysis in this article.  
It was studied in detail in \cite{Detering2015a} and we recall the most relevant results here.

Instead of a deterministic network structure, we describe the network as a random graph. To this end, in addition to capital $c_i$ and systemic importance $s_i$, assign to each vertex $i\in[n]$ two deterministic vertex-weights $w_i^-\in\R_+$ and $w_i^+\in\R_+$ and define the probability $p_{i,j}$ of a directed edge from vertex $i$ to vertex $j$ being present by
\begin{equation}\label{conn:prob}
p_{i,j}=\min \{1,w^+_i w^-_j/n \}\1_{i\neq j}.
\end{equation}
Further, let $X_{i,j}$ be the indicator random variable 
for the event of edge $(i,j)$ sent from vertex $i$ to vertex~$j$ being present and assume that these events are independent for all $i,j\in[n]$. The role of in-weight $w^-_i$ respectively out-weight $w^+_i$ is to determine the tendency of vertex $i\in [n]$ to have incoming respectively outgoing edges. The vertex-weights are deterministic and purely used as a mean to specify the edge probabilities. They should not be confused with the edge-weights~$e_{i,j}$. The construction of the random graph via vertex-weights resembles the one in \cite{ChungLu2002,Bollobas2007};  note, however, that our graphs are directed.

For each configuration of the network we then consider the cascade process \eqref{eqn:default:cascade} to obtain the final default cluster $\mathcal{D}_n$ and its  systemic importance $\mathcal{S}_n$. Since the network is random, these quantities are random variables. The idea in the following is to let the network grow in a regular fashion (see Assumption \ref{ass:regularity}) and to use law-of-large-numbers effects in order to derive a deterministic limit for $n^{-1}\mathcal{S}_n$.
In particular, for each network size $n\in\N$ let $\mathbf{w}^-(n)=(w_1^-(n),\ldots,w_n^-(n))$, $\mathbf{w}^+(n)=(w_1^+(n),\ldots,w_n^+(n))$, $\mathbf{s}(n)=(s_1(n),\ldots,s_n(n))$ and $\mathbf{c}(n)=(c_1(n),\ldots,c_n(n))$ be sequences of in-weights, out-weights, systemic importance values and capitals of the individual banks. We impose the following regularity conditions:
\begin{assumption}\label{ass:regularity}
For each $n\in\N$, denote the joint empirical distribution function of $\mathbf{w}^-(n)$, $\mathbf{w}^+(n)$, $\mathbf{s}(n)$ and $\mathbf{c}(n)$ by
\[ F_n(x,y,v,l)=n^{-1}\sum_{i\in[n]}\1\{w_i^-(n)\leq x,w_i^+(n)\leq y,s_i(n)\leq v,c_i(n)\leq l\},\quad  (x,y,v,l)\in\R_+^3\times\N_{0,\infty}, \]
and let $(W_n^-,W_n^+,S_n,C_n)$ a random vector distributed according to $F_n$. We assume that:
\begin{enumerate}
\item\label{ass:regularity:1} \textbf{Convergence in distribution:} There exists an $x_0 > 0$ and a distribution~$F$ on $\R_+^3\times\N_{0,\infty}$ such that $F(x,y,v,l)=0$ for all $x,y\leq x_0$, and such that at all points $(x,y,v,l)$ for which $F_l(x,y,v):=F(x,y,v,l)$ is continuous it holds $\lim_{n\to\infty}F_n(x,y,v,l)=F(x,y,v,l)$. Denote by $(W^-,W^+,S,C)$ a random vector distributed according to $F$.  
\item \textbf{Convergence of average weights and systemic importance:} $W^-$, $W^+$ and $S$ are integrable and $\E[W_n^-]\to\E[W^-]$, $\E[W_n^+]\to\E[W^+]$, $\E[S_n]\to\E[S]$ as $n\to\infty$.
\end{enumerate}
\end{assumption}

\noindent This assumption is of a technical nature and concerned with the behavior of the network parameters as the size of the network tends to infinity. For practical purposes one can think of Assumption \ref{ass:regularity} ensuring that the limiting network keeps the observed parameter distribution of some real network we want to investigate. In particular, the expected weights are assumed to stay finite. In \cite{Detering2015a} it was shown  that Assumption \ref{ass:regularity} implies $D_i^-\sim\mathrm{Poi}(w_i^-\E[W^+])$ respectively $D_i^+\sim\mathrm{Poi}(w_i^+\E[W^-])$ in the limit $n\to\infty$, where $D_i^-$ and $D_i^+$ denote the random in- respectively out-degree of vertex $i$ with weights $(w_i^-,w_i^+)$ and $\mathrm{Poi}(x)$ denotes a Poisson distributed random variable with parameter $x$. Conversely, one can show that for an observed network topology, i.\,e.~given in- and out-degrees, maximum likelihood estimators of the in- and out-weights are approximately given by the in- and out-degrees (normalized by some global factor). That is, morally one can think of the in- respectively out-weight of a vertex as its in- respectively out-degree.

Further, note that we do not assume $W^-$ or $W^+$ to have finite second moment, that is, the  model captures networks without a second moment condition on their degree sequences. In particular, choosing $W^-$ and $W^+$ power law distributed with parameters $\beta^-$ respectively $\beta^+$ results in power law distributions for the degrees $D^-$ and $D^+$ with the very same parameters. This allows to calibrate our model parameters to observed empirical in- and out-degree sequences. As we will see in Subsections \ref{ssec:threshold:requirements} and \ref{ssec:capital:requirements}, these power law parameters carry the most important information about the network when it comes to determining  capital requirements ensuring resilience to contagion.

We want to study the default process (\ref{eqn:default:cascade}), in particular the final default cluster~${\cal D}_n$ and its total systemic importance ${\cal S}_n$. To this end consider the following heuristics. Suppose that  $\zeta\in[0,\E[W^+]]$ denotes the total out-weight of finally defaulted banks divided by~$n$. Then in the limit $n\to\infty$ and assuming that each bank defaults independently we obtain for any bank $i\in[n]$ that the number of finally defaulted neighbors in the network should be close to a random variable distributed like $\mathrm{Poi}(w_i^-\zeta)$. Bank $i$ is thus finally defaulted itself if and only if $\mathrm{Poi}(w_i^-\zeta)\geq c_i$. Summing over all banks in the network we therefore derive the  identity
\[ \E[W^+\psi_C(W^-\zeta)] = \zeta, \]
where
\[ \psi_r(x) := \P\left(\mathrm{Poi}(x)\geq r\right) = \begin{cases}\sum_{j\geq r}e^{-x}x^j/j!,&0\leq r<\infty,\\0,&r=\infty.\end{cases} \]
Moreover, summing up the systemic importance values, the final damage caused by defaulted banks should be given by $\E[S\psi_C(W^-\zeta)]$.
Motivated by these heuristics consider now the function
\[ f(z;(W^-,W^+,C)):=\E\left[W^+\psi_C(W^-z)\right]-z. \]
By the dominated convergence theorem, $f(z;(W^-,W^+,C))$ is continuous and has a smallest root $\hat{z}\in[0,\E[W^+]]$. Furthermore, let
\[ d(z;(W^-,W^+,C)) := \E[W^-W^+\phi_C(W^-z)]-1, \]
the weak derivative of $f$ (see Lemma \ref{lem:f:continuous}), where
\[
	\phi_r(x) := \P\left(\mathrm{Poi}(x)=r-1\right)\1_{r\geq1}.
\]
A sequence of events $(E_n)_{n\in\N}$ shall hold with high probability (w.\,h.\,p.) if $\P(E_n)\to1$, as $n\to\infty$. The following theorem for the threshold model will be used in the proofs of our main results in this article. 
\begin{theorem}[adapted from {\cite[Theorem 7.2]{Detering2015a}}]\label{thm:threshold:model}
Consider a sequence of financial systems satisfying Assumption \ref{ass:regularity} and let $\hat{z}$ be the smallest positive root of $f(z ;(W^-,W^+,C))$. Then the following holds:
\begin{enumerate}
\item For all $\epsilon>0$, with high probability
$ n^{-1}\mathcal{S}_n \geq \E\left[S\psi_C(W^-\hat{z})\right] - \epsilon.$
\item If $d(z;(W^-,W^+,C))$ is bounded from above by some constant $\kappa<0$ on a neighborhood of $\hat{z}$, then
\[ n^{-1}\mathcal{S}_n \xrightarrow{p} \E\left[S\psi_C(W^-\hat{z})\right],\quad\text{as }n\to\infty. \]
\end{enumerate}
\end{theorem}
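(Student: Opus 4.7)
The plan is to adapt the cascade exploration argument behind \cite[Theorem 7.2]{Detering2015a}, which in its original form tracks the fraction $n^{-1}|\mathcal{D}_n|$, by treating the systemic importance $s_i$ as an additional ``mark'' on vertex $i$ that is summed along the cascade. Most of the heavy lifting (concentration of Poisson-type edge counts, identification of the limiting fixed-point equation $f(z)=0$, control of the cascade dynamics) can be imported; the new content is the accounting of a second summand.

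For part 1, I would analyze the truncated cascade round by round. For each fixed $k$, the normalized out-weight of the first $k$ generations, $\zeta_k^{(n)}:=n^{-1}\sum_{i\in \mathcal{D}_k} w_i^+$, concentrates around a deterministic $z_k$ defined iteratively by $z_k=\E[W^+\psi_C(W^- z_{k-1})]$ with $z_0=0$, by exactly the concentration argument of \cite{Detering2015a}. The same concentration applied with weights $s_i$ in place of $w_i^+$ gives $n^{-1}\sum_{i\in\mathcal{D}_k}s_i \xrightarrow{p} \E[S\psi_C(W^- z_{k-1})]$, where convergence of $F_n$ together with integrability of $S$ and dominated convergence justify passing the empirical sum to the limiting expectation (an extra truncation $S\wedge M$ followed by $M\to\infty$ handles the lack of a uniform bound on $S$). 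Since $\psi_C(W^- \cdot)$ is monotone and $\hat z$ is the \emph{smallest} root of $f$, one has $z_k\uparrow\hat z$; combining continuity of $z\mapsto\E[S\psi_C(W^- z)]$ with $\mathcal{S}_n\geq\sum_{i\in\mathcal{D}_k}s_i$ and choosing $k=k(\epsilon)$ large yields the claimed lower bound w.h.p.

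For part 2, the task is to rule out substantial overshoot: I would show that under the transversality hypothesis $d(z)\leq\kappa<0$ near $\hat z$, with high probability the revealed out-weight $\zeta_\infty^{(n)}$ of the fully-explored cascade lies within any prescribed neighborhood of $\hat z$. This is the standard argument that a strictly negative derivative prevents the cascade from crossing past the zero of $f$ by a non-vanishing amount, as used in \cite{Detering2015a}. Once $\zeta_\infty^{(n)}$ is pinned near $\hat z$, the mark-counting argument of part 1, now applied to the full cascade rather than a finite truncation, gives a matching upper bound $n^{-1}\mathcal{S}_n\leq \E[S\psi_C(W^-\hat z)]+\epsilon$, so that combined with part 1 one obtains convergence in probability.

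The main obstacle I anticipate is handling the unboundedness of $S$ simultaneously with the ``$k\to\infty$'' and ``$n\to\infty$'' limits in part 1: one must check that the truncation error $\E[(S-S\wedge M)\psi_C(W^- z)]$ is small uniformly in $z\in[0,\E[W^+]]$ and uniformly in $n$, which follows from Assumption \ref{ass:regularity} (integrability of $S$ and convergence $\E[S_n]\to\E[S]$, yielding uniform integrability), and that the resulting ``w.h.p.\ for each $k$'' statements can be combined into a single w.h.p.\ statement via a standard diagonal argument. A secondary technical point in part 2 is verifying that the strict negativity $d(z)\leq\kappa<0$ on a neighborhood (rather than only at $\hat z$) really does localize $\zeta_\infty^{(n)}$ around $\hat z$; here I would rely directly on the corresponding bound already established in \cite{Detering2015a}.
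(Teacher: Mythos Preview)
The paper does not prove this theorem. It is stated as ``adapted from \cite[Theorem 7.2]{Detering2015a}'' and used as a black box; Section~\ref{sec:proofs} contains no argument for it. So there is no in-paper proof to compare your proposal against.

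That said, your sketch is a reasonable outline of how the argument in the cited reference presumably runs: a generation-by-generation exploration, concentration of the normalized out-weight $\zeta_k^{(n)}$ around the fixed-point iterates $z_k\uparrow\hat z$, and a parallel concentration for the $s_i$-weighted sum via the extra mark $S$. Your identification of the two delicate points---uniform integrability of $S$ (handled through $\E[S_n]\to\E[S]$ in Assumption~\ref{ass:regularity}) and the use of $d(z)\leq\kappa<0$ near $\hat z$ to prevent overshoot of $\zeta_\infty^{(n)}$---is accurate and matches the role these hypotheses play in the statement. If you want to turn the sketch into a self-contained proof rather than a pointer to \cite{Detering2015a}, the main missing ingredient is the actual concentration inequality for the edge/neighbor counts in the inhomogeneous directed model (this is where the specifics of the edge probabilities $p_{i,j}=\min\{1,w_i^+w_j^-/n\}$ enter), which you are currently importing wholesale.
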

\noindent The theorem thus allows us to either bound from below or explicitly compute the final damage~$\mathcal{S}_n$ for $n\to\infty$. In case that $f$ is differentiable, the assumption on $d$ 
in the second part of the theorem ensures that $f'(\hat{z};(W^-,W^+,C))<0$ and that the first zero of $f$ determines the end of the process.

\subsection{The Exposure Model}\label{ssec:exposure:model}

Based on the threshold model from the previous subsection, we will now construct our weighted, directed random graph model for financial systems. In particular, we assign weights to the edges, and each institution is equipped with a capital and a systemic importance.

We model the occurrence of edges by the random matrix $X=X(n)=(X_{i,j})_{i,j\in[n]}$ from Subsection \ref{ssec:special:case:threshold:model} and we assign to each pair $(i,j)\in[n]^2$ with $i\neq j$ a random variable $E_{i,j}>0$ representing $j$'s \emph{possible} exposure to $i$ such that $E=E(n)=(E_{i,j})_{i,j\in[n]}$ is independent of $X$. The random exposure of $j$ to $i$, where $j \neq i$, is then given by $e_{i,j}=X_{i,j}E_{i,j}$, and we set $e_{i,i} = 0$. We make one more assumption in order to facilitate our computations. We assume that for each bank $j$ the list of possible exposures $(E_{i,j})_{i \in [n]\setminus\{j\}}$ is an exchangeable sequence of random variables. That is, for each $j\in[n]$ and each permutation $\pi$ of  $[n]\backslash\{j\}$ 
\[ \left(E_{1,j},\ldots,E_{j-1,j},E_{j+1,j},\ldots,E_{n,j}\right) \stackrel{d}{=} \left(E_{\pi(1),j},\ldots,E_{\pi(j-1),j},E_{\pi(j+1),j},\ldots,E_{\pi(n),j}\right). \]
This is the same as taking an arbitrary sequence of random variables $(\tilde{E}_{i,j})_{i \in [n]\setminus\{j\}}$ and transforming them into a list of exposures $(E_{i,j})_{i \in [n]\setminus\{j\}}$ by setting $E_{i,j}=\tilde{E}_{\pi(i),j}$ for some independent and uniformly random permutation $\pi$. We remark that the requirement of exchangeable exposures is a typical assumption made in the literature, see for example \cite{Cont2016}. Note, however, that in this setting the distribution of the exposure size $E_{i,j}$ only depends on the creditor bank $j$ and not on the debtor bank $i$, which is a criticizable assumption for example in strongly pronounced core/periphery networks where also the exposures might exhibit stronger heterogeneity. The relaxation of this assumption is technically more demanding and beyond the scope of this article, 
but we refer to \cite{Detering2018} where this problem is tackled.

The final ingredient in our model is to assign to each bank $i\in[n]$ a possibly stochastic capital value $c_i\in\R_{+,0,\infty}$ and a deterministic systemic importance value $s_i\in\R_+$. Using \eqref{eqn:default:cascade} we can then again determine the random final default cluster $\mathcal{D}_n$ and its systemic importance $\mathcal{S}_n$. It is the aim of the following subsection to derive results about convergence and deterministic bounds similar as in Subsection \ref{ssec:special:case:threshold:model} for the threshold model. In Table \ref{tab:parameters} we summarize all important parameters in the exposure model and compare them to the observed quantities in a real financial network.
\begin{table}
\begin{center}
\begin{tabular}{|c|c|}
\hline\textbf{Observed Network} & \textbf{Exposure Model}\\\hline
capital $c_i\in\R_{+,0}$, & capital $c_i\in L^0(\R_{+,0,\infty})$,\\
systemic importance $s_i\in\R_+$, & systemic importance $s_i\in\R_+$,\\
in-degree $d_i^-\in\N_0$, & in-weight $w_i^-\in\R_+$,\\
out-degree $d_i^+\in\N_0$, & out-weight $w_i^+\in\R_+$,\\
&  (edge probability $p_{i,j}=\min\{1,w_i^+w_j^-/n\}\1_{i\neq j}$)\\
exposure sequence $(e_{i,j})_{j\in[n]\backslash\{i\}}\subset\R_{+,0}$ & exchangeable sequence of possible edge weights\\
&$(E_{i,j})_{j\in[n]\backslash\{i\}}\subset L^0(\R_+)$\\\hline
\end{tabular}
\caption{Comparison of observed quantities in a financial network and the model parameters in the exposure model}\label{tab:parameters}
\end{center}
\end{table}

\subsection{Asymptotic Results for Default Contagion in the Exposure Model}\label{ssec:asymptotic:results:exposure:model}
The setting in the exposure model is more complex than in the threshold model since we cannot decide if a bank defaults only based on the number of its neighbors that default: \eqref{eqn:default:cascade} asserts that this also depends on the actual exposures between the banks. However, one crucial assumption that we made is that these exposures are exchangeable, so intuitively it should make no difference which neighbors of a given bank default, but just their actual number. 

To formalize this intuitive argument, define for each bank $i\in [n]$ the random threshold value
\begin{equation}\label{ex1:perc:thres}
\tau_i(n):= 
\inf \Bigg\{ s \in \{0\}\cup[n-1] \,:\, \sum_{\ell\leq s} E_{\rho_i(\ell),i} \geq c_i \Bigg\},
\quad \text{where}\quad
\rho_i(\ell):=\ell+\1_{\ell\geq i},
\end{equation}
with the usual convention $\inf\emptyset:=\infty$. That is, $\tau_i$ is allowed to take the value $\infty$ if capital $c_i$ is larger than the sum of all possible exposures. In this case,  $i$ can never default. The use of the enumeration $\rho_i$ becomes necessary in (\ref{ex1:perc:thres}) since we want to spare $i$ in this ordering. The value $\tau_i$ then determines the \textit{hypothetical} default threshold of $i$, assuming that $i$'s neighbors default in the order of their natural index given by $\rho_i$ and that all edges $(j,i)$, $1\leq j\leq \rho_i(\tau_i)$, $i\neq j$, are present in the graph. We denote the hypothetical threshold sequence by $
\pmb{\tau}(n)=(\tau_1(n),\ldots,\tau_n(n))$. 
The thresholds are only hypothetical, since not all of the first $\rho_i(\tau_i)$ exposures/edges must be present in the graph and the vertices do usually not default in their natural order. However, we know that the exposures are exchangeable, so all these simplifications should have no effect; it will turn out in the proof of Theorem~\ref{thm:asymp:1} that indeed the value $\tau_i$ captures the actual dynamics, and the qualitative characteristics of the contagion process in the exposure model are the same as in the threshold model with capital sequence $\pmb{\tau}(n)$.

As an equivalent of Assumption \ref{ass:regularity} for the threshold model we impose the following regularity conditions.
\begin{assumption}\label{vertex:assump}
For each $n\in\N$, denote the random joint empirical distribution function of $\mathbf{w}^-(n)$, $\mathbf{w}^+(n)$, $\mathbf{s}(n)$ and $\pmb{\tau}(n)$ by
\[ G_n(x,y,v,l)=n^{-1}\sum_{i\in[n]}\1\{w_i^-(n)\leq x,w_i^+(n)\leq y,s_i(n)\leq v,\tau_i(n)\leq l\},\hspace{8pt} (x,y,v,l)\in\R_+^3\times\N_{0,\infty}. \]
Then we assume that:
\begin{enumerate}
\item \textbf{Almost sure convergence in distribution:} There exists $x_0>0$ and a distribution $G$ on $\R_+^3\times\N_{0,\infty}$ such that $G(x,y,v,l)=0$ for all $x,y\leq x_0$, and such that at all points $(x,y,v,l)$ for which $G_l(x,y,v):=G(x,y,v,l)$ is continuous in $(x,y,v)$, it holds almost surely $\lim_{n\to\infty}G_n(x,y,v,l)=G(x,y,v,l)$. Denote by $(W^-,W^+,S,T)$ a random vector distributed according to $G$.
\item \textbf{Convergence of average weights and systemic importance:} 
$W^-$, $W^+$ and $S$ are integrable and $\int_{\R_+^3\times\N_{0,\infty}} x\,\dd G_n(x,y,v,l) \to \E[W^-]$, $\int_{\R_+^3\times\N_{0,\infty}} y\,\dd G_n(x,y,v,l) \to \E[W^+]$ as well as $\int_{\R_+^3\times\N_{0,\infty}} v\,\dd G_n(x,y,v,l) \to \E[S]$ as $n\to\infty$.
\end{enumerate}
\end{assumption}
To ensure that Assumption \ref{vertex:assump}~holds, the following regularity property must be satisfied: the joint empirical distribution of the in- and out-weights, the systemic importance and the hypothetical threshold values must eventually stabilize. See Subsection \ref{ssec:examples} for general examples of systems satisfying Assumption \ref{vertex:assump}.

For the remainder of this subsection, we consider a sequence of financial systems denoted as\linebreak $(\mathbf{w}^-(n),\mathbf{w}^+(n),\mathbf{s}(n),E(n),\mathbf{c}(n))$ and satisfying Assumption \ref{vertex:assump}. In particular, we denote by $(W^-,W^+,S,T)$ a random vector distributed according to the limiting distribution $G$ from Assumption \ref{vertex:assump}. We assume that the financial systems have experienced an external shock such that a positive fraction of banks have capital zero. In the notation from above this means $\P(T=0)>0$.  
Hence we are in a situation in which a default cascade is about to happen, and our aim is to study the size of the final default cluster ${\cal D}_n$ and its 
systemic importance ${\cal S}_n$. Our main result shows, in complete analogy to the threshold model, that as the network size gets large, the quantity $n^{-1}\mathcal{S}_n$ converges to a deterministic value which we can pin down. To this end, we denote
\[ f(z;(W^-,W^+,T)):=\E\left[W^+\psi_T(W^-z)\right]-z, \]
where as in Subsection \ref{ssec:special:case:threshold:model}
\[ \psi_r(x) := \P\left(\mathrm{Poi}(x)\geq r\right) = \begin{cases}\sum_{j\geq r}e^{-x}x^j/j!,&0\leq r<\infty,\\0,&r=\infty,\end{cases} \]
and
\[ d(z;(W^-,W^+,T)) := \E[W^-W^+\phi_T(W^-z)]-1, \]
where again as in Subsection \ref{ssec:special:case:threshold:model}
\[ \phi_r(x) := \P\left(\mathrm{Poi}(x)=r-1\right)\1_{r\geq1}. \]
Whenever $(W^-,W^+,T)$ is clear from the context, we abbreviate $f(z;(W^-,W^+,T))$ by $f(z)$ and $d(z;(W^-,W^+,T))$ by $d(z)$. The following lemma summarizes some properties of $f$ and $d$. See Subsection \ref{ssec:proofs:2} for the proof.
\begin{lemma}\label{lem:f:continuous}
The function $f(z)$ is continuous on $[0,\infty)$ and admits the following representation:
\begin{equation}\label{eqn:integral:representation}
f(z) = \E[W^+\1_{\{T=0\}}] + \int_0^z d(\xi)\dd\xi
\end{equation}
If $\P(T=0)>0$, then $f(z)$ has a strictly positive root $\hat{z}$.
\end{lemma}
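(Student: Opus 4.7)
The plan is to verify the three claims in turn, each reducing to a standard interchange of limit/integral with expectation. First, for continuity on $[0,\infty)$: for every fixed $r\in\N_{0,\infty}$ the map $x\mapsto\psi_r(x)$ is continuous (clear from the defining power series when $r<\infty$, and trivial for $r=\infty$), and since $\psi_T(W^-z)\in[0,1]$ the integrand $W^+\psi_T(W^-z)$ is dominated by the integrable $W^+$. Dominated convergence then gives continuity of $z\mapsto\E[W^+\psi_T(W^-z)]$, and subtracting $z$ preserves this.

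For the integral representation, I would first check pathwise that $\psi_r'(x)=\phi_r(x)$ for every $r\in\N_{0,\infty}$: the cases $r=0$ and $r=\infty$ are handled by the convention $\phi_r\equiv 0$, and for $r\geq 1$ differentiation of the defining series produces a telescoping sum collapsing to $e^{-x}x^{r-1}/(r-1)!$. Combining with the chain rule and the fundamental theorem of calculus yields, almost surely,
\begin{equation*}
W^+\psi_T(W^-z)\;-\;W^+\psi_T(0)\;=\;\int_0^z W^-W^+\phi_T(W^-\xi)\,\dd\xi.
\end{equation*}
Since $\psi_T(0)=\1_{\{T=0\}}$, taking expectations on both sides and applying Tonelli on the right (the integrand is non-negative) gives
\begin{equation*}
\E[W^+\psi_T(W^-z)] - \E[W^+\1_{\{T=0\}}] = \int_0^z \E[W^-W^+\phi_T(W^-\xi)]\,\dd\xi,
\end{equation*}
and rewriting $z=\int_0^z 1\,\dd\xi$ then produces $f(z)=\E[W^+\1_{\{T=0\}}]+\int_0^z d(\xi)\,\dd\xi$. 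The only potential delicacy here is that we impose no second moment on $W^-$ or $W^+$, so $\E[W^-W^+\phi_T(W^-\xi)]$ is not obviously finite for each $\xi$; but Tonelli sidesteps this because the pathwise antiderivative reduces to $W^+(\psi_T(W^-z)-\psi_T(0))\leq W^+$, which is integrable, so all quantities involved are finite.

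Finally, for the strictly positive root, I observe that $f(0)=\E[W^+\1_{\{T=0\}}]>0$ under the hypothesis $\P(T=0)>0$, since $W^+$ is strictly positive (inherent in the definition of the vertex-weights under Assumption \ref{vertex:assump}). On the other hand, the bound $\psi_T\leq 1$ gives $f(z)\leq\E[W^+]-z<0$ for any $z>\E[W^+]$. Continuity from the first step combined with the intermediate value theorem then supplies a root $\hat z\in(0,\E[W^+]]$. The main obstacle in the whole argument is really just the Tonelli/integrability justification in step two; everything else is essentially mechanical.
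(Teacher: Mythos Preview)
Your proof is correct and follows essentially the same route as the paper: dominated convergence (with $W^+$ as dominant) for continuity, a pathwise fundamental-theorem-of-calculus identity combined with Fubini/Tonelli for the integral representation, and the intermediate value theorem for the root. You are slightly more explicit than the paper about verifying $\psi_r'=\phi_r$ and about why Tonelli suffices despite the lack of a second-moment assumption, but the arguments are otherwise identical.
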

In particular, $d(z)$ is the weak derivative of $f(z)$ and if $d(z)$ is continuous on some interval $I\subset[0,\infty)$, then $f(z)$ is continuously differentiable on $I$ with derivative $d(z)$. With this fact at hand we derive the following result about $\mathcal{S}_n$. It resembles Theorem \ref{thm:threshold:model} for the threshold model and indeed in the proof (see Subsection \ref{ssec:proofs:2}) we make use of this result. However, due to the hypothetical nature of the threshold sequence $\pmb{\tau}(n)$ this application is not straight-forward and requires considerable effort.
\begin{theorem}\label{thm:asymp:1}
Under Assumption \ref{vertex:assump}, suppose $\P(T=0)>0$ and let $\hat{z}$ be the smallest positive root of $f(z
)$. If the weak derivative $d(z)$ of $f(z)$ is bounded from above by some constant $\kappa<0$ on a neighborhood of $\hat{z}$, then
\[ n^{-1}\mathcal{S}_n \xrightarrow{p} \E [S \psi_T(W^-\hat{z}) ], \quad \text{ as } n\rightarrow \infty. \]
\end{theorem}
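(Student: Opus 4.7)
The strategy is to reduce to Theorem \ref{thm:threshold:model} by coupling the exposure model to a threshold model with the random capital sequence $\pmb{\tau}(n)$. The pivotal observation, already signalled in the paragraph preceding the theorem, is that the exchangeability of the sequence $(E_{j,i})_{j \neq i}$ allows us, for each vertex $i$, to relabel the exposures so that the $\ell$-th in-neighbor of $i$ to default (in the actual cascade dynamics) contributes exactly $E_{\rho_i(\ell),i}$ for $\ell \geq 1$. Under this relabeling, which preserves the joint law, bank $i$ defaults precisely when its cumulative loss $\sum_{\ell=1}^k E_{\rho_i(\ell),i}$ first exceeds $c_i$, which by the definition of $\tau_i$ in \eqref{ex1:perc:thres} is the step at which $k = \tau_i$. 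Thus the contagion in the exposure model coincides, in distribution, with contagion in the threshold model on the same weight and systemic importance sequences but with capital sequence $\pmb{\tau}(n)$; in particular $\mathcal{S}_n \stackrel{d}{=} \mathcal{S}_n^{\tau}$, where $\mathcal{S}_n^{\tau}$ denotes the total systemic importance in that threshold model.

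It then remains to apply Theorem \ref{thm:threshold:model} to the coupled threshold model. Part 1 of Assumption \ref{vertex:assump} yields almost sure convergence in distribution of $(W_n^-, W_n^+, S_n, T_n)$ to $(W^-, W^+, S, T)$, so on a probability-1 event the (random) threshold model with capital sequence $\pmb{\tau}(n)$ satisfies Assumption \ref{ass:regularity}; the required convergence of averages for $W^-$, $W^+$, $S$ is directly inherited from part 2 of Assumption \ref{vertex:assump}, and no convergence of $\E[T_n]$ is needed for Theorem \ref{thm:threshold:model}. The functions $f$ and $d$ associated with $(W^-, W^+, T)$ are precisely those appearing in Theorem \ref{thm:threshold:model} for the coupled threshold model, so under the hypothesis $d(z) \leq \kappa < 0$ on a neighborhood of $\hat{z}$, the second part of Theorem \ref{thm:threshold:model}, applied pathwise on the probability-1 event, gives
\[ n^{-1} \mathcal{S}_n^{\tau} \xrightarrow{p} \E\bigl[S \psi_T(W^- \hat{z})\bigr], \quad n \to \infty. \]
Since convergence in probability to a constant is preserved under equality in distribution, $\mathcal{S}_n \stackrel{d}{=} \mathcal{S}_n^{\tau}$ transfers this limit back to $n^{-1}\mathcal{S}_n$ and completes the proof.

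The main obstacle is making the relabeling in the first paragraph rigorous. The order in which $i$'s in-neighbors default is itself a random variable determined by the entire cascade dynamics, which in turn depend on precisely the exposures we wish to relabel, so one cannot simply pre-permute the $E_{j,i}$'s before running the cascade. A clean way to proceed is to reveal exposures adaptively: explore the cascade step by step, and each time a new in-neighbor of $i$ defaults reveal the next exposure $E_{\rho_i(\ell),i}$ in the natural ordering; by the exchangeability assumption, this is distributionally equivalent to revealing a uniformly chosen exposure from the still-unrevealed ones attached to $i$, which is what the actual cascade demands. A careful measurability and independence bookkeeping (and consistency across the simultaneous explorations at all vertices) is required to turn this heuristic into an honest coupling of the two probability spaces on which $\mathcal{S}_n$ and $\mathcal{S}_n^{\tau}$ are defined.
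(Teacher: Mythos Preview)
Your proposal is correct and follows essentially the same route as the paper's proof: reduce to the threshold model via the exchangeability of the exposure lists, then invoke Theorem~\ref{thm:threshold:model} after conditioning on the realized thresholds $\pmb{\tau}(n)$ using the almost sure convergence in Assumption~\ref{vertex:assump}. Your ``adaptive revelation'' heuristic in the last paragraph is precisely what the paper formalizes by describing the cascade as a sequential exposure process (pick an unexposed defaulted vertex, expose its outgoing edges, update), recording for each vertex $j$ the resulting order $\pi_j$ in which defaulted in-neighbors arrive, and then constructing an auxiliary graph in which the edge from $i$ to $j$ is assigned weight $E_{\rho_j(\pi_j^{-1}(i)),j}$; exchangeability gives equality in law with the original graph, and in the auxiliary graph $\tau_i$ is literally the number of defaulted in-neighbors at which $i$ fails, so the reduction to the threshold model is exact.
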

\noindent Two remarks are in order. First, if $f(z)$ is continuously differentiable on a neighborhood of $\hat{z}$ with $f'(\hat{z})<0$ (i.\,e.~$\hat{z}$ is stable), then Theorem \ref{thm:asymp:1} is applicable. This is a standard assumption in current literature. In \cite{Cont2016} for instance, the authors assume degree sequences of finite second moment. In this case, it is straightforward to show that $f(z)$ is continuously differentiable. Secondly, without the assumption of stableness, $n^{-1}\mathcal{S}_n$ does not converge to a deterministic limit in general (see \cite{Janson2012} for a comparable result in a much simpler setting). However, in the following theorem we are still able to state asymptotic bounds rather than an exact limiting value. We believe that the derived bounds are sharp in the sense that they cannot be improved without further assumptions. Proving this, however, is beyond the scope of this article.

\begin{theorem}\label{thm:asymp:2}
Under Assumption \ref{vertex:assump}, suppose $\P(T=0)>0$ and let $\hat{z}$ be the smallest positive root of $f(z)$. Further, let $z^*$ be the smallest value of $z>0$ at which $f(z)$ crosses zero,
\[ z^* := \inf\left\{z>0\,:\,f(z)<0\right\}. \]
Then the following holds:
\begin{enumerate}
\item\label{thm:asymp:2:1} For all $\epsilon>0$, with high probability $n^{-1}\mathcal{S}_n \geq  \E [S \psi_T(W^-\hat{z}) ]-\epsilon$.
\item\label{thm:asymp:2:2} If further $d(z)$ is continuous on some neighborhood of $z^*$, then for all $\epsilon>0$ with high probability $n^{-1}\mathcal{S}_n \leq  \E [S \psi_T(W^-z^*) ]+\epsilon$.
In particular, if $\hat{z}=z^*$, then
\[ n^{-1}\mathcal{S}_n \xrightarrow{p} \E [S \psi_T(W^-\hat{z}) ], \quad \text{ as } n\rightarrow \infty. \]
\end{enumerate}
\end{theorem}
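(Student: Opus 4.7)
The plan is to reduce both parts to the threshold model of Subsection \ref{ssec:special:case:threshold:model} via the hypothetical threshold sequence $\pmb{\tau}(n)$, reusing the coupling that underpins the proof of Theorem \ref{thm:asymp:1}. For part \ref{thm:asymp:2:1}, Assumption \ref{vertex:assump} guarantees that the threshold model with capital sequence $\pmb{\tau}(n)$ satisfies Assumption \ref{ass:regularity} with limiting law $(W^-,W^+,S,T)$, and Theorem \ref{thm:threshold:model}(1) immediately delivers the asymptotic lower bound at the smallest positive root $\hat z$ of $f$. Transferring this to the exposure model is done exactly as in Theorem \ref{thm:asymp:1}: exchangeability of $(E_{i,j})_j$ allows the realized default cluster of the exposure cascade to be identified (in distribution) with the one produced by the threshold cascade driven by $\pmb{\tau}(n)$.

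For the upper bound in part \ref{thm:asymp:2:2}, the key device is a perturbation-and-coupling argument. For each small $\eta > 0$, I will augment the initial default set by $\lceil \eta n \rceil$ banks chosen uniformly at random, i.e.\ reset their capital to zero. This replaces the limiting threshold $T$ by $T_\eta$, which is $0$ with probability $\eta$ and distributed as $T$ otherwise. Using $\psi_0\equiv 1$ and $\phi_0\equiv 0$, a direct computation yields
\[ f_\eta(z) = \eta \E[W^+] + (1-\eta) f(z) - \eta z, \qquad d_\eta(z) = (1-\eta) d(z) - \eta. \]
Since $f \geq 0$ on $[0,z^*]$, $f(z^*)=0$ and $\E[W^+]>z^*$, one checks that $f_\eta > 0$ on $[0,z^*]$ while $f_\eta(z_0)<0$ at any point $z_0 > z^*$ with $f(z_0)<0$ once $\eta$ is small enough; hence the smallest positive root $\hat z_\eta$ of $f_\eta$ lies in $(z^*,z_0)$ and $\hat z_\eta \to z^*$ as $\eta \to 0$. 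Continuity of $d$ near $z^*$, together with $d(z^*) \leq 0$ (forced by $f$ going from non-negative to negative at $z^*$), then yields $d_\eta(z) \leq -\eta/2$ on a neighborhood of $\hat z_\eta$ for all sufficiently small $\eta$. Consequently Theorem \ref{thm:asymp:1} applies to the perturbed system and gives
\[ n^{-1}\mathcal{S}_n^\eta \xrightarrow{p} \E[S\psi_{T_\eta}(W^- \hat z_\eta)] = \eta \E[S] + (1-\eta) \E[S\psi_T(W^- \hat z_\eta)], \]
which tends to $\E[S\psi_T(W^-z^*)]$ as $\eta \to 0$. The recursion (\ref{eqn:default:cascade}) is monotone in $\mathcal{D}_0$, so $\mathcal{S}_n \leq \mathcal{S}_n^\eta$ pathwise; choosing first $\eta$ small to control the limit and then $n$ large to invoke the w.h.p.\ statement yields the claimed upper bound. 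The concluding assertion, convergence when $\hat z = z^*$, then follows by sandwiching parts \ref{thm:asymp:2:1} and \ref{thm:asymp:2:2}.

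The main obstacle is the first reduction: faithfully coupling the exposure cascade with the threshold cascade driven by $\pmb{\tau}(n)$, despite the fact that the actual neighbors of a vertex do not default in the natural order used in (\ref{ex1:perc:thres}). Exchangeability of $(E_{i,j})_j$ is exactly what makes this reshuffling harmless in distribution, but turning the heuristic into a rigorous statement at the level of the final default cluster is precisely the technical effort already flagged right after Theorem \ref{thm:asymp:1}; my plan simply reuses that machinery. Beyond this, the perturbation step is essentially routine, since attaching an independent Bernoulli($\eta$) shock to each vertex manifestly preserves Assumption \ref{vertex:assump} with the modified limit $T_\eta$, and all relevant continuity properties of $f$, $d$ and $\psi_T$ are inherited from Lemma \ref{lem:f:continuous} and dominated convergence.
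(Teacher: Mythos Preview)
Your overall strategy coincides with the paper's: reduce to the threshold model via the exchangeability coupling (this is exactly the machinery from the proof of Theorem~\ref{thm:asymp:1}), and for part~\ref{thm:asymp:2:2} perturb by an independent Bernoulli shock, apply Theorem~\ref{thm:asymp:1} to the perturbed system, and dominate $\mathcal{S}_n$ by $\mathcal{S}_n^\eta$ via monotonicity of the cascade. Part~\ref{thm:asymp:2:1} and the case $d(z^*)<0$ of part~\ref{thm:asymp:2:2} go through as you describe.

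The gap is in the degenerate case $d(z^*)=0$. Your claim that continuity of $d$ near $z^*$ together with $d(z^*)\le 0$ yields $d_\eta\le -\eta/2$ on a neighborhood of $\hat z_\eta$ for all small $\eta$ is not justified. Continuity at $z^*$ with $d(z^*)=0$ gives $d(z)<\eta/(2(1-\eta))$ only on a neighborhood of $z^*$ whose radius may shrink with $\eta$, and you have no control on the rate at which $\hat z_\eta\downarrow z^*$; there is no reason the two should be comparable. From $f_\eta(\hat z_\eta)=0$ and $f_\eta>0$ on $(0,\hat z_\eta)$ you only get $d_\eta(\hat z_\eta)\le 0$, not strict negativity, so you cannot invoke Theorem~\ref{thm:asymp:1} without further argument.

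The paper resolves this by \emph{not} letting the perturbation parameter drive the location of the first root. Instead it first picks a target point $z^*+\tilde\epsilon$ at which $d$ is provably strictly negative: setting $\delta(\tilde\epsilon):=-\inf_{0<\epsilon\le\tilde\epsilon}\epsilon^{-1}f(z^*+\epsilon)>0$ and choosing $\tilde\epsilon$ so that $\delta(\epsilon)<\delta(\tilde\epsilon)$ for all $\epsilon<\tilde\epsilon$, one obtains $d(z^*+\tilde\epsilon)\le -\delta(\tilde\epsilon)<0$. It then solves explicitly for the shock size $p=p(\tilde\epsilon)$ that makes $z^*+\tilde\epsilon$ the \emph{first} positive root of $f_p$. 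This guarantees the hypothesis of Theorem~\ref{thm:asymp:1} at a point under your control, rather than at the implicitly defined $\hat z_\eta$. You should replace your uniform-in-$\eta$ claim by this targeted construction.
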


\noindent See Subsection \ref{ssec:proofs:2} for the proof. Note that under the assumptions of Theorem \ref{thm:asymp:2} it holds that $f(0)>0$. Together with continuity of $f$ from Lemma \ref{lem:f:continuous} this implies that $z^*\geq\hat{z}$. In general, however, it is possible that $z^*>\hat{z}$, for example if $f$ has a local minimum at $\hat{z}$.

\subsection{Examples for Financial Systems Satisfying Assumption \ref{vertex:assump}}\label{ssec:examples}
In this subsection we demonstrate how our model can be applied to describe many different scenarios. In Example \ref{example:1}, we describe a financial system where exposures are i.\,i.\,d.~
in certain buckets.

\begin{example}\label{example:1}
Let $({\bf w^-}(n), {\bf w^+}(n), {\bf s}(n))$ be a triple consisting of in-weight, out-weight and systemic importance sequences such that their joint empirical distribution and the first moments of the marginals converge. Choose some partition of $[0,\infty)^3$ into countably many Borel sets $D_k$, $k\in \N$. Further, denote $\mathcal{W}_k:=\{i\in[n]\,:\, (w_i^-,w_i^+,s_i)\in D_k\}$. Let the distributions of $E_{j,i}$, $j\in[n]$, and~$c_i$ be equal for  vertices $i \in \mathcal{W}_k$ and assume them all to be independent. From this assumption it follows that for $n> \ell\in\N_0$ the 
distribution of $Y_i^\ell:=\1_{\{\tau_i\leq \ell\}}$ only depends on the set  $\mathcal{W}_k\ni i$. Denote by $q_k^\ell:=\P(Y_i^\ell=1)=\mathbb{E} [Y_i^\ell]$ the probability that $i\in\mathcal{W}_k$ has threshold less or equal $\ell$.

We show that Assumption~\ref{vertex:assump} is satisfied. For this let $(x,y,v)\in \mathbb{R}_+^3$ and define
\[ \mathcal{W}^{(x,y,v)}:=\{i \in [n] \,:\, w^-_i \leq x, w^+_i \leq y, s_i\leq v \}. \]
Our assumptions guarantee that  $n^{-1}|\mathcal{W}^{(x,y,v)}\cap\mathcal{W}_k|$ stabilizes as $n \to \infty$ for all $k$. 
Since $Y^\ell_i$ has exponential moments, by the strong law of large numbers it follows that a.s.
\begin{align*}
 \lim_{n\rightarrow \infty }n^{-1}\sum_{i \in \mathcal{W}^{(x,y,v)}\cap\mathcal{W}_k} Y_i^\ell &= 
\lim_{n\rightarrow \infty }n^{-1} \abs{\mathcal{W}^{(x,y,v)}\cap\mathcal{W}_k}q_k^\ell =: G (x,y,v,\ell,k) 
\end{align*}
exists for every $k\in \mathbb{N}$. Summing over all the sets $\mathcal{W}_k$ it follows that the network described by $({\bf w^-}(n), {\bf w^+}(n), {\bf s}(n))$ and $\pmb{\tau}(n)$ fulfills Assumption \ref{vertex:assump} with limiting distribution $G (x,y,v,\ell):= \sum_{k\in \mathbb{N}} G (x,y,v,\ell,k) $. 
\end{example}

\noindent As 
observed in a similar setting in \cite{Cont2016} already, the independence of the exposure random variables 
can be weakened.
\begin{example}
Similar as above, assume that vertices are partitioned into $K$ classes $\mathcal{W}_1,\dots,\mathcal{W}_K$  with vertices with the same marginal distributions of the capitals and exposures. The sets may depend on the network size $n$ but we shall assume that $\lim_{n\rightarrow \infty} n^{-1} \abs{\mathcal{W}_{k}} =:\lambda (k)$, $k\in[K]$, i.\,e.~the fraction of vertices of a given class stabilizes. 
For each   $k \in [K]$ we are given generating sequences $\{c_\ell^k\}_{\ell\in\N}$ and $\{E_\ell^k\}_{\ell\in\N}$ of random variables in $\R_+$ resp.~$\R_+\backslash\{0\}$. Further, we assume that $\{c_\ell^k\}_{\ell\in\N}$ and $\{E_\ell^k\}_{\ell\in\N}$ are infinite exchangeable systems and independent of each other
(see for example \cite{Aldous1985} for the definition of (infinite) exchangeability). For each network size $n$, assign   to every vertex $i\in\mathcal{W}_k$ a capital from $\{c_\ell^k\}_{\ell\in\N}$ and $n-1$ exposures from $\{E_\ell^k\}_{\ell\in\N}$ according to any deterministic rule such that no capital or exposure is used more than once. 
Define then the threshold value as in (\ref{ex1:perc:thres}) and for a fixed $m\in\N$ the random variable $Y^m_{k,i}:=\1_{\{\tau_i = m\} }$ that determines whether vertex $i$ has threshold value $m$. Observe that for $n\geq m+1 $ every vertex $i\in [n]$ has more than $m$ exposures and the distribution of $Y^m_{k,i}$ is thus independent of $n$.
Let $\beta_k (1),\dots , \beta_k (\abs{\mathcal{W}_k} )$ the indices of the vertices in $\mathcal{W}_k$. By construction then  
\[ \mathcal{L} ( Y^m_{k,\beta_k (1)}, \dots ,  Y^m_{k,\beta_k (\abs{\mathcal{W}_k} } ) = \mathcal{L} ( Y^m_{k,\sigma_k (\beta_k (1))}, \dots ,  Y^m_{k,\sigma_k ( \beta_k (\abs{\mathcal{W}_k} ) }) \]
for all permutations $\sigma_k$, that is, for each $k\in [K]$, the random variables $\{Y^m_{k,i}\}$ build an exchangeable system. 
Since for fixed $n$ the sequence $\{Y^m_{k,i} \}_{i\in \mathcal{W}_k}$ is just the restriction to a finite subset of variables of an infinite exchangeable system for $\abs{\mathcal{W}_k} \rightarrow \infty$ it converges in law to an infinite exchangeable system.
This implies that the system of random variables $(Y^m_{k,i})_{k\in [K],i\in \mathcal{W}_k}$ forms a multi-exchangeable system (see \cite{Graham2008} for a definition).
Define the empirical measure by
\[ \Lambda^m_k := \frac{1}{\abs{\mathcal{W}_k}} \sum_{i=1}^{\abs{\mathcal{W}_k}} \delta_{Y^m_{k,\beta_k (i)}} \]
for each $k \in [K]$. By \cite[Thm.~2]{Graham2008} convergence in distribution of $\{Y^m_{k,i} \}_{i\in \mathcal{W}_k}$ implies convergence in distribution of the empirical measure sequence $(\Lambda_k)_{k\in [K]}$, without any assumptions on the dependency structure across classes. Since the above considerations apply for all $m\in \mathbb{N}$, convergence in distribution of the empirical measure sequence 
\[ \Lambda_k := n^{-1} \sum_{i \in [n]} \delta_{\tau_i} \]
follows for all $k \in [K]$. By the Skorohod Coupling Theorem \cite[Thm. 4.30]{Kallenberg2001}, there exists a probability space with random elements $\{ \tilde{\Lambda}_k \}_{k \in [K]}$ distributed as $\{ \Lambda_k \}_{k \in [K]}$ such that $\{ \tilde{\Lambda}_k \}_{k \in [K]}$ converges almost surely as required.
\end{example}

\section{Resilient Networks and Systemic Capital Requirements}\label{sec:resilience}

In the previous section we quantified the default propagation in financial networks after an initial external shock. The aim of the present section is to develop general and easy-to-use criteria that allow us to determine how systemically risky a network is \emph{prior} to a shock event. More specifically, we consider a sequence of financial systems $(\mathbf{w}^-(n),\mathbf{w}^+(n),\mathbf{s}(n),E(n),\mathbf{c}(n))$ satisfying Assumption \ref{vertex:assump} with $\P(T>0)=1$, that is, initially there are (asymptotically) no defaults. Given this setup, we then apply some small -- possibly random -- shock to the capitals only; we call this an \emph{ex post} shock. In this context, a \textit{resilient}, systemically unrisky, network should only experience minor damage
, whereas in \textit{non-resilient}, systemically risky, networks even a small shock can cause huge harm to the whole system. An advantage over static models such as the Eisenberg-Noe model \cite{Eisenberg2001} is that we can assess stability already for an unshocked system. Further, here we will show that whether a financial network is   resilient or non-resilient only depends on the distributions of $W^-$, $W^+$ and $T$. These have been shown to be relatively stable over time even if locally the network might change noticeably.  
\subsection{Resilience Criteria for Unshocked Networks}
In order to incorporate such small random shocks into our model, we introduce a sequence $\mathbf{m}(n)=(m_1(n),\ldots,m_n(n))$ of binary marks $m_i\in\{0,1\}$ to $(\mathbf{w}^-(n),\mathbf{w}^+(n),\mathbf{s}(n),E(n),\mathbf{c}(n))$, where $m_i=0$ denotes that bank $i$ defaults ex post due to some shock event and hence loses all its capital to start the cascade process. Otherwise, the capital distribution stays the same. 
We extend Assumption \ref{vertex:assump} such that there exists a distribution $\overline{G}$ such that  
\[ \overline{G}_n(x,y,v,l,k) = n^{-1}\sum_{i\in[n]}\1\{w_i^-(n)\leq x,w_i^+(n)\leq y, s_i\leq v, \tau_i(n)\leq l, m_i(n)\leq k\} \]
converges almost surely at all continuity points $(x,y,v)$ of $\overline{G}_{l,k}(x,y,v):=\overline{G}(x,y,v,l,k)$ and denote by $(W^-,W^+,S,T,M)$ a random vector distributed according to 
$\overline{G}$. We assume that $\P(T=0)=0$, but $\P(M=0)>0$ such that indeed $M$ causes ex post defaults in an  {initially} unshocked system.

We want to declare 
a financial system to be 
non-resilient to initial shocks if even very small shocks $M$ can cause significant damage $\mathcal{S}_n^M$, measured by the total systemic importance of the defaulted banks $\mathcal{D}_n^M$ at the end of the contagion process triggered by $M$. 
\begin{definition}\label{def:non:resilience}
A financial system is said to be \emph{non-resilient} if there exists a constant $\Delta>0$ such that for each ex post default $M$ with $\P(M=0)>0$ with high probability
\[ n^{-1}\mathcal{S}_n^M \geq \Delta.
\]
\end{definition}

The following theorem states a sufficient criterion for a system to be non-resilient. 
\begin{theorem}[Non-resilience Criterion]\label{prop:nonres}
Under Assumption \ref{vertex:assump} suppose that $\P(T=0)=0$ and that there exists $z_0>0$ such that
\begin{equation}\label{condition:nonres}
f(z)>0,\quad\text{for all }0<z<z_0.
\end{equation}
Then for all $M$ with $\P(M=0)>0$  with high probability
\[ n^{-1}\mathcal{S}_n^M \geq \E\left[S\psi_T(W^-z_0)\right] . \]
In particular, if $\E[S\1_{\{T<\infty\}}]>0$, then the system is non-resilient.
\end{theorem}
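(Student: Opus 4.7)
The strategy is to reinterpret the shocked system as an unmarked financial system whose hypothetical thresholds already absorb the shock and then to invoke the lower bound of Theorem~\ref{thm:asymp:2}. For each $n$ define the auxiliary threshold sequence $\tilde{\pmb{\tau}}(n)$ by $\tilde{\tau}_i(n):=\tau_i(n)$ if $m_i(n)=1$ and $\tilde{\tau}_i(n):=0$ if $m_i(n)=0$. By construction the cascade $\mathcal{D}_n^M$ produced by the marks in the original system coincides with the cascade started from the vertices of zero threshold in the auxiliary system, so $\mathcal{S}_n^M$ equals the corresponding systemic damage for $(\mathbf{w}^-(n),\mathbf{w}^+(n),\mathbf{s}(n),\tilde{\pmb{\tau}}(n))$. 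Since $\1\{\tilde{\tau}_i\leq l\}=\1\{m_i=0\}+\1\{m_i=1\}\1\{\tau_i\leq l\}$, the a.s.\ convergence of $\overline{G}_n$ to $\overline{G}$ at its continuity points immediately transfers to a.s.\ convergence of the joint empirical distribution of $(\mathbf{w}^-(n),\mathbf{w}^+(n),\mathbf{s}(n),\tilde{\pmb{\tau}}(n))$ to the law of $(W^-,W^+,S,\tilde T)$, where $\tilde T=T$ on $\{M=1\}$ and $\tilde T=0$ on $\{M=0\}$. Hence Assumption~\ref{vertex:assump} is satisfied by the auxiliary system, with $\P(\tilde T=0)=\P(M=0)>0$.

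I would then compare the associated function $\tilde f(z):=\E[W^+\psi_{\tilde T}(W^-z)]-z$ with the original $f$. Splitting on $\{M=0\}$ and $\{M=1\}$ and using $\psi_0\equiv 1$ gives
\[ \tilde f(z) = f(z)+\E\bigl[W^+\bigl(1-\psi_T(W^-z)\bigr)\1_{\{M=0\}}\bigr]\geq f(z) \qquad\text{for every } z\geq 0, \]
and in particular $\tilde f(0)=\E[W^+\1_{\{M=0\}}]>0$, because $W^+$ is bounded away from zero by Assumption~\ref{vertex:assump} and $\P(M=0)>0$. Combined with hypothesis~\eqref{condition:nonres} this forces $\tilde f(z)>0$ on $[0,z_0)$. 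Lemma~\ref{lem:f:continuous} applied to $\tilde f$ then guarantees the existence of a smallest strictly positive root $\hat{\tilde z}$ of $\tilde f$, and the strict positivity on $[0,z_0)$ forces $\hat{\tilde z}\geq z_0$.

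Applying part~\ref{thm:asymp:2:1} of Theorem~\ref{thm:asymp:2} to the auxiliary system, for every $\epsilon>0$ with high probability
\[ n^{-1}\mathcal{S}_n^M \geq \E\bigl[S\,\psi_{\tilde T}(W^-\hat{\tilde z})\bigr]-\epsilon. \]
Since $\psi_r(x)$ is non-increasing in $r$ for fixed $x$ and non-decreasing in $x$ for fixed $r$, and since $\tilde T\leq T$ a.s.\ while $\hat{\tilde z}\geq z_0$, one has the pointwise inequality $\psi_{\tilde T}(W^-\hat{\tilde z})\geq \psi_T(W^-z_0)$. Taking expectations and then $\epsilon\downarrow 0$ yields the asserted bound $n^{-1}\mathcal{S}_n^M\geq \E[S\psi_T(W^-z_0)]$ w.h.p. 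For the final claim, $W^->0$ a.s.\ and $z_0>0$ imply $\psi_T(W^-z_0)>0$ on $\{T<\infty\}$, so $\E[S\1_{\{T<\infty\}}]>0$ forces $\E[S\psi_T(W^-z_0)]>0$, and this strictly positive number serves as the constant $\Delta$ required by Definition~\ref{def:non:resilience}, uniformly in the choice of $M$.

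The step I expect to demand the greatest care is the transfer of Assumption~\ref{vertex:assump} to the auxiliary thresholds, i.e.\ checking that the limiting joint law of $(W^-,W^+,S,\tilde T)$ is correctly identified from the a.s.\ convergence of $\overline G_n$ at the appropriate continuity points. Everything else reduces to the two monotonicity comparisons ($\tilde f\geq f$ together with monotonicity of $\psi_r$ in both arguments) and a direct invocation of the lower bound already established in Theorem~\ref{thm:asymp:2}.
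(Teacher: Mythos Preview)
Your proof is correct and follows exactly the route the paper indicates (the paper omits the proof, saying only that it follows from part~\ref{thm:asymp:2:1} of Theorem~\ref{thm:asymp:2} together with arguments analogous to \cite[Thm.~7.3]{Detering2015a}): construct the auxiliary system with thresholds $\tilde T=T\cdot\1_{\{M=1\}}$, verify Assumption~\ref{vertex:assump} for it, compare $\tilde f\geq f$ to force $\hat{\tilde z}\geq z_0$, and invoke the lower bound.

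One small point worth tightening: ``taking $\epsilon\downarrow 0$'' does not by itself upgrade ``for all $\epsilon>0$, w.h.p.\ $X_n\geq a-\epsilon$'' to ``w.h.p.\ $X_n\geq a$''. To obtain the stated inequality without an $\epsilon$, observe that on $\{M=0\}$ one has $\psi_{\tilde T}(W^-\hat{\tilde z})=\psi_0(W^-\hat{\tilde z})=1$, whereas $\psi_T(W^-z_0)<1$ since $\P(T\geq 1)=1$; together with $S>0$ a.s.\ and $\P(M=0)>0$ this gives the \emph{strict} inequality $\E[S\psi_{\tilde T}(W^-\hat{\tilde z})]>\E[S\psi_T(W^-z_0)]$, and the positive gap absorbs any sufficiently small $\epsilon$. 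For the final non-resilience claim this is in any case harmless, since one may take $\Delta$ to be any value strictly below $\E[S\psi_T(W^-z_0)]$.
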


The proof of Theorem~\ref{prop:nonres} follows from Part \ref{thm:asymp:2:1} of Theorem \ref{thm:asymp:2} and arguments analogue to the ones used in \cite[Thm.~7.3]{Detering2015a} and is thus omitted here. We can interpret Theorem \ref{prop:nonres} as follows. If a financial network satisfies condition (\ref{condition:nonres}), then no matter how small the fraction of banks which are driven into bankruptcy by an external shock event, after the cascade process of defaults always a damage larger than the constant $\E\left[S\psi_T(W^-z_0)\right]$ is caused to the system. In the reasonable case that $\E[S\1_{\{T<\infty\}}]>0$, this lower bound for the damage is strictly positive and the system is hence non-resilient according to Definition \ref{def:non:resilience}. In particular, by choosing $s_i=1$ for all $i\in[n]$ and hence $S\equiv 1$, we derive that the final default fraction $n^{-1}\vert\mathcal{D}_n^M\vert$ is lower bounded by the constant $\E\left[\psi_T(W^-z_0)\right]$, which is positive (unless $\P(T=\infty)=1$).

\begin{figure}[t]
    \hfill\subfigure[]{\includegraphics[width=0.4\textwidth]{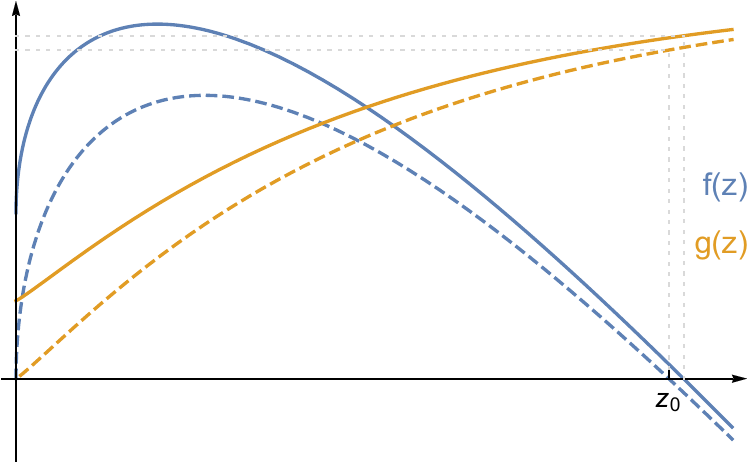}\label{fig:condition:nonres}}
    \hfill\subfigure[]{\includegraphics[width=0.4\textwidth]{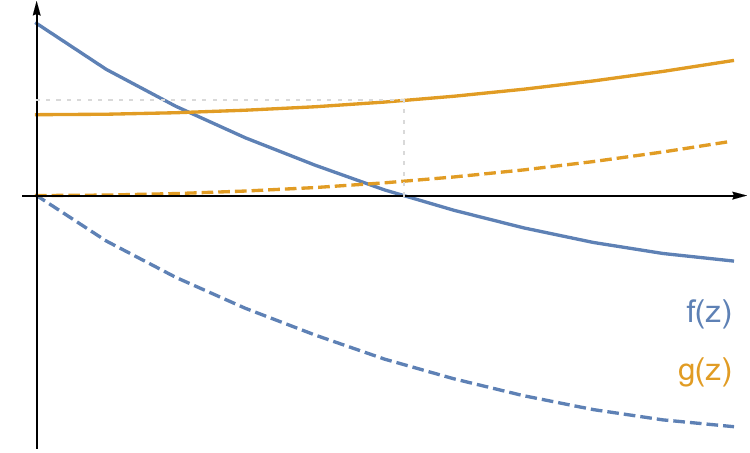}\label{fig:condition:res}}\hfill
\caption{Examples of functions $f(z)=\E[W^+\psi_T(W^-z)]-z$ (blue) satisfying conditions (\ref{condition:nonres}) (a) respectively (\ref{condition:res:2}) (b). Orange: the function $g(z)=\E[S\psi_T(W^-z)]$. Dashed: the unshocked functions. Solid: the shocked functions.}\label{fig:conditions:nonres:res}\hfill
\vspace{-8mm}
\end{figure}

Condition (\ref{condition:nonres}) is an assumption on $f$ which is illustrated in Figure \ref{fig:condition:nonres}. Whereas for $\P(M=0)=0$ the first non-negative root of the function is zero, any howsoever small increase in $\P(M=0)$, and hence upwards shift of $f(z)$, makes the first root jump above $z_0$ and causes default of a set of size larger than $n\E[\psi_T(W^-z_0)]$ and systemic damage larger than $n\E[S\psi_T(W^-z_0)]$.

If on the other hand the function $f(z;(W^-,W^+,T))$ is such behaved that the first positive root $\hat{z}^M$ of $f(z;(W^-,W^+,TM))$ tends to zero as $\P(M=0)$ becomes smaller, one can expect that also the final default cluster $\mathcal{D}_n^M$ and its systemic importance $\mathcal{S}_n^M$ vanish and the system can hence be regarded as \textit{resilient} to small shocks. See Figure \ref{fig:condition:res} for an exemplary illustration. This intuition is formalized in the following.
\begin{definition}\label{def:resilience}
A financial system is said to be \emph{resilient} if for each $\epsilon>0$ there exists $\delta>0$ such that
\[ n^{-1}\mathcal{S}_n^M \leq \epsilon\quad\text{w.\,h.\,p.~for all }M\text{ such that }\P(M=0)<\delta. \]
\end{definition}
In words this means that the final damage to the system $\mathcal{S}_n^M$ can be controlled by the initial default fraction $\P(M=0)$. Theorem \ref{prop:res} is then the analogue of \cite[Thm.~7.4]{Detering2015a} 
transferred to our exposure model.

\begin{theorem}[Resilience Criterion]\label{prop:res}
Under Assumption \ref{vertex:assump} suppose $\P(T=0)=0$ and that there exists $z_0>0$ such that
\begin{equation}\label{condition:res:2}
d(z)<0,\quad\text{for all }0<z<z_0.
\end{equation}
Then for any sequence of ex post defaults $\{M_i\}_{i\in\N}$ with $\lim_{i\to\infty}\P(M_i=0)=0$, it follows that for any $\epsilon>0$, there exists $i_\epsilon$ such that
\[ n^{-1}\mathcal{S}_n^{M_i} \leq \epsilon \quad \text{w.\,h.\,p. for all } i\geq i_\epsilon. \]
In particular, the system is resilient.
\end{theorem}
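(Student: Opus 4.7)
I would deduce Theorem~\ref{prop:res} from Theorem~\ref{thm:asymp:2}(ii) applied to the shocked system. Replacing the limit threshold $T$ by the effective shocked threshold $TM$ yields
\[ f_M(z)=\E[W^+\psi_{TM}(W^-z)]-z,\qquad d_M(z)=\E[W^-W^+\phi_{TM}(W^-z)]-1. \]
Because $M\in\{0,1\}$ we have $\psi_{TM}=\psi_T$ on $\{M=1\}$ and $\psi_0\equiv 1$, giving the perturbation bounds
\[ 0\leq f_M(z)-f(z)=\E[W^+(1-\psi_T(W^-z))\1_{\{M=0\}}]\leq\E[W^+\1_{\{M=0\}}], \]
and $d_M(z)=d(z)-\E[W^-W^+\phi_T(W^-z)\1_{\{M=0\}}]\leq d(z)$. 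Integrability of $W^+$ combined with $\P(M_i=0)\to 0$ then forces $f_{M_i}\to f$ uniformly on $[0,\infty)$.

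Next, I would localize the smallest positive root $\hat z^{M_i}$ of $f_{M_i}$. Since $\P(T=0)=0$, Lemma~\ref{lem:f:continuous} gives $f(0)=0$, and hypothesis~\eqref{condition:res:2} makes $f$ strictly decreasing on $[0,z_0]$ via the integral representation of Lemma~\ref{lem:f:continuous}. Fixing an arbitrary $\delta\in(0,z_0)$, one has $f(\delta)<0$, so the uniform perturbation bound yields $f_{M_i}(\delta)<0$ for all sufficiently large $i$; combined with $f_{M_i}(0)=\E[W^+\1_{\{M_i=0\}}]>0$ and continuity of $f_{M_i}$, the intermediate value theorem produces a root in $(0,\delta)$, so $\hat z^{M_i}\leq\delta$ eventually and hence $\hat z^{M_i}\to 0$. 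Moreover, since $d_{M_i}\leq d<0$ on $(0,z_0)$, $f_{M_i}$ is strictly monotonically decreasing through this root, and so the crossing point $z^{*,M_i}$ from Theorem~\ref{thm:asymp:2} coincides with $\hat z^{M_i}$ for large $i$.

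An application of Theorem~\ref{thm:asymp:2}(ii) to the shocked system then yields, for every $\epsilon>0$ and all sufficiently large $i$,
\[ n^{-1}\mathcal S_n^{M_i}\leq\E[S\psi_{TM_i}(W^-\hat z^{M_i})]+\epsilon/2\quad\text{w.h.p.} \]
The inequality $\psi_{TM_i}(x)\leq\psi_T(x)+\1_{\{M_i=0\}}$ splits the right-hand side into $\E[S\psi_T(W^-\hat z^{M_i})]+\E[S\1_{\{M_i=0\}}]$. The first term vanishes by dominated convergence using $\hat z^{M_i}\to 0$, $\psi_T(0)=\1_{\{T=0\}}=0$ almost surely, and the dominating variable $S$; the second term vanishes by integrability of $S$ together with $\P(M_i=0)\to 0$. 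Choosing $i_\epsilon$ so that both contributions fall below $\epsilon/4$ completes the proof.

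The main technical obstacle is verifying the continuity hypothesis on $d_{M_i}$ near $z^{*,M_i}$ required by Theorem~\ref{thm:asymp:2}(ii). Pointwise continuity of the Poisson PMF in its rate combined with the uniform tail estimate $\phi_r(x)\lesssim (x\vee 1)^{-1/2}$ reduces the issue to a local dominated-convergence bound on $W^-W^+\phi_T(W^-z)$, which, on any interval bounded away from the origin, follows from the integrability of $W^+\sqrt{W^-}$-type quantities afforded by Assumption~\ref{vertex:assump}; in the remaining corner case one truncates the weights, applies Theorem~\ref{thm:asymp:2}(ii) to the truncated system, and passes to the limit via monotone approximation of the truncation level.
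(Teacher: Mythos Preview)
The paper does not prove Theorem~\ref{prop:res} explicitly; it merely remarks that it is ``the analogue of Theorem~7.4 in~\cite{Detering2015a} transferred to our exposure model.'' So there is no in-paper proof to compare against. Your argument follows the same template as the paper's proof of Theorem~\ref{thm:cont:res}: perturb $f$ to $f_{M_i}$, localize the root $\hat z^{M_i}\to 0$, and invoke the upper bound in Theorem~\ref{thm:asymp:2}\,\ref{thm:asymp:2:2}. These steps---the perturbation inequalities $0\le f_{M_i}-f\le\E[W^+\1_{\{M_i=0\}}]$ and $d_{M_i}\le d$, the intermediate-value localization of $\hat z^{M_i}$, and the final dominated-convergence step---are all correct and constitute the heart of the proof.

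The gap is in your last paragraph. To apply Theorem~\ref{thm:asymp:2}\,\ref{thm:asymp:2:2} you need $d_{M_i}$ continuous near $z^{*,M_i}$, and your justification claims that ``integrability of $W^+\sqrt{W^-}$-type quantities'' is ``afforded by Assumption~\ref{vertex:assump}.'' This is false: Assumption~\ref{vertex:assump} only gives finite first moments of $W^-$, $W^+$, $S$, not mixed moments like $\E[W^+(W^-)^{1/2}]$. Without such a bound your dominated-convergence argument for continuity of $d$ collapses. (Compare Lemma~\ref{lem:cont:differentiability}, where continuity of $d$ is established only under the additional structural hypothesis $T=\tau(W^-)$ with $\tau(w)=o(w)$.) Your fallback truncation sketch is also problematic as written: truncating the in-weights $W^-$ reduces in-degrees and hence the default cluster, so it yields a \emph{lower} bound on $\mathcal S_n$, not the upper bound you need.

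A more direct route that sidesteps the continuity issue is to appeal to Theorem~\ref{thm:asymp:1} rather than Theorem~\ref{thm:asymp:2}\,\ref{thm:asymp:2:2}: since $d_{M_i}\le d<0$ on $(0,z_0)$ and $\hat z^{M_i}\in(0,z_0)$ for large $i$, one would like $d_{M_i}\le\kappa<0$ on a neighborhood of $\hat z^{M_i}$. The residual difficulty is that ``$d<0$ pointwise on $(0,z_0)$'' does not literally yield a uniform bound $\kappa<0$ on a sub-interval without some regularity of $d$ (note $d$ is only lower semicontinuous in general). Bridging this gap is presumably exactly what the proof in~\cite{Detering2015a} does, and is the reason the present paper defers to it rather than rerunning the argument of Theorem~\ref{thm:cont:res}.
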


\noindent Theorem \ref{prop:res} states that the total systemic importance of all finally defaulted banks tends to zero as the initial default fraction tends to zero, which is exactly our definition of resilience. However, it makes no statement about the rate of convergence. If we assume not only that $d(z) < 0$ for $z$ small enough but even $\limsup_{z\to0+}d(z) < 0$, then we derive the following result concerning convergence speed. See Subsection \ref{ssec:proofs:3} for the proof.

\begin{proposition}\label{prop:convergence:speed}
Under Assumption \ref{vertex:assump} suppose $\P(T=0)=0$ and
\[ \kappa:=\limsup_{z\to0+}d(z)<0\quad\text{and}\quad \kappa_S:=\limsup_{z\to0+}\E\left[W^-S\phi_T(W^-z)\right]<\infty. \]
Then for any sequence $\{M_i\}_{i\in\N}$ with $\lim_{i\to\infty}\P(M_i=0)=0$, it follows that
\[ \text{w.\,h.\,p.} \quad n^{-1}\mathcal{S}_n^{M_i} \leq \E[S\1_{\{M_i=0\}}] - \kappa^{-1}\kappa_S\E[W^+\1_{\{M_i=0\}}] + o(\E[W^+\1_{\{M_i=0\}}]). \]
If $f(z)$ and $\E[S\psi_T(W^-z)]$ are continuously differentiable from the right at $z=0$ with derivatives $\kappa<0$ and $\kappa_S<\infty$, then 
\[ n^{-1}\mathcal{S}_n^{M_i} \xrightarrow{p} \E[S\1_{\{M_i=0\}}] - \kappa^{-1}\kappa_S\E[W^+\1_{\{M_i=0\}}] + o(\E[W^+\1_{\{M_i=0\}}]). \]
\end{proposition}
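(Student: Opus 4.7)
The strategy is to apply Theorem \ref{thm:asymp:1} to the shocked system, whose effective default threshold is $TM_i$, and then to carry out a Taylor-type expansion around zero. Since $M_i \in \{0,1\}$, one has $\psi_{TM_i}(x) = \1_{\{M_i=0\}} + \1_{\{M_i=1\}}\psi_T(x)$ and $\phi_{TM_i}(x) = \1_{\{M_i=1\}}\phi_T(x)$. Writing $f^{M_i}$ and $d^{M_i}$ for the shocked counterparts of $f$ and $d$, this gives $f^{M_i}(0) = \E[W^+\1_{\{M_i=0\}}] > 0$ and $d^{M_i}(z) = d(z) - \E[W^-W^+\phi_T(W^-z)\1_{\{M_i=0\}}] \leq d(z)$. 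Note that $W^+\in L^1$ combined with $\P(M_i=0)\to 0$ implies $\E[W^+\1_{\{M_i=0\}}]\to 0$ by dominated convergence.

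First I would verify the stableness hypothesis of Theorem \ref{thm:asymp:1} for $i$ sufficiently large. Fix $\kappa' \in (\kappa,0)$. By $\limsup_{z\to 0+}d(z)=\kappa$, there exists $\delta>0$ with $d(z)\le \kappa'$ on $(0,\delta)$, whence $d^{M_i}(z)\le \kappa'$ there as well. Using the integral representation from Lemma \ref{lem:f:continuous}, this yields $f^{M_i}(z) \le \E[W^+\1_{\{M_i=0\}}] + \kappa' z$ on $(0,\delta)$, so the first positive root $\hat{z}^{M_i}$ of $f^{M_i}$ satisfies $\hat{z}^{M_i} \le -\E[W^+\1_{\{M_i=0\}}]/\kappa' \to 0$. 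Hence for $i$ large, $\hat{z}^{M_i}<\delta$ and $d^{M_i}$ is bounded above by $\kappa'<0$ on a neighborhood of $\hat{z}^{M_i}$, which allows me to apply Theorem \ref{thm:asymp:1} to obtain $n^{-1}\mathcal{S}_n^{M_i}\xrightarrow{p}\E[S\psi_{TM_i}(W^-\hat{z}^{M_i})]$ for each such $i$.

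Next I would control this limit from above. Splitting over $\{M_i=0\}$ and $\{M_i=1\}$ gives $\E[S\psi_{TM_i}(W^-\hat{z}^{M_i})] \le \E[S\1_{\{M_i=0\}}] + \E[S\psi_T(W^-\hat{z}^{M_i})]$. Since $\P(T=0)=0$ implies $\psi_T(0)=0$ and $\frac{d}{dx}\psi_T(x)=\phi_T(x)$, Fubini yields $\E[S\psi_T(W^-z)] = \int_0^z \E[W^-S\phi_T(W^-\xi)]\,d\xi$. The limsup condition on $\kappa_S$ then gives $\E[S\psi_T(W^-z)] \le \kappa_S' z$ for any $\kappa_S'>\kappa_S$ and $z$ small enough. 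Substituting the bound on $\hat{z}^{M_i}$ and letting $\kappa'\to\kappa$, $\kappa_S'\to\kappa_S$ yields the claimed upper bound with $o(\E[W^+\1_{\{M_i=0\}}])$ as $i\to\infty$, and combining with the convergence in probability completes the first assertion.

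For the second statement, continuous right-differentiability at $z=0$ upgrades the one-sided bounds to exact expansions $f(z)=\kappa z+o(z)$ and $\E[S\psi_T(W^-z)]=\kappa_S z+o(z)$. Rewriting $f^{M_i}(\hat z^{M_i})=0$ as $\E[W^+\1_{\{M_i=0\}}(1-\psi_T(W^-\hat z^{M_i}))]+f(\hat z^{M_i})=0$, and using $\psi_T(W^-\hat z^{M_i})\to 0$ by dominated convergence, yields the two-sided estimate $\hat z^{M_i}=-\kappa^{-1}\E[W^+\1_{\{M_i=0\}}]+o(\E[W^+\1_{\{M_i=0\}}])$. Substituting this back into $\E[S\psi_{TM_i}(W^-\hat z^{M_i})]$ gives the asserted equality. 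I expect the main technical obstacle throughout to be the careful tracking of the various $o(\cdot)$ quantities, in particular making sure that errors of order $o(\hat z^{M_i})$ in the expansions propagate to errors of order $o(\E[W^+\1_{\{M_i=0\}}])$ after substitution.
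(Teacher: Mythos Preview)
Your proposal is correct and follows essentially the same route as the paper's proof: bound $f^{M_i}$ and $\E[S\psi_{TM_i}(W^-\cdot)]$ via the integral representation of Lemma~\ref{lem:f:continuous}, extract an upper bound on $\hat z^{M_i}$, invoke Theorem~\ref{thm:asymp:1}, and then upgrade to equalities under right-differentiability. Your write-up is in fact more careful than the paper's in one respect: you explicitly verify the stableness hypothesis of Theorem~\ref{thm:asymp:1} (that $d^{M_i}$ is bounded above by a negative constant near $\hat z^{M_i}$), which the paper simply takes for granted.
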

\noindent In particular, if $\{M_i\}_{i\in\N}$ is independent of $W^+$ and $S$, then
\[ n^{-1}\mathcal{S}_n^{M_i} \leq \P(M_i=0)\left(\E[S]-\kappa^{-1}\kappa_S\E[W^+]\right) + o(\P(M_i=0)) = \mathcal{O}(\P(M_i=0)) \]
and $1-\kappa^{-1}\kappa_S\E[W^+]/\E[S]$ can be regarded as the maximal amplification factor of the systemic importance of initially defaulted banks $\E[S\1_{\{M_i=0\}}]=\P(M_i=0)\E[S]$. If further $S\equiv1$ and $W^-W^+$ is integrable, above result is the analogon to \cite[Corollary 20]{Amini2014c}:
\[ n^{-1}\vert\mathcal{D}_n^{M_i}\vert \xrightarrow{p} \P(M_i=0)\left(1+\frac{\E[W^+]\E[W^-\1_{\{T=1\}}]}{1-\E[W^-W^+\1_{\{T=1\}}]}\right) + o(\P(M_i=0)) \]

\noindent Both Theorem \ref{prop:res} and Proposition \ref{prop:convergence:speed} are concerned with the behavior of the weak derivative $d(z)$ of $f(z)$ near $z=0$. The following criterion which rather focuses on the behavior of $f(z)$ near $z=0$ will turn out to be useful later. See Subsection \ref{ssec:proofs:3} for the proof.
\begin{theorem}\label{thm:cont:res}
Under Assumption \ref{vertex:assump} suppose $\P(T=0)=0$, $d(z)$ is continuous on $(0,z_0)$ for some $z_0>0$ and
\begin{equation}\label{thm:cont:res:ass}
\inf\left\{z>0\,:\,f(z)<0\right\} = 0.
\end{equation} 
Then for any sequence $\{M_i\}_{i\in\N}$ with $\lim_{i\to\infty}\P(M_i=0)=0$, it follows that for any $\epsilon>0$, there exists $i_\epsilon$ such that
\[ n^{-1}\mathcal{S}_n^{M_i} \leq \epsilon \quad \text{w.\,h.\,p. for all } i\geq i_\epsilon. \]
In particular, the system is resilient.
\end{theorem}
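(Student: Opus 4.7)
The plan is to deduce Theorem \ref{thm:cont:res} from Theorem \ref{thm:asymp:2}\,(\ref{thm:asymp:2:2}) applied to the shocked system with effective threshold $TM_i$, by showing that the relevant zero $z^*_i$ of the shocked function can be pushed to $0$ as $i \to \infty$. Fix $\epsilon > 0$. The overall goal is to find $i_\epsilon$ such that for all $i \geq i_\epsilon$ the conditions of Theorem \ref{thm:asymp:2}\,(\ref{thm:asymp:2:2}) apply to the system with marks $M_i$, and to control the limiting value $\E[S\psi_{TM_i}(W^-z^*_i)]$ uniformly.

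First I would record the simple identities (using $\psi_{TM_i}(x) = \1_{\{M_i=0\}} + \1_{\{M_i=1\}}\psi_T(x)$ and the analogous one for $\phi$):
\begin{equation*}
f^{M_i}(z) = f(z) + \E\bigl[W^+\1_{\{M_i=0\}}(1-\psi_T(W^-z))\bigr], \qquad d^{M_i}(z) = d(z) - \E\bigl[W^-W^+\1_{\{M_i=0\}}\phi_T(W^-z)\bigr].
\end{equation*}
Since $S$ and $W^+$ are integrable and $\P(M_i=0)\to 0$, a standard uniform-integrability argument (splitting $\{S\leq K\}$ versus $\{S>K\}$, and similarly for $W^+$) yields $\E[S\1_{\{M_i=0\}}]\to 0$ and $\E[W^+\1_{\{M_i=0\}}]\to 0$. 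By the hypothesis $\inf\{z>0: f(z)<0\}=0$ and the continuity of $g(z):=\E[S\psi_T(W^-z)]$ at $z=0$ (dominated convergence, using $g(0)=\E[S\1_{\{T=0\}}]=0$), I can pick $z_1\in(0,z_0)$ so small that $f(z_1)<0$ and $g(z_1)<\epsilon/4$. Then for all $i$ large enough, $\E[W^+\1_{\{M_i=0\}}]<|f(z_1)|$, hence $f^{M_i}(z_1)<0$; combined with $f^{M_i}(0)=\E[W^+\1_{\{M_i=0\}}]\geq x_0\P(M_i=0)>0$ (using the lower bound $W^+\geq x_0$ from Assumption \ref{vertex:assump}) and continuity of $f^{M_i}$, this gives $0<z^*_i\leq z_1$.

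Next I would verify that $d^{M_i}$ is continuous on a neighborhood of $z^*_i$. Using the generalized dominated convergence theorem with dominating sequence $W^-W^+\phi_T(W^-z_n)$ whose expectation $d(z_n)+1$ converges by the assumed continuity of $d$ on $(0,z_0)$, continuity of $\E[W^-W^+\1_{\{M_i=0\}}\phi_T(W^-z)]$ follows, hence $d^{M_i}$ is continuous on $(0,z_0)\supset\{z^*_i\}$. I can now invoke Theorem \ref{thm:asymp:2}\,(\ref{thm:asymp:2:2}) with error tolerance $\epsilon/2$ to obtain, w.h.p.,
\begin{equation*}
n^{-1}\mathcal{S}_n^{M_i} \leq \E\bigl[S\psi_{TM_i}(W^-z^*_i)\bigr] + \epsilon/2 = \E[S\1_{\{M_i=0\}}] + \E\bigl[S\1_{\{M_i=1\}}\psi_T(W^-z^*_i)\bigr] + \epsilon/2.
\end{equation*}
Since $\psi_T(\,\cdot\,)$ is non-decreasing in its argument and $z^*_i\leq z_1$, the second expectation is bounded by $g(z_1)<\epsilon/4$, and for $i$ large enough the first is $<\epsilon/4$; hence $n^{-1}\mathcal{S}_n^{M_i} \leq \epsilon$ w.h.p., which is the claim.

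The main obstacle I anticipate is the continuity verification for $d^{M_i}$: naively one might want a single dominating function independent of $z$, which is awkward because $\phi_T(W^-z)$ is not monotone in $z$ and $W^-$ is unbounded. The workaround is precisely the generalized DCT (dominated sequence whose expectations converge to that of the pointwise limit), which is applicable exactly because continuity of $d$ is assumed by hypothesis. The remaining ingredients — passing $z^*_i\to 0$, and splitting the $\psi_{TM_i}$ expectation into the $\{M_i=0\}$ and $\{M_i=1\}$ parts — are routine given the tools already developed.
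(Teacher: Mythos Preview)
Your argument is correct and follows essentially the same route as the paper: push $z^*_i$ below any prescribed level using the hypothesis $\inf\{z>0:f(z)<0\}=0$ together with $\E[W^+\1_{\{M_i=0\}}]\to 0$, verify continuity of $d^{M_i}$ near $z^*_i$ so that Theorem~\ref{thm:asymp:2}\,(\ref{thm:asymp:2:2}) applies, and then split $\E[S\psi_{TM_i}(W^-z^*_i)]$ into the $\{M_i=0\}$ and $\{M_i=1\}$ parts. Your write-up is in fact more careful than the paper's on two points: you justify $\E[S\1_{\{M_i=0\}}]\to 0$ and $\E[W^+\1_{\{M_i=0\}}]\to 0$ via integrability (rather than tacitly assuming independence), and you correctly invoke the generalized dominated convergence theorem for the continuity of $d^{M_i}$, where the paper simply writes ``by dominated convergence''.
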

\noindent Assumption (\ref{thm:cont:res:ass}) describes that $f(z)$ becomes negative immediately after $z=0$. It is in some sense the opposite of assumption (\ref{condition:nonres}) and ensures that the roots $z^*_i$ of the shocked systems tend to zero as the shock size $\P(M_i=0)$ shrinks to zero.

\subsection{Systemic Threshold Requirements}\label{ssec:threshold:requirements}
A natural problem that is of great interest to regulators is to identify capital requirements for the individual banks which can be determined from observable quantities of the network and that are sufficient to make the network resilient to external shocks. Observable quantities are the in- and out-degrees $(d_i^-)_{i\in[n]}$ respectively $(d_i^+)_{i\in[n]}$, which serve as estimators of the in- and out-weights $(w_i^-)_{i\in[n]}$, respectively $(w_i^+)_{i\in[n]}$, and interbank exposures. In this subsection we will first focus on identifying threshold requirements in the threshold model (see Subsection \ref{ssec:special:case:threshold:model})
that guarantee resilience. In the next subsection, we then discuss how to translate the threshold requirements into systemic capital requirements in the exposure model.

More precisely, in this subsection we seek threshold requirements for bank $i$ of the form $\tau_i=\tau(w_i^-)$, where $\tau:\R_+\to\N$ is a non-decreasing function. Such a functional form has the interpretation that the threshold (capital) requirement of a bank only depends on its risk of defaulting due to default of debtors (exposure risk). In contrast, if bank $i$'s threshold (capital) requirement $\tau(w_i^-,w_i^+)$ was also depending on the out-weight $w_i^+$, this would also take possible defaults caused by bank $i$ into account. This risk management policy would not be in line with traditional risk management techniques and would be more difficult to communicate to banks.

Note, in particular, the changed nature of the threshold values $\tau_i$. While, so far, the capitals $c_i$ (which equal $\tau_i$ in the threshold model) were exogenous quantities in our model, we now aim to determine them from the in-weights $w_i^-$ under the constraint of building a resilient network, hence making them endogenous quantities.

To investigate resilience of a financial system implementing threshold requirements given by $\tau$ we want to use the resilience criteria from the previous subsection. In particular we have to ensure that Assumption \ref{vertex:assump} is satisfied for the functional thresholds $\tau_i=\tau(w_i^-)$, i.\,e.~we need
\[ \lim_{n\to\infty}n^{-1}\sum_{i\in[n]}\1\{w_i^-(n)\leq x,w_i^+(n)\leq y,s_i(n)\leq v, \tau(w_i^-(n))\leq l\} = G(x,y,v,l) \]
for some distribution $G:\R_+^3\times\N_{0,\infty}\to[0,1]$ and all points $(x,y,v,l)\in\R_+^3\times\N_{0,\infty}$ for which $G_l(x,y,v):=G(x,y,v,l)$ is continuous. Note that depending on the choice of $\tau$ and $G$ for the limiting random vector $(W^-,W^+,S,T)\sim G$ it does not necessarily hold that $\P(T=\tau(W^-))=1$. This is because $W^-$ could have positive mass at some point of discontinuity of $\tau$ and it would then be important whether the in-weight distributions converge from below or from above. Instead one easily derives that $\P(\accentset{\circ}{\tau}(W^-)\leq T\leq \bar{\tau}(W^-))=1$ where $\accentset{\circ}{\tau}(w):=\lim_{\epsilon\to0+}\tau((1-\epsilon)w)$ and $\bar{\tau}(w):=\lim_{\epsilon\to0+}\tau((1+\epsilon)w)$ are the left-continuous resp.~right-continuous modifications of $\tau$. If, however, $\tau$ only admits discontinuities at $\tilde{w}\in\R_+$ such that $\P(W^-=\tilde{w})=0$, then in fact $\P(T=\tau(W^-))=1$ and we will assume this from now on. All our results on resilience and non-resilience in the following can easily be extended for the functions $\accentset{\circ}{\tau}$ resp.~$\bar{\tau}$.
\begin{assumption}\label{ass:tau}
Consider sequences $\mathbf{w}^-(n)$, $\mathbf{w}^+(n)$ and $\mathbf{s}(n)$ of in-weights, out-weights and systemic importance values such that the empirical random vector $(W_n^-,W_n^+,S_n)$ converges in distribution to a random vector $(W^-,W^+,S)$. Moreover, let $\tau:\R_+\to\N_0$ be a non-decreasing function and assume that its points of discontinuity are all null-sets of $W^-$. In particular, letting $\tau_i(n)=\tau(w_i^-(n))$, $i\in[n]$, Assumption \ref{vertex:assump} is satisfied and it holds $T=\tau(W^-)$ a.\,s.
\end{assumption}
Empirical studies of financial networks such as \cite{Boss2004,Cont2013} show that degrees follow Pareto distributions (at least in the tail). We denote in the following $X\sim\mathrm{Par}(\beta,x_\text{min})$, $\beta>1$, $x_\text{min}>0$, if the random variable $X$ has density
\[ f_X(x)=(\beta-1)x_\text{min}^{\beta-1}x^{-\beta}\1_{x\geq x_\text{min}}. \]
As mentioned before, to reproduce Pareto distributed degrees in our model we need to choose the weights Pareto distributed as well. Hence let $W^-\sim\mathrm{Par}(\beta^-,w_\text{min}^-)$ and $W^+\sim\mathrm{Par}(\beta^+,w_\text{min}^+)$, where $\beta^->2$, $\beta^+>2$, $w_\text{min}^->0$ and $w_\text{min}^+>0$. In particular, any choice of an increasing function $\tau$ will satisfy Assumption \ref{ass:tau}. The main result of this subsection then identifies a criterion for function $\tau$ that ensures resilience of the financial network. See Subsection \ref{ssec:proofs:3} for the proof.
\begin{theorem}\label{threshold:res}
Suppose Assumption \ref{ass:tau} for a non-decreasing function $\tau:\R_+\to\N\backslash\{0,1\}$ such that for each bank $i\in[n]$ the threshold value $\tau_i$ depends on in-weight $w_i^-$ by $\tau_i=\tau(w_i^-)$. Moreover, assume for the limiting weight distributions that $W^-\sim\mathrm{Par}(\beta^-,w_\text{min}^-)$ respectively $W^+\sim\mathrm{Par}(\beta^+,w_\text{min}^+)$, $\beta^-,\beta^+>2$, $w_\text{min}^-,w_\text{min}^+>0$. Set
\[ \gamma_\text{\normalfont c}:=2+\frac{\beta^--1}{\beta^+-1}-\beta^-\quad\text{and}\quad \alpha_\text{\normalfont c}:=\frac{\beta^+-1}{\beta^+-2}w_\text{\normalfont min}^+\left(w_\text{\normalfont min}^-\right)^{1-\gamma_\text{\normalfont c}}. \]
Then the system is resilient if one of the following holds:
\begin{enumerate}
\item \label{threshold:res:1} $\gamma_\text{\normalfont c}<0$,
\item \label{threshold:res:2} $\gamma_\text{\normalfont c}=0$ and $\liminf_{w\to\infty}\tau(w)>\alpha_\text{\normalfont c}+1$,
\item \label{threshold:res:3} $\gamma_\text{\normalfont c}>0$ and $\liminf_{w\to\infty}w^{-\gamma_\text{\normalfont c}}\tau(w)>\alpha_\text{\normalfont c}$.
\end{enumerate}
\end{theorem}

\noindent The theorem identifies different criteria for $\tau$ depending on the quantity $\gamma_\text{c}$ and hence the values of $\beta^-$ and $\beta^+$. Since $\beta^->2$ and $\beta^+>2$, we note that always $\gamma_\text{c}<1$. That is, also in part \ref{threshold:res:3} of the theorem it is possible to choose a sublinear threshold function $\tau$ that ensures resilience. On the other hand, even the constant threshold function $\tau(w)=2$ for all $w\in\R_+$ ensures resilience by part \ref{threshold:res:1} whenever $\gamma_\text{c}<0$. This is in particular the case if $\beta^->3$ and $\beta^+>3$, that is, if $W^-$ and $W^+$ both admit finite second moments. This is in line with the results from \cite{Cont2016}. In addition, the theorem makes statements about the {important case $\min \{ \beta^-,\beta^+\}<3$ }. Such parameters were observed on real markets for example in \cite{Cont2013}. In these cases, all $\gamma_\text{c}<0$, $\gamma_\text{c}=0$ or $\gamma_\text{c}>0$ are possible and only the exact values of $\beta^-$ and $\beta^+$ determine the condition for resilience.

\begin{remark}\label{rem:contagious:links}
In Theorem \ref{threshold:res} we make the assumption $\tau(w)\geq2$. In other words, each bank must at least be capable of sustaining the default of its largest debtor. This requirement has already been implemented in an even stricter form in the \textit{Supervisory framework for measuring and controlling large exposures} by the \textit{Basel Committee on Banking Supervision} from 2014 which has become applicable from January 2019 \cite{BaselCommittee2014}. While being economically sensible, the assumption ``$\tau(w)\geq2$'' is actually not necessary in order to derive analytical results regarding (non-)resilience in the following sense. For the case of $\gamma_\text{c}<0$ it is actually enough to postulate $\E\left[W^-W^+\1_{\{\tau(W^-)=1\}}\right]<1$ in order to ensure resilience. Also in the case of $\gamma_\text{c}\geq0$, it suffices to adjust $\alpha_\text{c}$ for a factor $(1-\E[W^-W^+\1_{\{\tau(W^-)=1\}}])^{-1}$, whenever $\E\left[W^-W^+\1_{\{\tau(W^-)=1\}}\right]<1$. Both results follow from a simple modification of our proof and we omit the details.
\end{remark}

\noindent Note that Theorem \ref{threshold:res} is formulated with assumptions on the marginal distributions of $W^-$ and $W^+$ only. Hence, the result is robust with respect to the dependency structure of the weights, i.\,e.~the resilience criteria are sufficient for all dependency structures. As Theorem \ref{thm:functional:nonres} will show, in the case of comonotone weights, the values of $\gamma_\text{c}$ and $\alpha_\text{c}$ are sharp. Also in the case of upper tail dependent weights (a reasonable assumption for real financial networks) the value of $\gamma_c$ is sharp. By $W^-$ and $W^+$ being upper tail dependent we mean that
\[ \lambda:=\liminf_{p\to0}\P(F_{W^+}(W^+)>1-p \mid F_{W^-}(W^-)>1-p) > 0. \]
If even
\[ \Lambda(x) := \lim_{p\to0}\P(F_{W^+}(W^+)>1-xp \mid F_{W^-}(W^-)>1-p) \]
exists for all $x\geq0$, we are able to determine explicitly sharp thresholds $\alpha_\text{c}(\Lambda)$ given by
\[ \alpha_\text{c}(\Lambda) := w_\text{min}^+(w_\text{min}^-)^{1-\gamma_\text{c}}\int_0^\infty\Lambda\left(x^{1-\beta^+}\right)\dd x. \]
For comonotone dependence ($\Lambda(x)=1\wedge x$), $\alpha_\text{c}(\Lambda)$ coincides with $\alpha_\text{c}$ from Theorem \ref{threshold:res}.

\begin{theorem}\label{thm:functional:nonres}
Consider the same situation as in Theorem \ref{threshold:res}. If $\gamma_c>0$, the following holds:
\begin{enumerate}
\item \label{thm:functional:nonres:1} If $\limsup_{w\to\infty}w^{-\gamma_\text{\normalfont c}}\tau(w) < \lambda \frac{\beta^+-2}{\beta^+-1} \alpha_\text{c}$, then the system is non-resilient.
\item \label{thm:functional:nonres:2} If $\Lambda(x)$ exists for each $x\geq0$ and $\limsup_{w\to\infty}w^{-\gamma_\text{\normalfont c}}\tau(w) < \alpha_\text{\normalfont c}(\Lambda)$, then the system is non-resilient. If $\liminf_{w\to\infty}w^{-\gamma_\text{\normalfont c}}\tau(w) > \alpha_\text{\normalfont c}(\Lambda)$, then the system is resilient.
\end{enumerate}
\end{theorem}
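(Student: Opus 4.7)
\textbf{Proof plan for Theorem~\ref{thm:functional:nonres}.}
My plan is to reduce both parts to the general criteria Theorem~\ref{prop:nonres} (for non-resilience) and Theorem~\ref{thm:cont:res} (for resilience) via a sharp asymptotic analysis of $f(z) = \E[W^+\psi_T(W^-z)] - z$ as $z\downarrow 0$. Since $\tau\geq 2$ forces $\P(T=0)=0$ and hence $f(0)=0$, the entire question concerns the sign of $f(z)/z$ for small positive $z$. Using the monotonicity $r\mapsto \psi_r(x)$, I can without loss of generality replace $\tau$ by the pointwise upper (resp.\ lower) bound $\tilde\tau(w) = \lceil\alpha w^{\gamma_c}\rceil\vee 2$, with $\alpha$ chosen just below $\limsup_{w\to\infty} w^{-\gamma_c}\tau(w)$ in the non-resilience direction and just above $\liminf_{w\to\infty} w^{-\gamma_c}\tau(w)$ in the resilience direction. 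Since $\gamma_c>0$ forces $\tilde\tau(w)\to\infty$, Poisson concentration then yields $\psi_{\tilde\tau(w)}(wz)\to \1_{\{wz\geq \alpha w^{\gamma_c}\}}$ uniformly away from a neighborhood of the equality $wz=\alpha w^{\gamma_c}$. Defining the critical in-weight $w_c(z) := (\alpha/z)^{1/(1-\gamma_c)}$ where this equality occurs, the full expectation $\E[W^+\psi_{\tilde\tau(W^-)}(W^-z)]$ reduces, up to negligible error, to $\E[W^+\1_{\{W^->w_c(z)\}}]$.

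The second step is the core asymptotic. I would parameterize $p(z) := \P(W^- > w_c(z)) = (w_\text{min}^-/w_c(z))^{\beta^--1}$ and use the key algebraic identity $1-\gamma_c = (\beta^--1)(\beta^+-2)/(\beta^+-1)$ that follows directly from the definition of $\gamma_c$, yielding $z = \alpha(w_\text{min}^-)^{\gamma_c-1}\,p^{(\beta^+-2)/(\beta^+-1)}$. Rewriting $\E[W^+\1_{\{W^->w_c(z)\}}] = \int_0^\infty \P(W^+>t,\,W^->w_c(z))\, dt$ and substituting $t = F_{W^+}^{-1}(1-xp) = w_\text{min}^+(xp)^{-1/(\beta^+-1)}$, the defining convergence $\P(F_{W^+}(W^+)>1-xp\mid F_{W^-}(W^-)>1-p) \to \Lambda(x)$ combined with dominated convergence gives
\[
\E[W^+\1_{\{W^->w_c(z)\}}] \;=\; w_\text{min}^+\, p^{(\beta^+-2)/(\beta^+-1)} \int_0^\infty \Lambda(x^{1-\beta^+})\,dx\,\bigl(1+o(1)\bigr).
\]
Hence $f(z)/z \to \alpha_c(\Lambda)/\alpha - 1$ as $z\downarrow 0$, producing $f(z)>0$ on a right-neighborhood of $0$ when $\alpha<\alpha_c(\Lambda)$ (invoke Theorem~\ref{prop:nonres}) and $f(z)<0$ there when $\alpha>\alpha_c(\Lambda)$ (invoke Theorem~\ref{thm:cont:res}). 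This proves part~\ref{thm:functional:nonres:2}. For part~\ref{thm:functional:nonres:1}, where $\Lambda$ is not assumed to exist and only the diagonal coefficient $\lambda$ is known, I would keep just the diagonal contribution by bounding $\E[W^+\1_{\{W^->w_c\}}] \geq u_p \cdot \P(W^->w_c,\,W^+>u_p)$ with $u_p := w_\text{min}^+p^{-1/(\beta^+-1)}$; this is at least $(\lambda-\delta)u_p p = (\lambda-\delta)w_\text{min}^+p^{(\beta^+-2)/(\beta^+-1)}$ for small $p$, and the same scaling delivers exactly the threshold $\lambda(\beta^+-2)/(\beta^+-1)\,\alpha_c$.

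The main obstacle will be justifying the interchange of the limit $p\to 0$ with the integral over $x$ that produces $\int_0^\infty \Lambda(x^{1-\beta^+})\,dx$. After the change of variables the integrand equals $p^{-1}\P(F_{W^+}(W^+)>1-xp,\,F_{W^-}(W^-)>1-p)$ times the Pareto-type Jacobian $x^{-\beta^+/(\beta^+-1)}$; the elementary coupling bound $\P(V>1-xp\mid U>1-p)\leq \min(1,x)$ for uniform marginals combined with $\beta^+>2$ provides an integrable majorant, so dominated convergence applies. Two secondary technical points deserve attention: verifying the continuity hypothesis on $d(z)$ required by Theorem~\ref{thm:cont:res}, which follows by dominated convergence from the Pareto integrability of $W^-W^+$ against $\phi_{\tilde\tau(W^-)}(W^-z)$; and handling the Poisson-to-indicator transition uniformly in $w$, where the transition zone $\{w : wz \in ((1-\delta)\alpha w^{\gamma_c}, (1+\delta)\alpha w^{\gamma_c})\}$ corresponds to a shrinking range of $W^-$ contributing only $O(\delta)$ to the normalized expectation by Pareto tail estimates.
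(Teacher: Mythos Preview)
Your proposal is correct and follows essentially the same route as the paper: Poisson concentration (Chebyshev for the lower bound on $\psi$, and for the upper bound the Chernoff estimate already carried out in the proof of Theorem~\ref{threshold:res}) reduces $\E[W^+\psi_T(W^-z)]$ to the tail expectation $\E[W^+\1_{\{W^->w_c(z)\}}]$, which is then evaluated via the integral representation and dominated convergence with the majorant $1\wedge x^{1-\beta^+}$ (equivalently your $\min(1,x)$ times the Jacobian); part~\ref{thm:functional:nonres:1} follows by retaining only the diagonal contribution of that integral, exactly as in the paper. One small slip worth fixing: for non-resilience you need $\tilde\tau\geq\tau$ eventually, which forces $\alpha$ just \emph{above} $\limsup_{w\to\infty}w^{-\gamma_c}\tau(w)$ (still below $\alpha_c(\Lambda)$), and symmetrically $\alpha$ just \emph{below} $\liminf$ for resilience --- your stated directions are reversed.
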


\noindent See Subsection \ref{ssec:proofs:3} for the proof. In part \ref{thm:functional:nonres:2} of the theorem, we characterize threshold functions $\tau$ that are asymptotically smaller respectively larger than $\alpha_\text{c}(\Lambda)w^{\gamma_\text{c}}$. In the proof we calculate the derivative of $f(z)$ at $z=0$ in order to show non-resilience ($f'(0)>0$) respectively resilience ($f'(0)<0$). If $\tau(w)$ asymptotically behaves like $\alpha_\text{c}(\Lambda)w^{\gamma_\text{c}}$, we obtain $f'(0)=0$ and hence both (\ref{condition:nonres}) and (\ref{thm:cont:res:ass}) are possible (not simultaneously). In this case, the exact form of $\tau$ and not only its asymptotics are important to decide whether the system is resilient or non-resilient.

\begin{remark}
If the weights $W^-$ and $W^+$ are not upper tail dependent, the conditions from Theorem \ref{threshold:res} are generally too strict. If their dependency is such that $\E[W^+(W^-)^{1-\gamma}]<\infty$ for some $\gamma\in(0,\gamma_\text{c}]$ for example, then $\liminf_{w\to\infty}w^{-\gamma}\tau(w)>0$ is already a sufficient criterion for resilience of the system. This can easily be derived from line (\ref{eqn:expectation:upper:part}) in the proof of Theorem \ref{threshold:res}.
\end{remark}

\noindent Theorems \ref{threshold:res} and \ref{thm:functional:nonres} both describe financial systems whose weights are given by Pareto distributed random variables. While such random variables model the tails of empirical degree distributions very well, typically for small weights there is a non-negligible discrepancy. However, the proofs of Theorems \ref{threshold:res} and \ref{thm:functional:nonres} show that it is in fact only the tail that determines (non-)resilience of a financial system. Therefore, assume in the following that there exist constants $K^-,K^+\in(0,\infty)$ and $\beta^-,\beta^+>2$ such that
\begin{equation}\label{eqn:Pareto:type}
1-F_{W^\pm}(w) \leq \left(\frac{w}{K^\pm}\right)^{1-\beta^\pm}
\end{equation}
for $w$ large enough. That is, the tail distributions of $W^-$ and $W^+$ are bounded by the powers $1-\beta^-$ resp.~$1-\beta^+$. Then the following version of Theorem \ref{threshold:res} holds.
\begin{theorem}\label{thm:Pareto:type}
Suppose Assumption \ref{ass:tau} for a non-decreasing function $\tau:\R_+\to\N\backslash\{0,1\}$ such that for each bank $i\in[n]$ the threshold value $\tau_i$ depends on in-weight $w_i^-$ by $\tau_i=\tau(w_i^-)$. Moreover, let the distribution functions of $W^-$ and $W^+$ satisfy \eqref{eqn:Pareto:type}. For $\gamma_\text{\normalfont c}$ defined as before, the system is resilient if one of following holds:
\begin{enumerate}
\item \label{thm:Pareto:type:1} $\gamma_\text{\normalfont c}<0$,
\item \label{thm:Pareto:type:2} $\gamma_c=0$ and $\liminf_{w\to\infty}\tau(w)>\frac{\beta^+-1}{\beta^+-2}K^+K^-+1$,
\item \label{thm:Pareto:type:3} $\gamma_c>0$ and $\liminf_{w\to\infty}w^{-\gamma_c}\tau(w)>\frac{\beta^+-1}{\beta^+-2}K^+(K^-)^{1-\gamma_c}$.
\end{enumerate}
\end{theorem}
\noindent See Subsection \ref{ssec:proofs:3} for the proof. Note that by the same means also Theorem \ref{thm:functional:nonres} can be generalized. For non-resilience the inequality in \eqref{eqn:Pareto:type} needs to be inverted such that it describes a lower bound on the tail of the distributions.

\subsection{Systemic Capital Requirements}\label{ssec:capital:requirements}
In this subsection we translate the threshold requirements from Theorem \ref{threshold:res} to capital requirements in the exposure model. That is, we state explicit amounts of capital each bank has to be able to procure in stress scenarios in order for the system to be resilient. As for the threshold requirements, it is important to note that each bank can compute its capital requirements on its own by just knowing its local neighborhood in the network. Further, a bank's capital requirement only depends on the default risk the bank exposes itself to and not on the default risk the bank poses to other banks. Proposition \ref{prop:robust:capital:requirements} states a straightforward robust way to translate threshold requirements into sufficient capital requirements. In general, it might lead to capital requirements that are too high and hence unnecessarily reduce interbank lending and liquidity, however. Thus, we further provide Theorem \ref{cor:threshold:res} below, which accurately determines capital requirements under a certain regularity assumption on the exposure lists.

\begin{proposition}\label{prop:robust:capital:requirements}
Suppose Assumption \ref{ass:tau} for a non-decreasing function $\tau:\R_+\to\N\backslash\{0,1\}$ and limiting weights $W^-\sim\mathrm{Par}(\beta^-,w_\text{min}^-)$ resp.~$W^+\sim\mathrm{Par}(\beta^+,w_\text{min}^+)$ with $\beta^-,\beta^+>2$, $w_\text{min}^-,w_\text{min}^+>0$. Further, assume that $\liminf_{w\to\infty}\tau(w)>\alpha_\text{\normalfont c}+1$ if $\gamma_\text{\normalfont c}=0$ respectively $\liminf_{w\to\infty}w^{-\gamma_\text{\normalfont c}}\tau(w)>\alpha_\text{\normalfont c}$ if $\gamma_\text{\normalfont c}>0$, where the quantities $\gamma_\text{\normalfont c}$ and $\alpha_\text{\normalfont c}$ are as in Theorem \ref{threshold:res}. Then the system is resilient if 
\[ c_i > \max\Bigg\{\sum_{j\in J}E_{j,i}~\Bigg\vert~J\subset[n], \vert J\vert=\tau(w_i^-)-1\Bigg\} \quad\text{almost surely for all }i\in[n], \]
i.\,e.~capital $c_i$ of bank $i\in[n]$ is larger than the sum of the $\tau(w_i^-)-1$ largest exposures of $i$.
\end{proposition}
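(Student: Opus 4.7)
The plan is to reduce the exposure model to an auxiliary threshold model (as in Subsection~\ref{ssec:special:case:threshold:model}) via a pathwise monotonicity argument, and then invoke Theorem~\ref{threshold:res}. First I would exploit the hypothesis on $c_i$ to obtain a pointwise lower bound on the hypothetical threshold $\tau_i$ from~\eqref{ex1:perc:thres}: the sum of any $\tau(w_i^-)-1$ possible exposures to $i$ stays strictly below $c_i$, so $\tau_i(n) \geq \tau(w_i^-(n))$ almost surely for every $i\in[n]$.

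Next I would couple the exposure cascade with a threshold cascade built on the same realization of the graph $X$, with integer capitals $t_i := \tau(w_i^-(n))$, identical systemic importances, and the same initial default set (the banks with $M_i=0$). Let $(\mathcal{D}_k)$ denote the exposure cascade and $(\mathcal{D}_k^{\ast})$ the threshold cascade. I claim $\mathcal{D}_k \subseteq \mathcal{D}_k^{\ast}$ by induction on $k$. Indeed, for a bank $i$ to default at step $k$ of the exposure cascade one needs $c_i \leq \sum_{j\in\mathcal{D}_{k-1}} X_{j,i} E_{j,i}$, and this sum has at most $\lvert\{j\in\mathcal{D}_{k-1}:X_{j,i}=1\}\rvert$ strictly positive terms. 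By the standing hypothesis such a sum cannot reach $c_i$ unless $\lvert\{j\in\mathcal{D}_{k-1}:X_{j,i}=1\}\rvert\geq \tau(w_i^-) = t_i$, which is exactly the default condition of the threshold cascade; combined with the inductive hypothesis $\mathcal{D}_{k-1}\subseteq\mathcal{D}_{k-1}^{\ast}$ this forces $i\in\mathcal{D}_k^{\ast}$. Hence $\mathcal{S}_n \leq \mathcal{S}_n^{\ast}$ almost surely.

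Finally, I would apply Theorem~\ref{threshold:res} to the auxiliary threshold model: the regularity assumption~\eqref{eqn:regularity:cond:f} on $\tau$ is part of the hypotheses of the proposition, and the conditions on $\gamma_\text{c}$ and $\alpha_\text{c}$ match those of Theorem~\ref{threshold:res} verbatim, so the auxiliary threshold model is resilient in the sense of Definition~\ref{def:resilience}. The pathwise domination $\mathcal{S}_n\leq \mathcal{S}_n^{\ast}$ then transfers resilience to the exposure model. The only nontrivial step is the pathwise monotonicity in the second paragraph; the rest is a direct invocation of Theorem~\ref{threshold:res}, and one merely needs to verify that the coupling respects the ex post shock mechanism, which is immediate by construction.
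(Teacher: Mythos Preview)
Your proposal is correct and follows essentially the same approach as the paper's own proof, which is extremely terse: it simply notes that the capital condition forces the hypothetical threshold $\tau_i \geq \tau(w_i^-)$ and then says that ``coupling the weighted network to the corresponding threshold network yields the result.'' You have spelled out that coupling explicitly via the cascade-level induction $\mathcal{D}_k \subseteq \mathcal{D}_k^{\ast}$, which is a perfectly valid (and arguably clearer) way to make the same monotonicity argument precise.
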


\noindent See Subsection \ref{ssec:proofs:3} for the proof. Analogously, a robust translation of Theorems \ref{thm:functional:nonres} and \ref{thm:Pareto:type} to the exposure model is possible.

Proposition \ref{prop:robust:capital:requirements} requires each bank $i$ to be able to cope with default of its $\tau(w_i^-)$ largest exposures. But as we have seen in the proof of Theorem \ref{threshold:res}, only the thresholds and hence the capitals of large banks in the network matter for resilience. For large banks with many exposures on the other hand one can expect an averaging effect of the exposure sizes to occur if they are not too irregular. Hence, one can presume that in this case multiplying threshold values from the threshold model by average exposure sizes for each bank leads to the same resilience characteristics. We formalize this in Theorem \ref{cor:threshold:res} under Assumption \ref{ass:exposures} on the exposure sequences. This assumption is motivated by the following reasoning:

For each bank $i$, let $\{E_{j,i}\}_{j\in\N\backslash\{i\}}$ be a sequence of i.\,i.\,d.~positive random variables. Let $\mu_i:=\E[E_{\rho_i(1),i}]<\infty$ be their mutual expectation and denote $S_k^i:=\sum_{j=1}^k E_{\rho_i(j),i}$. If there is some $t>1$ such that $\E\left[\vert E_{\rho_i(1),i}\vert^t\right]<\infty$, then by the Baum-Katz-Theorem from \cite{Baum1965} for all $\epsilon>0$,
\begin{equation}\label{eqn:Baum:Katz:1}
k^{t-1}\P\left(S_{k}^i\geq (1+\epsilon)k\mu_i\right) \to 0,\quad \text{as }k\to\infty,
\end{equation}
and for all $x>1$,
\begin{equation}\label{eqn:Baum:Katz:2}
k^{tx-1}\P\left(S_{k}^i\geq\epsilon \mu_i k^x\right) \to 0,\quad \text{as }k\to\infty.
\end{equation}

\begin{assumption}\label{ass:exposures}
Motivated by the above, we assume that for each bank $i\in[n]$ with exposure list $\{E_{j,i}\}_{j\in\N\backslash\{i\}}$ of mutual mean $\mu_i$, we can find $t>1$ such that the convergences in (\ref{eqn:Baum:Katz:1}) and (\ref{eqn:Baum:Katz:2}) hold. Moreover, we assume them to be uniform for $i\in[n]$ (but not necessarily for $\epsilon$ or $x$).
\end{assumption}

\noindent Assumption \ref{ass:exposures} ensures a certain regularity of the exposures without bounding their mean.

\begin{theorem}\label{cor:threshold:res}
Suppose Assumption \ref{ass:tau} for a non-decreasing function $\tau:\R_+\to\N\backslash\{0,1\}$ and such that $W^-\sim\mathrm{Par}(\beta^-,w_\text{min}^-)$ and $W^+\sim\mathrm{Par}(\beta^+,w_\text{min}^+)$ with $\beta^-,\beta^+>2$, $w_\text{min}^-,w_\text{min}^+>0$. The quantities $\gamma_\text{\normalfont c}$ and $\alpha_\text{\normalfont c}$ shall be defined as in Theorem \ref{threshold:res}. Further, assume $c_i>\max_{j\in[n]\backslash\{i\}}E_{j,i}$ almost surely for all $i\in[n]$. Then the following holds:
\begin{enumerate}
\item \label{cor:threshold:res:1} If $\gamma_\text{\normalfont c}<0$, then the system is always resilient.
\end{enumerate}
Now further assume that the exposure lists $\{E_{j,i}\}_{j\in\N\backslash\{i\}}$, $i\in\N$, satisfy Assumption \ref{ass:exposures} for some $t>1$. Then the system is resilient if one of the following holds:
\begin{enumerate}
\setcounter{enumi}{1}
\item \label{cor:threshold:res:2} $\gamma_\text{\normalfont c}=0$ and there exist some $\gamma>0$ 
such that $\liminf_{w\to\infty}w^{-\gamma}\tau(w)>0$ and for all $i\in[n]$, $c_i\geq \tau(w_i^-)\mu_i$ almost surely.
\item \label{cor:threshold:res:3} $\gamma_\text{\normalfont c}>0$, $\liminf_{w\to\infty}w^{-\gamma_\text{\normalfont c}}\tau(w)>\alpha_\text{\normalfont c}$ and for all $i\in[n]$, $c_i\geq \tau(w_i^-)\mu_i$ almost surely.
\end{enumerate}
\end{theorem}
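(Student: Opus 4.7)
My approach is to reduce each part of the theorem to Theorem \ref{threshold:res} by analyzing the random threshold sequence $\pmb{\tau}(n)$ from equation (\ref{ex1:perc:thres}) induced by the capital and exposure data. The contagion dynamics in the exposure model depend only on these thresholds together with the weights, so the conclusion will follow if the empirical distribution $G_n$ converges to a limit $G$ meeting the hypotheses of the threshold-model resilience criterion.

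Part \ref{cor:threshold:res:1} is the direct case: the requirement $c_i > \max_j E_{j,i}$ forces $\tau_i \geq 2$ almost surely, hence the limiting threshold $T$ satisfies $T \geq 2$ almost surely. Monotonicity of the cascade in the thresholds allows comparison with the threshold model where $\tau \equiv 2$, which falls into the regime of Theorem \ref{threshold:res}\,\ref{threshold:res:1} since $\gamma_\text{c} < 0$.

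Parts \ref{cor:threshold:res:2} and \ref{cor:threshold:res:3} hinge on showing that $\tau_i$ concentrates near $\tau(w_i^-)$ whenever $\tau(w_i^-)$ is large. Writing $S_k^i := \sum_{\ell \leq k} E_{\rho_i(\ell), i}$, the event $\{\tau_i < (1-\epsilon)\tau(w_i^-)\}$ is precisely $\{S_k^i \geq c_i\}$ for $k = \lfloor (1-\epsilon)\tau(w_i^-) \rfloor$. Because $c_i \geq \tau(w_i^-)\mu_i = (1+\epsilon')k\mu_i$ with $\epsilon' = \epsilon/(1-\epsilon)$, the Baum-Katz rate (\ref{eqn:Baum:Katz:1}) provided by Assumption \ref{ass:exposures} gives uniform-in-$i$ control: $\P(\tau_i < (1-\epsilon)\tau(w_i^-)) \to 0$ as $\tau(w_i^-) \to \infty$. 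Since the hypotheses on $\tau$ in parts \ref{cor:threshold:res:2} and \ref{cor:threshold:res:3} imply $\tau(w) \to \infty$ as $w \to \infty$, this yields that the limiting $T$ stochastically dominates $(1-\epsilon)\tau(W^-)$ on the upper tail of $W^-$. Applying Theorem \ref{threshold:res}\,\ref{threshold:res:2} respectively \ref{threshold:res:3} to $(1-\epsilon)\tau$ and then sending $\epsilon \to 0$ gives resilience; in part \ref{cor:threshold:res:3} one chooses $\epsilon$ small enough that $(1-\epsilon)\liminf_{w\to\infty} w^{-\gamma_\text{c}}\tau(w) > \alpha_\text{c}$, while in part \ref{cor:threshold:res:2} the polynomial lower bound $\tau(w)\gtrsim w^\gamma$ automatically produces $\liminf_{w\to\infty}(1-\epsilon)\tau(w)=\infty>\alpha_\text{c}+1$.

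The main obstacle I anticipate is making the above concentration uniform in $i$ and passing it from individual banks to the empirical distribution function $G_n$, while simultaneously handling the small-weight regime where $\tau(w_i^-)$ is bounded and Baum-Katz control is weak. Here I would exploit the observation, central to the proof of Theorem \ref{threshold:res}, that the curve $d(z)$ near $z=0$ is governed by the upper tail of $W^-$; banks with small $w_i^-$ contribute negligibly to $d(z)$ as $z \to 0$, so imprecision in their $\tau_i$ is harmless. The auxiliary Baum-Katz bound (\ref{eqn:Baum:Katz:2}) is presumably what is needed to handle cases where $c_i/\mu_i$ grows faster than linearly in the weight, ensuring the uniformity required by Assumption \ref{ass:exposures} is strong enough to survive the passage to the limiting distribution $G$.
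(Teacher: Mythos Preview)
Your overall strategy coincides with the paper's: decompose $f(z)=\E[W^+\psi_T(W^-z)]-z$ into the contribution from $\{T>(1-\epsilon)\tau(W^-)\}$, which is controlled exactly as in the proof of Theorem~\ref{threshold:res}, and the ``bad'' contribution from $\{T\leq(1-\epsilon)\tau(W^-)\}$, which one must show is $o(z)$. Part~\ref{cor:threshold:res:1} is handled correctly.

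The genuine gap is in your treatment of the bad event in Part~\ref{cor:threshold:res:3}. A single use of~\eqref{eqn:Baum:Katz:1} gives only $\P(\tau_i<(1-\epsilon)\tau(w_i^-))\leq K\,\tau(w_i^-)^{-(t-1)}\sim K(w_i^-)^{-\gamma_\text{c}(t-1)}$. Combining this with the crude bound $\psi_T\leq\psi_2$ leads to needing $\E[W^+(W^-)^{1-\gamma_\text{c}(t-1)}]<\infty$, which under the comonotone majorant requires $\gamma_\text{c}(t-1)>\gamma_\text{c}$, i.e.\ $t>2$. Since Assumption~\ref{ass:exposures} only provides some $t>1$, the argument fails for $1<t\leq 2$; a direct computation shows that in this range the bad contribution is \emph{not} $o(z)$ under the single-slice bound. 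The paper resolves this by slicing the bad set into layers $\{(W^-)^{(N-1)\delta}<T\leq(W^-)^{N\delta}\}$ for $N\delta<\gamma_\text{c}$: on each layer the lower bound on $T$ localises (via the Chernoff estimate from Theorem~\ref{threshold:res}) the relevant range of $W^-$, while~\eqref{eqn:Baum:Katz:2} with $x=\gamma_\text{c}/(N\delta)>1$ yields the sharper decay $\P(\tau_i\leq(w_i^-)^{N\delta})\leq K(w_i^-)^{N\delta-t\gamma_\text{c}}$. Only the top layer $\{(W^-)^{\gamma_\text{c}-\delta}<T\leq(1+\epsilon)\alpha_\text{c}(W^-)^{\gamma_\text{c}}\}$ uses~\eqref{eqn:Baum:Katz:1}. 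So your guess about the role of~\eqref{eqn:Baum:Katz:2} is off: it is not about $c_i/\mu_i$ growing fast in the weight (it grows like $w^{\gamma_\text{c}}$, sublinearly), but about the regime where the number of summands $k\sim T$ is far below $\tau(w_i^-)$, so that $c_i/\mu_i\sim k^x$ with $x>1$.

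For Part~\ref{cor:threshold:res:2} your simpler approach does suffice, because there $\tau(w)\gtrsim w^\gamma$ with $\gamma>\gamma_\text{c}=0$, giving the extra room $\gamma(t-1)>0=\gamma_\text{c}$ needed for integrability; the paper nonetheless runs the same slicing argument with $\gamma$ in place of $\gamma_\text{c}$.
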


\noindent See Subsection \ref{ssec:proofs:3} for the proof. Theorem \ref{cor:threshold:res} provides the banks with a formula that is easy to use and only requires the regulator to announce $\alpha_\text{c}$ and $\gamma_\text{c}$. Resilient capital requirements are then determined according to average exposure size $\mu_i$ and number of exposures $d_i^-\sim w_i^-$. Since the average exposure size $\mu_i$ is proportional to $(d_i^-)^{-1}$ while the factor $\alpha_\text{c}(d_i^-)^{\gamma_\text{c}}$ is sublinear in $d_i^-$, in particular a deconcentration of loans is favorable for the banks to reduce systemic risk charges.

\begin{remark}
Theorem \ref{cor:threshold:res} extends Theorem \ref{threshold:res} to the exposure model under Assumption \ref{ass:exposures} for the exposure sequences. By the same means, also Theorems \ref{thm:functional:nonres} and \ref{thm:Pareto:type} can be extended.
\end{remark}

\section{Simulation Study}\label{simulation:study}
All previous chapters have been formulated in the limit as the number of banks $n$ tends to $\infty$ and the fraction of initially defaulted banks $p$ tends to 0. It is hence reasonable to investigate whether the results are good approximations also for real finite networks with only a few thousand institutions and experience a shock of a positive fraction of banks. Since specific transactions between banks are not disclosed to the public there is no data basis for us to investigate real networks, however. Instead, we discuss our findings by simulating networks.

\subsection{Simulations for the Threshold Model}
For our simulations we make use of the observation in \cite{Cont2013} that the empirical in- and out-degrees as well as the exposure sizes in the Brazilian banking network are power law distributed. For November 2008, the authors of \cite{Cont2013} estimated the power law exponents $\beta^-=2.132$ and \linebreak$\beta^+=2.8861$ for the degree sequences and $\xi=2.5277$ for the exposures. In our weight-based model, these degree distributions are obtained by choosing in- and out-weights power law distributed with exponents $\beta^-$ and $\beta^+$ as well. In addition to this, we assume them to be comonotone and Pareto distributed with minimal weights $w_\text{min}^-=w_\text{min}^+=1$. 

In a first simulation, we consider a threshold model with above weight parameters and assume absence of contagious links, that is, we set $\tau_i=2$ for all $i\in[n]$. In order to start the cascade process, we assume initial default of $p=1\%$ uniformly chosen banks in the network. We then simulate the default process for $n\in\{100k\,:\, k\in[100]\}$ and $100$ different configurations of the random network for each $n$. The results for the final fraction of defaulted banks are plotted in Figure \ref{fig:Convergence}. As can be seen from Figure \ref{fig:TheoreticalFraction}, the theoretical value of the final default fraction as $n$ tends to infinity can be determined to be approximately $84.54\%$. This value is drawn as a red line in Figure \ref{fig:Convergence}. Already for small $n$, most of the simulations yield results that are close to this theoretical value and the networks can hence be understood as being non-resilient. As $n$ grows to $10^4$ the final fractions become even more precise. In particular, there is not a single resilient sample anymore for $n\geq500$.

\begin{figure}[t]
    \hfill\subfigure[]{\includegraphics[width=0.4\textwidth]{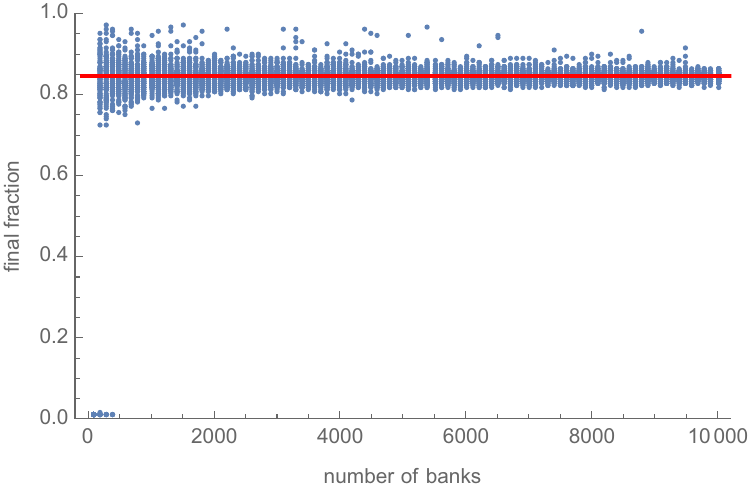}\label{fig:Convergence}}
    \hfill\subfigure[]{\includegraphics[width=0.4\textwidth]{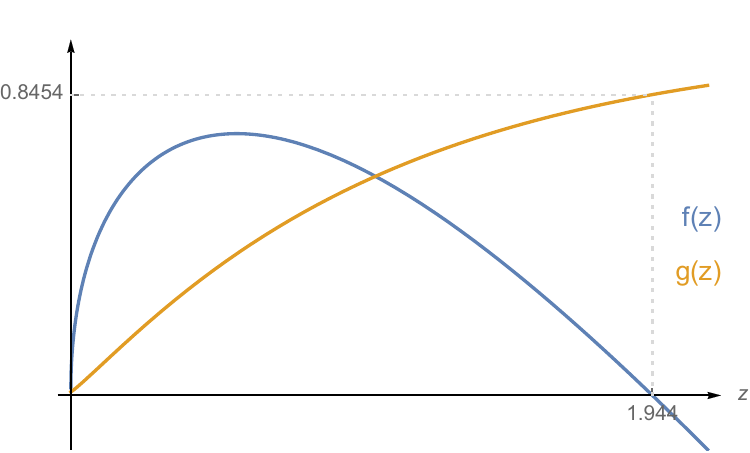}\label{fig:TheoreticalFraction}}\hfill
\caption{(a) Convergence of the final fraction of defaulted banks in the threshold model for networks of finite size. (b) Determination of the theoretical final default fraction in the threshold model for networks whose sizes grow to infinity and with $p=1\%$ initial defaults and constant threshold $2$. Blue: $f(z)=(1-p)\E[W^+\psi_2(W^-z)]+p\E[W^+]-z$ with root $\hat{z}\approx 1.94433$. Orange: $g(z)=(1-p)\E[\psi_2(W^-z)]+p$ with $g(\hat{z})\approx 0.845434$.}\label{fig:Convergence:TheoreticalFraction}
\end{figure}
\begin{figure}[t]
	\hfill\subfigure[]{\includegraphics[width=0.4\textwidth]{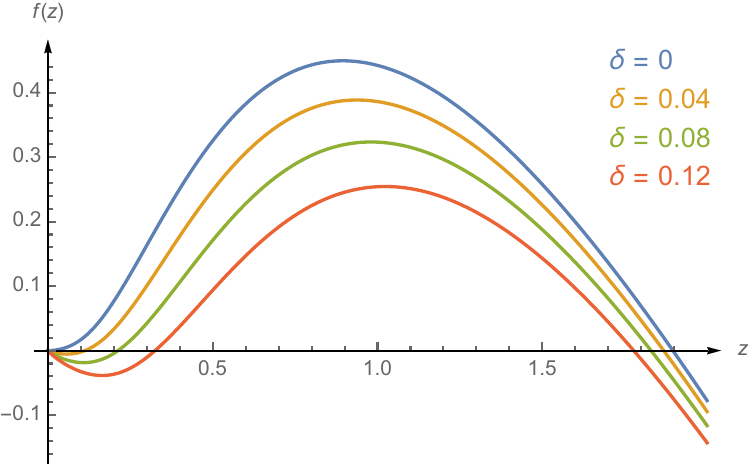}\label{fig:influenceOfDelta}}
    \hfill\subfigure[]{\includegraphics[width=0.4\textwidth]{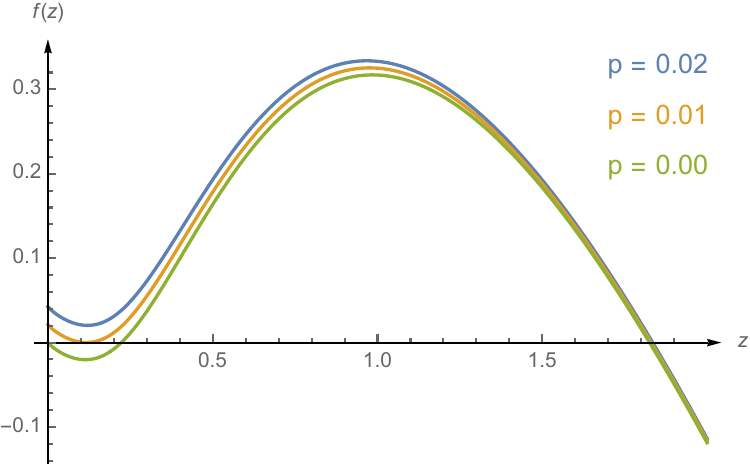}\label{fig:influenceOfP}}\hfill
\caption{(a) Influence of $\delta$ on the shape of $f(z)=\E[W^+\psi_T(W^-z)]-z$ with capital requirements $\tau_i=\max\{2,\lfloor (\alpha_\text{c}(1+\delta) (w_i^-)^{\gamma_\text{c}(1+\delta)}\rfloor\}$. (b) Influence of $p$ on the shape of function $f(z)=(1-p)\E[W^+\psi_T(W^-z)]+p\E[W^+]-z$ for the example of $\delta=0.0839$.}\label{fig:influenceOfDelta:influenceOfP}
\end{figure}

Instead of the absence of contagious links, Theorem \ref{threshold:res} postulates 
certain threshold requirements to make our network model resilient to small initial shocks. Keeping the previously specified network parameters unchanged, we compute $\alpha_\text{c}\approx 2.13$ and $\gamma_\text{c}\approx0.468$. A natural choice for the threshold of bank $i\in[n]$ is then $\tau_i=\max\{2,\lfloor \alpha (w_i^-)^\gamma\rfloor\}$, where $\alpha=\alpha_\text{c}(1+\delta)$, $\gamma=\gamma_\text{c}(1+\delta)$ and $\delta\in[-1,\infty)$ denotes a (possibly negative) buffer. By Theorems \ref{threshold:res} and \ref{thm:functional:nonres}, networks are resilient to initial shocks for $\delta>0$ and non-resilient for $\delta<0$. The influence of $\delta$ on $f(z)$ can be seen in Figure \ref{fig:influenceOfDelta}. In particular, one notes that resilience for positive $\delta$ stems from the negative hump of $f(z)$ subsequent to zero. Further note, however, that resilience is only guaranteed to shocks whose size tends to zero. If the network is shocked by a strictly positive initial default fraction $p$, the final default cluster will only be small if $\delta>\delta_p$ for some $\delta_p>0$. This is because the functional $f(z)=(1-p)\E[W^+\psi_T(W^-z)]+p\E[W^+]-z$ for a uniformly shocked network depends on $p$. The influence of $p$ on $f(z)$ can be seen in Figure \ref{fig:influenceOfP}. In order for a network to be resilient to an initial shock of size $pn$ the hump subsequent to $0$ needs to become negative in Figure \ref{fig:influenceOfP}. It is always possible to determine numerically the least necessary buffer $\delta$ to make a system resilient to a shock of initial default fraction $p$ or less (see Table \ref{tab:deltaFromP} for the corresponding values of $\delta$ for $p=0.001k$, $k\in[10]$). Note that a buffer of $\delta=0.0839$ yields $\alpha=2.31$ and $\gamma=0.507$ and hence the thresholds required to make the system resilient to shocks of $1\%$ are still strongly sublinear.
\begin{table}[h]
	\caption{List of values for buffer $\delta$ corresponding to initial default of $\lfloor pn\rfloor$ banks}
		\hfill\begin{tabular}{l|c|c|c|c|c|c|c|c|c|c}
			$p$ $[\%]$ & 0.1 & 0.2 & 0.3 & 0.4 & 0.5 & 0.6 & 0.7 & 0.8 & 0.9 & 1.0\\\hline
			$\delta$ $[\%]$ & 2.35 & 3.44 & 4.30 & 5.04 & 5.71 & 6.36 & 6.89 & 7.42 & 7.91 & 			8.39
		\end{tabular}
	\label{tab:deltaFromP}
	\hfill
	\vspace{-0mm}
\end{table}

\begin{figure}[t]
    \hfill\subfigure[]{\includegraphics[width=0.4\textwidth]{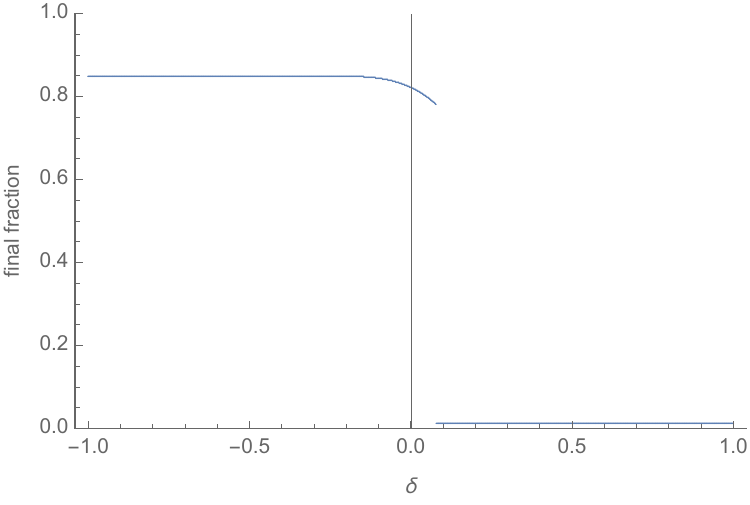}\label{fig:SimulationDelta1000000}}
    \hfill\subfigure[]{\includegraphics[width=0.4\textwidth]{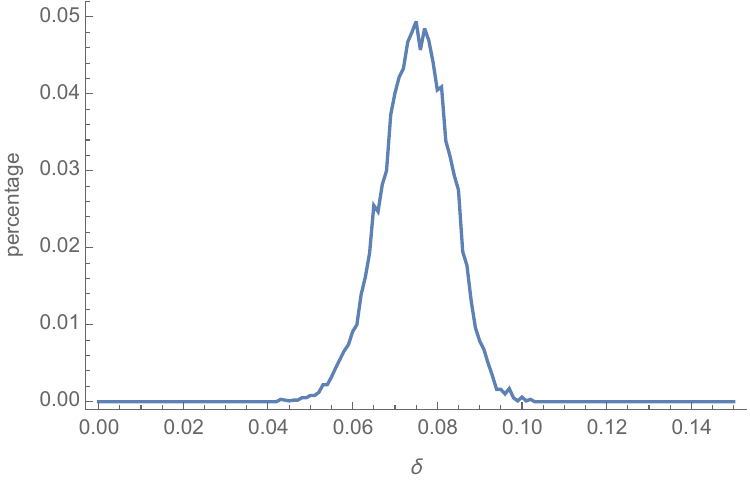}\label{fig:DeltaDistribution_0.01_1000000}}\hfill
\caption{(a) A typical result for the final fraction in a network of $10^6$ banks with initial default fraction of $p=1\%$ as $\delta$ varies between $-1$ and $1$ in steps of $10^{-3}$. (b) The distribution of jump points for $10^4$ networks of size $n=10^6$ with initial default fraction $p=1\%$.}\label{fig:SimulationDelta1000000:DeltaDistribution_0.01_1000000}
\end{figure}

\noindent We want to verify above results by simulations. For this, we simulate a very large network consisting of $n=10^6$ banks and keeping the network topology constant we let $\delta$ vary between $-1$ and $1$ in steps of $10^{-3}$. For each simulated network, we then find that it becomes resilient for $\delta$ large enough. This becomes visible by a jump of the final fraction of defaulted banks at this particular $\delta$ as illustrated in Figure \ref{fig:SimulationDelta1000000} for a sample network. 

Keeping track of the values of $\delta$ at which the final fraction drops near $p=1\%$ for $10^4$  simulated networks yields the distribution shown in Figure \ref{fig:DeltaDistribution_0.01_1000000}. It shows a peak at about \mbox{$\delta=0.076$} and hence supports our theoretical findings from above. Deviations from the theoretical value $\delta_{0.01}\approx 0.0839$ are small and can be explained by the finite (albeit very large) network size.

\begin{figure}[t]
	\hfill\includegraphics[width=0.4\textwidth]{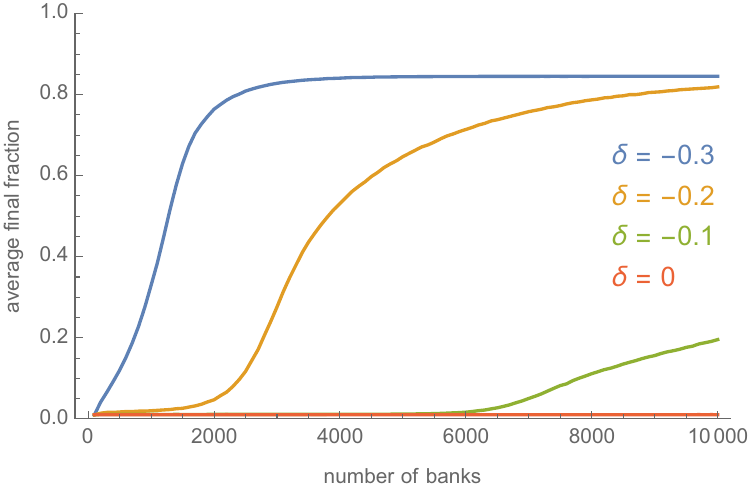}\hfill
		\caption{Average final fraction of defaulted banks in finite networks}
		\label{fig:gemittelt}
\end{figure}

Having looked at the theoretical capital requirements for very large networks, it is now sensible to turn our attention to networks of a few thousand banks as they arise in the real world. Figure \ref{fig:gemittelt} shows the final fraction in networks of size $n\leq10^4$ with initial default fraction $p=0.01$ for $\delta$ between $-0.3$ and $0$. For each $n$, we averaged over $10^5$ simulations. The figure shows that networks of size $n\leq10^4$ are already resilient for $\delta=0$. Even for $\delta=-0.2$ the network is rather resilient if $n\leq 2,000$ resp.~for $\delta=-0.1$ if $n\leq 6,000$. That is, our result is robust in the sense that already lower threshold requirements are sufficient to make the systems resilient to small shocks. The deviations stem from the relatively small network sizes of only a few thousand. Here, rare extreme values of vertex weights fail to appear despite the missing second moment condition or those large banks are not infected by the uniform initial infection. 

For managing systemic risk in real networks it might, however, be of interest not only how some uniform initial default influences the system but also how the default of the largest banks does. In a further simulation, we hence choose the $\lfloor pn\rfloor$ largest (by weights) banks in the network to default at the beginning. The function $f(z)$ then qualitatively keeps its shape as in Figure \ref{fig:influenceOfDelta:influenceOfP} but is shifted upwards. Again, we can compute corresponding values of $\delta$ and $p$ numerically. We list our results in Table \ref{tab:deltaFromPWithLargest}. As one expects, the values of $\delta$ are larger in this case than the ones we obtained for uniform infection in Table \ref{tab:deltaFromP}, but only by a factor of about $2$ and as before the resulting capital requirements are strongly sublinear.

\begin{table}[h]
	\caption{List of values for buffer $\delta$ corresponding to initial default of the $\lfloor pn\rfloor$ largest banks}
		\hfill\begin{tabular}{l|c|c|c|c|c|c|c|c|c|c}
		$p$ $[\%]$& 0.1 & 0.2 & 0.3 & 0.4 & 0.5 & 0.6 & 0.7 & 0.8 & 0.9 & 1.0\\\hline
		$\delta$ $[\%]$& 4.09 & 6.05 & 7.61 & 8.90 & 10.0 & 11.0 & 11.9 & 12.7 & 13.4 & 14.1
		\end{tabular}\hfill
	\label{tab:deltaFromPWithLargest}
\end{table}

\subsection{Simulations for the Exposure Model}

We can now turn to the simulation of a weighted network as in the exposure model. In addition to the network parameters of the threshold model in the previous subsection, we assume that for $i\neq j$, exposures $E_{j,i}$ are given by $E_{j,i}\stackrel{d}{=}E_i$ for Pareto distributed random variables $E_i$ with exponent $\xi=2.5277$ as in \cite{Cont2013} and minimal value $E_{\text{min},i}$. The exposures are assumed independent of each other and the network topology. The minimal exposures $E_{\text{min},i}$, can be chosen arbitrarily since they act as a constant factor for all exposures $E_{j,i}$ and capital $c_i$.

\begin{figure}[t]
    \hfill\subfigure[]{\includegraphics[width=0.4\textwidth]{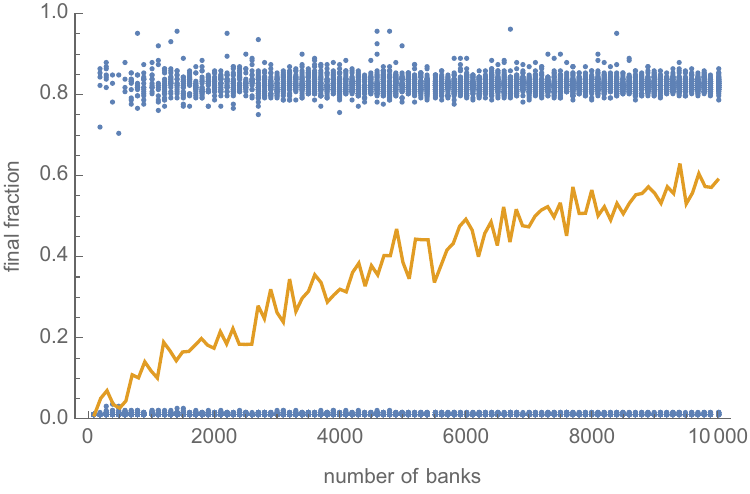}\label{fig:exposure:model}}
    \hfill\subfigure[]{\includegraphics[width=0.4\textwidth]{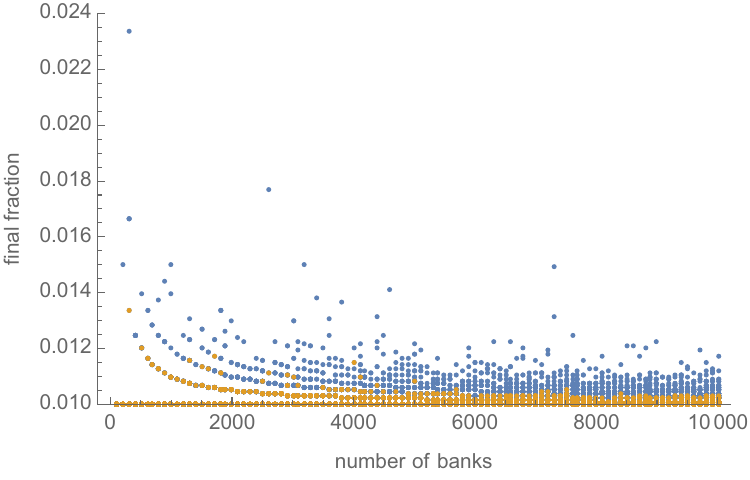}\label{fig:capital:requirements}}\hfill
\caption{(a) Scatter plot of the final fraction of defaulted banks for weighted networks of finite size without contagious links. Orange: average over all $100$ configurations for \mbox{each size.} \mbox{(b) Scatter} plot of the final fraction of defaulted banks for weighted networks of finite size. Blue: Capitals determined by Theorem \ref{cor:threshold:res}. Orange: Capitals determined by Proposition \ref{prop:robust:capital:requirements}}
		\label{fig:exposure:model:capital:requirements}
\end{figure}

In a first simulation, again we assume absence of contagious links but nothing more. That is, we first simulate the network skeleton and the edge-weights independently and then determine the banks' capitals as their largest exposure value plus some small buffer $\epsilon>0$. For our simulation, we choose $\epsilon=10^{-3}\E[E_i]=10^{-3}E_{\text{min},i}(\xi-1)/(\xi-2)$. As before, we assume initial default of $p=1\%$ uniformly chosen banks in the network and simulate the default process for $n\in\{100k\,:\, k\in[100]\}$ and $100$ different configurations of the random network for each $n$. The results for the final fraction of defaulted banks are plotted in Figure \ref{fig:exposure:model}. We notice that already for small network sizes there are some non-resilient network samples with final default fraction of about $80\%$. As the number of banks $n$ grows, also the probability that the networks are non-resilient significantly increases. This can be seen from the orange curve in Figure \ref{fig:exposure:model} which shows the average final default fraction taken over all $100$ configurations. The simulation supports our analytical result that for networks without a second moment condition on their degree sequences, only the absence of contagious links does not ensure resilience.

In a second simulation, we keep the network topology and the exposure sizes from the first simulation unchanged and choose capitals according to the formula in Proposition \ref{prop:robust:capital:requirements} with $\tau(w)=\max\{2,\lfloor\alpha w^\gamma\rfloor\}$ for $\alpha=\alpha_\text{c}(1+\delta)$, $\gamma=\gamma_\text{c}(1+\delta)$ and $\delta=8.39\%$ as in Table \ref{tab:deltaFromP}. As can be seen from Figure \ref{fig:capital:requirements}, already for typical network sizes of less than $10^4$, these capital allocations make the system resilient (note the axis scale). The maximal final fraction we observed was given by $1.33\%$. As mentioned before, the capital requirements in Proposition \ref{prop:robust:capital:requirements} are too conservative in general, however. In another simulation, we hence choose capitals as determined in Theorem \ref{cor:threshold:res} again for $\tau(w)=\max\{2,\lfloor\alpha w^\gamma\rfloor\}$. Figure \ref{fig:capital:requirements} shows that under these requirements the fundamental defaults still do not spread through the network. All observed final fractions were less or equal $2.33\%$. However, keeping track of the total capitalization of the system further reveals that the capital requirements from Theorem \ref{cor:threshold:res} only amount to about $61\%$ of the ones from Proposition \ref{prop:robust:capital:requirements} for our chosen network parameters.

\section{Proofs}\label{sec:proofs}
\subsection{Proofs for Section \ref{random:graph}}\label{ssec:proofs:2}
\begin{proof}[Proof of Lemma \ref{lem:f:continuous}]
Continuity of $f$ follows directly from Lebesgue's dominated convergence theorem, noting that $W^+$ is integrable by Assumption \ref{vertex:assump}. Further, $f(0)=\E[W^+\1_{\{T=0\}}]>0$ and $\lim_{z\to\infty}f(z;(W^-,W^+,T))=-\infty$. Hence by the intermediate value theorem function $f$ must have a positive root $\hat{z}$. Representation (\ref{eqn:integral:representation}) follows by an application of Fubini's theorem:
\begin{align}\label{eqn:integral:representation}
f(z) &= \E\left[W^+\1_{\{T=0\}} + \int_0^z W^-W^+\P\left(\mathrm{Poi}(W^-\xi)=T-1\right)\1_{\{T\geq1\}}\dd\xi\right]-z\nonumber\\
&= \E\left[W^+\1_{\{T=0\}}\right] + \int_0^z\left(\E\left[W^-W^+\phi_T(W^-\xi)\right]-1\right)\dd\xi\qedhere
\end{align}
\end{proof}

\begin{proof}[Proof of Theorem~\ref{thm:asymp:1}]
We want to make use of Theorem \ref{thm:threshold:model} for the threshold model. 
Thus we describe an alternative description of default contagion compared to Subsection \ref{ssec:default:contagion:systemic:importance}:

At the beginning we declare all initially defaulted vertices to be \emph{defaulted} but yet \emph{unexposed}. At each step, a single defaulted, unexposed vertex $i \in [n]$ is picked and exposed to its neighbors, i.\,e.~weighted edges to its neighbors are drawn. If bank $j$ goes bankrupt due to the new edge that is sent from $i$, it is added to the set of defaulted, unexposed vertices. Otherwise, the capital of $j$ is reduced by the amount $e_{i,j}$. Afterwards, we remove $i$ from the set of unexposed vertices.

We keep track of the following sets and quantities at different steps $0\leq t\leq n-1$:
\begin{enumerate}[leftmargin=*,label=\alph*.]
\item $U(t)\subset[n]$: the unexposed vertices at step $t$. We set $U(0):=\{ i \in [n] \,:\, c_i=0 \}$.
\item $N (t)\subset[n]$: the solvent vertices at step $t$. At $t=0$, we set $N(0):=[n]\backslash U(0)$.
\item The updated capitals $\{\tilde{c}_i (t) \}_{i\in [n]}$ with $\tilde{c}_i(0)=c_i$ for all $i\in[n]$.
\end{enumerate}
At step $t\in [n-1]$ the sets and quantities are updated according to the following scheme:
\begin{enumerate}[leftmargin=*]
\item \label{chose:rule} Choose a vertex $v\in U(t-1)$ according to any rule.
\item \label{update:capitals:rule} Expose $v$ to all of its neighbors in $N (t-1)$. That is, for all vertices $w\in N(t-1)$ set $\tilde{c}_w (t):=\max \{ 0 ,\tilde{c}_w (t-1) - e_{v,w} \}$. Note that $\tilde{c}_w (t)= \tilde{c}_w (t-1)$ if $e_{v,w}=0$.
\item \label{update:sets:rule} Set $N (t):= \{ i \in N(t-1) \,:\,  \tilde{c}_i (t)> 0 \}$ and $U(t):= ( U(t-1)\setminus \{v\} ) \cup \{ i \in N(t-1) \,:\,  \tilde{c}_i (t)= 0 \}$.
\end{enumerate}
Edges that are sent to already insolvent vertices are not exposed (but they could). Above steps are repeated until step $\hat{t}$, the first time that $U(t)=\emptyset$. Note that $\hat{t}$ is the final number of infected vertices independent of the rule chosen in Step~\ref{chose:rule}. Further, we can complete the exposition of the entire graph by exposing also links to defaulted vertices and links sent from vertices in $N (\hat{t})$.

Now, observe that the rule chosen in \ref{chose:rule} defines a permutation of the $\hat{t}$ elements of $[n]$ that go bankrupt. Further, for each $j \in [n]$ it defines an ordering of the set of insolvent vertices that send an edge to $j$, describing the order in which the edges are exposed. This ordering can be completed to a bijective map $\pi_j:[n-1]\to[n]\setminus \{ j \}$ by adding vertices that either send no edge to $j$ or are still solvent in the end. To be precise, let $\pi_j$ denote the ordering for vertex $j$ and let this vertex (after the exposition) have $l$ links sent from insolvent vertices. Then the entries $\pi_j(1),\dots , \pi_j(l)$ list defaulted neighbors in $[n]\backslash\{j\}$ in the order their edges are sent to vertex $j$. The entries $\pi_j(l+1),\dots,\pi_j(n-1)$ are, in their natural order, the remaining vertices in $[n]\backslash\{j\}$.

In order to reduce the model to the threshold model from Subsection \ref{ssec:special:case:threshold:model}, we now want to give a meaning to the so far only hypothetical threshold values $\tau_i$, $i\in[n]$. The idea is to construct a new random graph that has the same distribution (also of the threshold) as the graph constructed in Subsection~\ref{ssec:exposure:model} but with thresholds that have a direct meaningful interpretation:

We work on the same probability space as before but instead of assigning weight $E_{i,j}$ to a potential edge sent from $i\in[n]\backslash\{j\}$ to $j$, now the $i-$th ($i\in[n-1]$) edge that is sent to vertex $j$ during the sequential exposition described above shall receive weight $E_{\rho_j(i),j}$, where as before $\rho_j$ is the natural enumeration of $[n]\backslash\{j\}$. That is, edge-weights are not linked to the natural indices of their vertices anymore, but instead to the order of the exposition of the edges. One notes, however, that the random graph constructed that way has the same distribution as the random graph constructed before. To see this, observe that by the sequential procedure described by the orderings $\{\pi_j\}_{j\in[n]}$ and the assignment of exposures as described above, a potential edge sent from vertex $i$ to vertex $j$ is now assigned the edge-weight $E_{\rho_j(\pi_j^{-1}(i)),j}$. By exchangeability of the lists $\{E_{i,j}\}_{i\in[n]\backslash\{j\}}$ for $j\in[n]$, the new random variables $\{E_{\rho_j(\pi_j^{-1}(i)),j}\}_{i\in[n]\backslash\{j\}}$ have the same multivariate distribution as $\{E_{i,j}\}_{i\in[n]\backslash\{j\}}$. Obviously, also the new exposures are independent of the edge-indicator functions $\{X_{i,j}\}_{i,j\in[n]}$.

Hence, both constructions result in the same distribution for the random graph. Further, note that the assignment of edge-weights has been conducted in such a way that the threshold values in both versions of the network coincide. As before, they are given as
\[ \tau_i(n):= \inf \{ s \in \{0\}\cup[n-1] \,:\, \sum_{l\leq s} E_{\rho_i(l),i} \geq c_i \},\quad i\in[n].\]

\noindent In the new random graph, however, the thresholds $\tau_i$ have the interpretation of actual thresholds meaning that bank $i$ goes bankrupt at the $\tau_i$-th of one of its neighbors' default. The sequential description of the cascade process has then the advantage that we can reduce it to the threshold model as described in Subsection \ref{ssec:special:case:threshold:model}. We can replace the capitals $\tilde{c}_i (t)$, which represent monetary thresholds, by integer values $\tilde{\tau}_i(t)$ (we set $\tilde{\tau}_i(0):=\tau_i$), which count numbers of neighbors, and alter Steps \ref{update:capitals:rule} and \ref{update:sets:rule} in the description of the sequential cascade process according to the rule that if there is an edge sent from $v$ to $w$ ($e_{v,w}>0$), then set $\tilde{\tau}_w (t):=\tilde{\tau}_w (t-1) - 1$. If there is no edge from $v$ to $w$ ($e_{v,w}=0$), set $\tilde{\tau}_w (t):= \tilde{\tau}_w (t-1)  $.
Then the sets $N (t)$ and $U(t)$ are defined by $N (t):= \{ i \in N(t-1) \,:\,  \tilde{\tau}_i (t)> 0 \}$ respectively $U(t):= \left(U(t-1)\backslash\{v\}\right)\cup \{ i \in N(t-1) \,:\,  \tilde{\tau}_i (t)= 0 \}$. Everything else stays unchanged. Note that the resulting threshold values $\tilde{\tau}_i(t)$ are only valid for an exposition in the order as specified above. In the threshold model, however, we are free to choose exactly the same rule as we chose in Step \ref{chose:rule} of our model since this does not affect the final set of defaulted vertices. Hence, we can replace our exposure model by the threshold model from Subsection \ref{ssec:special:case:threshold:model}, resulting in the same final set of defaulted vertices.
In Theorem \ref{thm:threshold:model}, a regularity condition on the capital (here threshold) distribution $T$ is required. This is ensured after conditioning on the values of $\{\tau_i\}_{i\in[n]}$ by Assumption~\ref{vertex:assump} that $G_n(x,y,v,l)$ converges to $G(x,y,v,l)$ almost surely for all $(x,y,v,l)$. Applying Theorem \ref{thm:threshold:model} hence yields the desired statement.
\end{proof}

\begin{proof}[Proof of Theorem~\ref{thm:asymp:2}]
Part \ref{thm:asymp:2:1} follows from Theorem \ref{thm:threshold:model} by the same arguments as before.

In order to prove the second part, we will apply an additional small shock to the system such that each bank $i$, regardless of its attributes $w_i^-$, $w_i^+$ and $\tau_i$, has its capital $c_i$ and hence its threshold $\tau_i$ set to $0$ with probability $p$, where $p$ is some fixed small number. The new limiting distribution of the system is then given by $(W^-,W^+,TM_p)$, where $M_p$ is a $\{0,1\}$-valued random variable independent of $(W^-,W^+,T)$ and with $\P(M_p=0)=p$. Instead of $f(z)=f(z;(W^-,W^+,T))$ we then have to consider the function
\[ f_p(z):=f(z;(W^-,W^+,TM_p)) = p(\E[W^+]-z)+(1-p)f(z). \]
Assuming $P(T=0)<1$ (the case $\P(T=0)=1$ is trivial), it holds $f_p(z)>f(z)$ and hence we conclude that the first positive root $\hat{z}_p$ of $f_p(z)$ is larger than $z^*$. By definition of $z^*$ we further derive that $\hat{z}_p\to z^*$ as $p\to0$. The idea is therefore to choose $p$ in such a way that $\hat{z}_p$ satisfies the assumptions of Theorem \ref{thm:asymp:1} and then conclude by coupling the original system with the additionally shocked one to derive $n^{-1}\mathcal{S}_n\leq n^{-1}\mathcal{S}_n^{(p)}$, where $\mathcal{S}_n^{(p)}:=\sum_{i\in\mathcal{D}_n^{(p)}}s_i$ and $\mathcal{D}_n^{(p)}$ denotes the set of finally defaulted vertices in the additionally shocked system. Since $z^*$ is a root of the continuously differentiable function $f(z)$ it must hold $d(z^*)\leq0$. We distinguish two cases:

In the first case, we assume that $\kappa:=d(z^*)<0$. Then by continuity of $d(z)$ on a neighborhood of $z^*$, also
\[ d_p(z):=\E\left[W^-W^+\phi_{TM_p}(W^-z)\right]-1 
 < \frac{\kappa}{2} < 0 \]
on a neighborhood of $\hat{z}_p$ for $p$ small enough. As indicated above, an application of Theorem \ref{thm:asymp:1} together with a coupling argument then yields
\begin{align*}
n^{-1}\mathcal{S}_n &\leq n^{-1}\mathcal{S}_n^{(p)}= \E\left[S\psi_{TM_p}(W^-\hat{z}_p)\right] + o_p(1)\leq \E\left[S\psi_T(W^-\hat{z}_p)\right] + \E[S\1_{\{M_p=0\}}] + o_p(1)\\
&\leq \E\left[S\psi_T(W^-z^*)\right] + \epsilon + o_p(1)
\end{align*}
by continuity of $\E\left[S\psi_T(W^-z)\right]$ ($S$ is assumed integrable) and choosing $p$ small enough.

In the second case, we have $0 = d(z^*) = \lim_{\epsilon\to0}\epsilon^{-1} f(z^*+\epsilon)$. For $\tilde{\epsilon}>0$, let
\[ \delta(\tilde{\epsilon}) :=  -\inf_{0<\epsilon\leq\tilde{\epsilon}}\epsilon^{-1}f(z^*+\epsilon),\]
which is positive for all $\tilde{\epsilon}$ by definition of $z^*$. We can therefore find $\tilde{\epsilon}>0$ arbitrarily small such that $\delta(\epsilon)<\delta(\tilde{\epsilon})$ for all $\epsilon<\tilde{\epsilon}$. We then derive that
\[ 0 \leq f(z^*+\epsilon) + \delta(\epsilon)\epsilon \leq f(z^*+\epsilon) + \delta(\tilde{\epsilon})\epsilon \]
for all $\epsilon\leq\tilde{\epsilon}$ with equality only for $\epsilon=\tilde{\epsilon}$. Hence at $\epsilon=\tilde{\epsilon}$ the derivative of the last term must be non-positive, i.\,e.~$d(z^*+\tilde{\epsilon})\leq -\delta(\tilde{\epsilon}) < 0$. By continuity, also $d(z) \leq -\delta(\tilde{\epsilon})/2<0$ on a neighborhood of $z^*+\tilde{\epsilon}$. Hence $z^*+\tilde{\epsilon}$ is a good candidate for the first positive root of the additionally shocked system. All that is left to show is that there exists a certain value for the shock size $p$ such that $z^*+\tilde{\epsilon}$ becomes the first positive root. To this end, let
\[ p(\tilde{\epsilon}) := \frac{\tilde{\epsilon}\delta(\tilde{\epsilon})}{\E[W^+]-z^*-\tilde{\epsilon}(1-\delta(\tilde{\epsilon}))}. \]

\noindent Note that for $\P(T=0)<1$ the root $z^*$ is always less than $\E[W^+]$ and hence for $\tilde{\epsilon}$ small enough $p(\tilde{\epsilon})$ becomes positive. As $\tilde{\epsilon}\to0$, also $p(\tilde{\epsilon})$ tends to zero. Now note that for all $0<\epsilon\leq\tilde{\epsilon}$,
\[ f_{p(\tilde{\epsilon})}(z^*+\epsilon) \geq (1-p(\tilde{\epsilon}))(-\epsilon\delta(\tilde{\epsilon})) + p(\tilde{\epsilon})\left(\E[W^+]-(z^*+\epsilon)\right) \geq (\tilde{\epsilon}-\epsilon)\delta(\tilde{\epsilon}) \geq 0 \]
with equality only for $\epsilon=\tilde{\epsilon}$. The additional shock strictly increases $f(z)$ and hence there cannot be any root less or equal $z^*$. In particular, $z^*+\tilde{\epsilon}$ is the first positive root of the additionally shocked system. By letting $\tilde{\epsilon}\to0$, we conclude for arbitrarily small $\epsilon>0$ that
\[ n^{-1}\mathcal{S}_n \leq \E\left[S\psi_T(W^-z^*)\right] + \epsilon + o_p(1). \]

\noindent For the case of $\hat{z}=z^*$, we simply need to combine parts \ref{thm:asymp:2:1} and \ref{thm:asymp:2:2} of the theorem.
\end{proof}

\subsection{Proofs for Section \ref{sec:resilience}}\label{ssec:proofs:3}
\begin{proof}[Proof of Proposition \ref{prop:convergence:speed}]
By (\ref{eqn:integral:representation}), we derive $f_i(z):=f(z;(W^-,W^+,TM_i))\leq \E[W^+\1_{\{M_i=0\}}] + \kappa z + o(z)$ and similarly $\E[S\psi_{TM_i}(W^-z)] \leq \E[S\1_{\{M_i=0\}}] + \kappa_S z + o(z)$.
Hence we derive
\[ \hat{z}_i \leq -\kappa^{-1}\E[W^+\1_{\{M_i=0\}}] + o(\E[W^+\1_{\{M_i=0\}}]), \]
where $\hat{z}_i$ denotes the first positive root of $f_i(z)$. Together with Theorem \ref{thm:asymp:1} this shows that
\[ \text{w.\,h.\,p.} \quad n^{-1}\mathcal{S}_n^{M_i} \leq \E[S\1_{\{M_i=0\}}] - \kappa^{-1}\kappa_S\E[W^+\1_{\{M_i=0\}}] + o(\E[W^+\1_{\{M_i=0\}}]). \]

\noindent If $f(z)$ and $\E[S\psi_T(W^-z)]$ are continuously differentiable from the right at $z=0$ with derivatives $\kappa<0$ and $\kappa_S<\infty$, then above inequalities are equalities and hence
\[ n^{-1}\mathcal{S}_n^{M_i} \xrightarrow{p} \E[S\1_{\{M_i=0\}}] - \kappa^{-1}\kappa_S\E[W^+\1_{\{M_i=0\}}] + o(\E[W^+\1_{\{M_i=0\}}]).\qedhere \]
\end{proof}

\begin{proof}[Proof of Theorem \ref{thm:cont:res}]
By condition 
(\ref{thm:cont:res:ass}), we derive that $z^*_i\to0$ as $i\to\infty$, where
\[ z^*_i := \inf\left\{ z>0\,:\,f(z;(W^-,W^+,TM_i))<0\right\}. \]
For $i$ large enough such that $z^*_i<z_0$, we can then apply part \ref{thm:asymp:2:2} of Theorem \ref{thm:asymp:2} to derive that
\[ \text{w.\,h.\,p.} \quad n^{-1}\mathcal{S}_n^{M_i} \leq \E\left[S\psi_{TM_i}(W^-z^*_i)\right] + \frac{\epsilon}{2} \leq \E\left[S\psi_T(W^-z^*_i)\right] + \E\left[S\1_{\{M_i=0\}}\right] + \frac{\epsilon}{2}.\]
Note that from continuity of $d(z)$ it follows that also $\E\left[W^-W^+\phi_{TM_i}(W^-z)\right]$ is continuous by dominated convergence. Since $S$ is integrable, the first summand tends to zero as $z^*_i\to0$ and also the second summand vanishes as $\P(M_i=0)\to0$. In particular, we can choose $i$ large enough such that
\[ \E\left[S\psi_T(W^-z^*_i)\right] + \E\left[S\1_{\{M_i=0\}}\right]\leq\frac{\epsilon}{2}.\qedhere \]
\end{proof}

We now turn to the proofs of Theorems \ref{threshold:res} -- \ref{thm:Pareto:type}. To show resilience of the financial system in Theorem \ref{threshold:res}, we want to use Theorem \ref{thm:cont:res} and thus need to ensure that $d(z)$ is continuous for $z>0$. This is done in the following lemma.

\begin{lemma}\label{lem:cont:differentiability}
Assume that $T=\tau(W^-)$ for some integer-valued function $\tau$ satisfying $\tau(w)=o(w)$. Then $d(z)$ is continuous on $(0,\infty)$.
\end{lemma}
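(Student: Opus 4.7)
The plan is to establish continuity of $d(z)$ at each $z_0 \in (0,\infty)$ by dominated convergence. Since $\phi_r(x) = e^{-x} x^{r-1}/(r-1)! \, \1_{r \geq 1}$ is continuous in $x$ for every integer $r \geq 1$, the integrand $w^- w^+ \phi_{\tau(w^-)}(w^- z)$ is continuous in $z$, so one only needs to produce a $(W^-, W^+)$-integrable dominating function on a neighborhood of $z_0$.

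First I would fix a compact neighborhood $[a,b] \subset (0, \infty)$ of $z_0$, for instance $a := z_0/2$ and $b := 2 z_0$, and split the expectation according to $W^- \leq M$ versus $W^- > M$ for a threshold $M$ to be chosen. On $\{W^- \leq M\}$ the trivial estimate $\phi_r(x) \leq 1$ yields the dominant $M W^+$, which is integrable by Assumption \ref{vertex:assump}.

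Next, using $\tau(w) = o(w)$, I would pick $M$ so large that $\tau(w) - 1 \leq w a/2 = w z_0/4$ for all $w > M$. For such $w$ and for $z \in [a,b]$, the Poisson rate $\lambda = w z \geq w a > \tau(w) - 1$ lies strictly to the right of the mode of the Poisson mass function, so $\lambda \mapsto e^{-\lambda} \lambda^{\tau(w)-1}/(\tau(w)-1)!$ is decreasing on $[w a, w b]$. Consequently
\[ \sup_{z \in [a,b]} \phi_{\tau(w)}(wz) = \P\bigl(\mathrm{Poi}(w a) = \tau(w) - 1\bigr) \leq \P\bigl(\mathrm{Poi}(w a) \leq w a/2\bigr) \leq e^{-w a/8} = e^{-w z_0/16}, \]
where the last inequality is the standard Chernoff estimate $\P(\mathrm{Poi}(\lambda) \leq \lambda/2) \leq e^{-\lambda/8}$. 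Since $w \mapsto w e^{-w z_0/16}$ is bounded by a constant $C = C(z_0)$, the integrand on $\{W^- > M\}$ is therefore dominated by $C W^+$, which is again integrable.

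Patching the two regimes gives $\sup_{z \in [a,b]} W^- W^+ \phi_{\tau(W^-)}(W^- z) \leq (M + C) W^+ \in L^1$, and the dominated convergence theorem yields $d(z) \to d(z_0)$ as $z \to z_0$. Since $z_0 \in (0,\infty)$ was arbitrary, this proves continuity of $d$ on $(0, \infty)$. The main obstacle is producing an integrable dominant without any joint moment hypothesis on $(W^-, W^+)$; the assumption $\tau(w) = o(w)$ is exactly what is needed, as it forces the Poisson mass at $\tau(w) - 1$ to decay exponentially in $w$, thereby absorbing the factor $W^-$ and reducing integrability to that of $W^+$ alone. Note that continuity at $z = 0$ is neither asserted nor expected: $\phi_1(0) = 1$ while $\phi_r(0) = 0$ for $r \geq 2$, so $d$ may jump at the origin if $\P(\tau(W^-) = 1) > 0$.
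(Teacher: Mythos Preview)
Your proof is correct and follows essentially the same approach as the paper: dominated convergence on a compact neighborhood of $z_0$, splitting into small and large $W^-$, and using $\tau(w)=o(w)$ to obtain exponential decay of the Poisson mass at $\tau(w)-1$ for large $w$, which absorbs the factor $W^-$ and reduces the dominant to a constant times $W^+$. The only technical difference is that the paper obtains the exponential bound via Stirling's approximation of $\Gamma(T)$ and the function $g(x)=1-x+x\log x$, whereas you use monotonicity of the Poisson pmf past its mode together with the Chernoff bound $\P(\mathrm{Poi}(\lambda)\le\lambda/2)\le e^{-\lambda/8}$; both devices yield the same conclusion.
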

\begin{proof}
We fix some $\tilde{z}<\infty$ and aim to show that for some arbitrarily fixed $\delta<\tilde{z}$ the family $\left\{W^-W^+\phi_T(W^-z)\right\}_{z\in[\tilde{z}-\delta,\tilde{z}+\delta]}$ is bounded by some integrable random variable almost surely. This will show continuity of $d(z)$ by Lebesgue's dominated convergence theorem.

By definition of a Poisson random variable, we derive
\[ W^-W^+\phi_T(W^-z) = W^-W^+\P\left(\mathrm{Poi}(W^-z)=T-1\right) = W^-W^+e^{-W^-z}\frac{(W^-z)^{T-1}}{(T-1)!}. \]
Then, by applying the well-known bound $(T-1)! \ge ((T-1)/e)^{T-1}$ we obtain
\[ W^-W^+\phi_T(W^-z) \leq W^-W^+\exp\left\{-W^-z\left(1-\frac{T-1}{W^-z}+\frac{T-1}{W^-z}\log\left(\frac{T-1}{W^-z}\right)\right)\right\}. \]
In the exponent, we identify the expression $g\left((T-1)/(W^-z)\right)$, where $g(x):=1-x+x\log(x)$. The continuous function $g$ admits the unique minimum $g(x^*)=0$ at $x^*=1$. Since further $\lim_{x\to0+}g(x)=1$, it holds that $g(x)\geq G$ for $x<1/2$ and some suitable $G>0$. Now choose $\tilde{w}$ large enough such that $(\tau(w)-1)/(w(\tilde{z}-\delta))<1/2$ for all $w>\tilde{w}$. We derive,
\[ W^-W^+\phi_T(W^-z) \leq W^+\left(W^-\exp\left\{-W^-(\tilde{z}-\delta)G\right\}+\tilde{w}\right) \leq W^+M(\tilde{z}) \in L^1, \]
almost surely, where $M(\tilde{z})$ is a positive constant depending on $\tilde{z}$ only.
\end{proof}

\begin{remark}\label{rem:cont:continuous:differentiability}
In Theorem \ref{threshold:res}, we have $\tau(w)=\mathcal{O}(w^\gamma)$ for $0\leq\gamma<1$ and hence $\tau(w)=o(w)$.
\end{remark}

\begin{proof}[Proof of Theorem \ref{threshold:res}]
In order to ease notation, we will assume throughout all the proofs that $w_\text{min}^-=w_\text{min}^+=1$. The arguments for general $w_\text{min}^-$ and $w_\text{min}^+$ are completely analogue.

Since $\E[W^-W^+]$ is maximized for comonotone weights, i.\,e.~$W^+=F_{W^+}^{-1}(F_{W^-}(W^-))=(W^-)^{(\beta^--1)(\beta^+-1)}$, we get $$\E\left[W^-W^+\right] \leq \E\left[(W^-)^{1+\frac{\beta^--1}{\beta^+-1}}\right] = (\beta^--1)\int_1^\infty w^{\gamma_\text{c}-1}\dd w < \infty$$ for $\gamma_\text{c}<0$. By dominated convergence we conclude that $f$ is continuously differentiable on $[0,\infty)$ with $f'(z)=d(z)$. By $T\geq2$, in particular, $f'(0)=-1$ and hence by Theorem \ref{prop:res} the system is resilient to small shocks.

Now let $\gamma_\text{c}=0$ and $\alpha:=\liminf_{w\to\infty}\tau(w)>\alpha_\text{c}+1$. We assume that $\alpha<\infty$, otherwise we could truncate $\tau(w)$ at some $\N\ni\alpha>\alpha_\text{c}+1$. Since $\tau(w)\geq2$ is an integer-valued function, we observe that $\alpha\in\N\backslash\{0,1\}$ and $\tau(w)\geq\alpha$ for all $w>\tilde{w}$ and some constant $\tilde{w}>0$. Since $\psi_r(x)$ is monotonically decreasing in $r$, we derive that, as $z\to0$,
\begin{align*}
\E\left[W^+\psi_T(W^-z)\right] &\leq \E\left[W^+\psi_\alpha(W^-z) + W^+\psi_2(W^-z)\1_{\{W^-\leq\tilde{w}\}}\right]\leq 
\E\left[W^+\psi_\alpha(W^-z)\right] + o(z).
\end{align*}
Note that since $\psi_\alpha(x)$ is a strictly increasing function in $x$, this expression becomes maximized for comonotone dependence of $W^-$ and $W^+$. Since $\gamma_c = 0$, we derive
\begin{align*}
\limsup_{z\to0+}z^{-1}\E\left[W^+\psi_T(W^-z)\right] &\leq \limsup_{z\to0+}z^{-1} \sum_{k\geq\alpha} \E\left[(W^-)^\frac{\beta^--1}{\beta^+-1}e^{-W^-z}\frac{(W^-z)^k}{k!}\right]\\
&= \limsup_{z\to0+}\alpha_\text{c}\sum_{k\geq\alpha} \int_z^\infty \frac{x^{\frac{\beta^--1}{\beta^+-1}-\beta^-+k}e^{-x}}{k!}\dd x\\
&= \alpha_\text{c}\sum_{k\geq\alpha} \frac{\Gamma(k-1)}{k!} = \frac{\alpha_\text{c}}{\alpha-1}<1.
\end{align*}
In particular, $\limsup_{z\to0+}z^{-1}f(z)<0$. By Lemma \ref{lem:cont:differentiability} and Remark \ref{rem:cont:continuous:differentiability}, we also know that $d(z)$ is continuous for $z>0$ (we can simply cut off $\tau(w)$ at $\alpha$) and hence by Theorem \ref{thm:cont:res} we can conclude that the system must be resilient.

Finally, assume that $\gamma_\text{c}>0$ and $\alpha:=\liminf_{w\to\infty}w^{-\gamma_\text{c}}\tau(w) > \alpha_\text{c}$. We can then choose some $\alpha_\text{c}<\tilde{\alpha}<\alpha$ and $\tilde{w}<\infty$ such that $\tau(w)\geq\left\lceil\tilde{\alpha}w^{\gamma_\text{c}}\right\rceil$ for all $w>\tilde{w}$. Hence we derive that
\[ \E\left[W^+\psi_T(W^-z)\right] \leq \E\left[W^+\psi_{\left\lceil\tilde{\alpha}(W^-)^{\gamma_\text{c}}\right\rceil}(W^-z)\1_{\{W^->\tilde{w}\}}\right] + o(z). \]

\noindent By a Chernoff bound we get that $\psi_r(x)\leq(xe/r)^re^{-x}$ for $x<r$. Thus for $w\leq\left(\tilde{\alpha}^{-1}(1+\epsilon)z\right)^\frac{1}{\gamma_\text{c}-1}$ and $\epsilon>0$,
\[ \psi_{\left\lceil\tilde{\alpha}w^{\gamma_\text{c}}\right\rceil}(wz) \leq \left(\frac{wze}{\tilde{\alpha}w^{\gamma_\text{c}}}\right)^{\tilde{\alpha}w^{\gamma_\text{c}}}e^{-wz} = \exp\left\{-z^\frac{\gamma_\text{c}}{\gamma_\text{c}-1}g\left(wz^\frac{1}{1-\gamma_\text{c}}\right)\right\}, \]
where $g(x) := x-\tilde{\alpha}x^{\gamma_\text{c}}\log(ex^{1-\gamma_\text{c}}/\tilde{\alpha})$. For arbitrary $\lambda>0$, we can hence choose $\tilde{w}$ large enough such that for all $\tilde{w}<w\leq\left(\tilde{\alpha}^{-1}(1+\epsilon)ez\right)^\frac{1}{\gamma_\text{c}-1}$ it holds
\[ \psi_{\left\lceil\tilde{\alpha}w^{\gamma_\text{c}}\right\rceil}(wz) \leq \left(\frac{wze}{\tilde{\alpha}w^{\gamma_\text{c}}}\right)^{\tilde{\alpha}w^{\gamma_\text{c}}} \leq \left(\frac{wze}{\tilde{\alpha}w^{\gamma_\text{c}}}\right)^{1+\lambda}(1+\epsilon)^{-\tilde{\alpha}w^{\gamma_\text{c}}+1+\lambda} \leq \left(\tilde{\alpha}^{-1}ze\right)^{1+\lambda} \]
and
\[ \E\left[W^+\psi_{\left\lceil\tilde{\alpha}(W^-)^{\gamma_\text{c}}\right\rceil}(W^-z)\1_{\{\tilde{w}<W^-\leq \left(\tilde{\alpha}^{-1}(1+\epsilon)ez\right)^\frac{1}{\gamma_\text{c}-1}\}}\right] \leq \left(\tilde{\alpha}^{-1}ze\right)^{1+\lambda}\E[W^+] = o(z). \]
For $(\tilde{\alpha}^{-1}(1+\epsilon)ez)^\frac{1}{\gamma_\text{c}-1} < w \leq (\tilde{\alpha}^{-1}(1+\epsilon)z)^\frac{1}{\gamma_\text{c}-1}$, we have $g(wz^\frac{1}{1-\gamma_\text{c}}) \geq \delta$ for some $\delta>0$. Thus
\[ \E\left[W^+\psi_{\left\lceil\tilde{\alpha}(W^-)^{\gamma_\text{c}}\right\rceil}(W^-z)\1_{\{\left(\tilde{\alpha}^{-1}(1+\epsilon)ez\right)^\frac{1}{\gamma_\text{c}-1}<W^-\leq\left(\tilde{\alpha}^{-1}(1+\epsilon)z\right)^\frac{1}{\gamma_\text{c}-1}\}}\right]\leq \E[W^+]e^{-\delta z^\frac{\gamma_\text{c}}{\gamma_\text{c}-1}} = o(z). \]
Hence, as $z\to0$, only $W^->\left(\tilde{\alpha}^{-1}(1+\epsilon)z\right)^\frac{1}{\gamma_\text{c}-1}$ contributes to $z^{-1}\E[W^+\psi_T(W^-z)]$. On this set, by bounding with the comonotone dependence, we compute
\begin{align}\label{eqn:expectation:upper:part}
\E\Bigg[W^+\psi_T(W^-z)\1_{\{W^->\left(\tilde{\alpha}^{-1}(1+\epsilon)z\right)^\frac{1}{\gamma_\text{c}-1}\}}\Bigg] &\leq \E\Bigg[W^+\1_{\{W^->\left(\tilde{\alpha}^{-1}(1+\epsilon)z\right)^\frac{1}{\gamma_\text{c}-1}\}}\Bigg]\\
&\leq (\beta^--1)\int_{\left(\tilde{\alpha}^{-1}(1+\epsilon)z\right)^\frac{1}{\gamma_\text{c}-1}}^\infty w^{\frac{\beta^--1}{\beta^+-1}-\beta^-}\dd w= \alpha_\text{c}\frac{(1+\epsilon)z}{\tilde{\alpha}}\nonumber
\end{align}
Hence, $\limsup_{z\to0+}z^{-1}f(z) = \limsup_{z\to0+}z^{-1}\E[W^+\psi_T(W^-z)]-1<0$ by choosing $\epsilon>0$ small enough such that $\tilde{\alpha}^{-1}\alpha_\text{c}(1+\epsilon)<1$. This shows resilience by the same arguments as in part 2, noting that we can cut $\tau(w)$ at $w^\eta$ for some $\gamma_\text{c}<\eta<1$.
\end{proof}

\begin{proof}[Proof of Theorem \ref{thm:functional:nonres}]
Again, we simplify notation be setting $w_\text{min}^-=w_\text{min}^+=1$.
We start by proving the second statement. Let $\alpha:=\limsup_{w\to\infty}w^{-\gamma_\text{c}}\tau(w)$ and choose $\alpha < \tilde{\alpha} < \int_0^\infty\Lambda(x^{1-\beta^+})\dd x$ and $\tilde{w}<\infty$ such that $\tau(w)\leq\left\lfloor\tilde{\alpha}w^{\gamma_\text{c}}\right\rfloor$ for all $w>\tilde{w}$. Moreover, choose $\epsilon>0$ and $\delta>0$ such that $\tilde{\alpha}<(1-\epsilon)(1-\delta)\int_0^\infty\Lambda(x^{1-\beta^+})\dd x$ and let $z>0$ small enough such that $\tilde{w}\leq\left(\tilde{\alpha}^{-1}(1-\epsilon)z\right)^{{1}/(\gamma_\text{c}-1)}$ as well as $z<(\epsilon^2\delta)^{{1}/(\gamma_\text{c}-1)}(\tilde{\alpha}/(1-\epsilon))^{{1}/\gamma_\text{c}}$. For such $z$ and $w>(\tilde{\alpha}^{-1}(1-\epsilon)z)^{{1}/(\gamma_\text{c}-1)}$ we obtain
\[ \P\left(\mathrm{Poi}(wz)\geq\left\lfloor\tilde{\alpha}w^{\gamma_\text{c}}\right\rfloor\right) \geq 1-\P\left(\left\vert\mathrm{Poi}(wz)-wz\right\vert \geq \epsilon wz\right) \geq 1-\frac{1}{\epsilon^2wz} \geq 1-\frac{1-\epsilon}{\epsilon^2\tilde{\alpha}w^{\gamma_\text{c}}} \geq 1-\delta, \]
by Chebyshev's inequality. Therefore,
\begin{align}\label{eqn:int:cond:prob}
&\E\left[W^+\psi_T(W^-z)\right] \geq (1-\delta)\E\left[W^+\1_{\{W^->\left(\tilde{\alpha}^{-1}(1-\epsilon)z\right)^\frac{1}{\gamma_\text{c}-1}\}}\right]\nonumber\\
&\hspace{0.382cm}= (1-\delta)\int_0^\infty \P\left(W^+>x,W^->\left(\tilde{\alpha}^{-1}(1-\epsilon)z\right)^\frac{1}{\gamma_\text{c}-1}\right)\dd x\nonumber\\
&\hspace{0.382cm}= (1-\delta)\left(\tilde{\alpha}^{-1}(1-\epsilon)z\right)^{\frac{1}{\gamma_\text{c}-1}\frac{\beta^--1}{\beta^+-1}}\nonumber\\
&\hspace{3.675cm} \times\int_0^\infty \P\left(F_{W^+}(W^+)>1-x^{1-\beta^+}p(z),F_{W^-}(W^-)>1-p(z)\right)\dd x\nonumber\\
&\hspace{0.382cm}= \tilde{\alpha}^{-1}(1-\epsilon)(1-\delta)z \int_0^\infty \P\left(F_{W^+}(W^+)>1-x^{1-\beta^+}p(z) \,\middle\vert\, F_{W^-}(W^-)>1-p(z)\right)\dd x,
\end{align}
where we substituted $p(z) := (\tilde{\alpha}^{-1}(1-\epsilon)z)^{(1-\beta^-)/(\gamma_\text{c}-1)}$. Note that the conditional probability is bounded by $1\wedge x^{1-\beta^+}$. Hence, by Lebesgue's dominated convergence theorem
\[ z^{-1}\E\left[W^+\psi_T(w^-z)\right] \geq \tilde{\alpha}^{-1}(1-\epsilon)(1-\delta)\int_0^\infty\Lambda\left(x^{1-\beta^+}\right)\dd x + o(1) > 1 + o(1) \]
and thus $\E\left[W^+\psi_T(W^-z)\right]>z$ for $z$ small enough which implies non-resilience by Theorem \ref{prop:nonres}.

For resilience in part 2 we follow the same approach as in the proof of Theorem \ref{threshold:res} until we arrive at (\ref{eqn:expectation:upper:part}), which we can now evaluate with the same means as above. That is,
\[ z^{-1}\E\bigg[W^+\1\bigg\{W^->\left(\tilde{\alpha}^{-1}(1+\epsilon)z\right)^\frac{1}{\gamma_\text{c}-1}\bigg\}\bigg] \to \tilde{\alpha}^{-1}(1+\epsilon)\int_0^\infty\Lambda\left(x^{1-\beta^+}\right)\dd x <1,\quad\text{as }z\to0. \]
For the first statement of the theorem, note that we can lower bound the integral in (\ref{eqn:int:cond:prob}) by $\P(F_{W^+}(W^+)>1-p(z) \mid F_{W^-}(W^-)>1-p(z))$ and hence we derive non-resilience as above by
\[ \liminf_{z\to0}z^{-1}\E[W^+\psi_T(W^-z)] \geq \tilde{\alpha}^{-1}(1-\epsilon)(1-\delta) \lambda > 1.\qedhere \]
\end{proof}

\begin{proof}[Proof of Theorem \ref{thm:Pareto:type}]
Let $U\sim\mathrm{Unif}
[0,1]$ be uniformly distributed on the interval $[0,1]$ and
\[ X^\pm := (1-U)^\frac{1}{1-\beta^\pm}K^\pm. \]
Then $X^\pm\sim\mathrm{Par}(\beta^\pm,K^\pm)$. Further, let $\tilde{W}^\pm := F_{W^\pm}^{-1}(U)$ such that $\tilde{W}^\pm\stackrel{d}{=} W^\pm$ but $\tilde{W}^-$ and $\tilde{W}^+$ are comonotone. By \eqref{eqn:Pareto:type} we then now that $\tilde{W}^\pm=F_{W^\pm}^{-1}(U)\leq F_{X^\pm}^{-1}(U)=X^\pm$ for $U\geq\tilde{u}$ large enough and hence
\[ \E[W^-W^+]\leq\E[\tilde{W}^-\tilde{W}^+]\leq \tilde{w}(\E[W^-]+\E[W^+]) + \E[X^-X^+], \]
where $\tilde{w}=\max\{F_{W^-}^{-1}(\tilde{u}),F_{W^+}^{-1}(\tilde{u})\}<\infty$. Then for the case of $\gamma_c<0$, by the same calculations as in the proof of Theorem \ref{threshold:res} we derive $\E[X^-X^+]<\infty$ and thus resilience of the system.

For $\gamma_c=0$, as in the proof of Theorem \ref{threshold:res} we have
\[ \E[W^+\psi_T(W^-z)] \leq \E[W^+\psi_\alpha(W^-z)]+o(z), \]
where $\alpha=\liminf_{w\to\infty}\tau(w)\in\N\backslash\{0,1\}$. Hence
\begin{align*}
\limsup_{z\to0+} \frac{\E[W^+\psi_T(W^-z)]}{z} &\leq \limsup_{z\to0+} \frac{\E[\tilde{W}^+\psi_\alpha(\tilde{W}^-z)]}{z}\\
&\leq \limsup_{z\to0+} \frac{\E[X^+\psi_\alpha(X^-z)] + \tilde{w}\psi_2(\tilde{w}z)}{z}\\
&=\frac{\beta^+-1}{\beta^+-2}K^+K^-\frac{1}{\alpha-1}
\end{align*}
and the system is resilient for $\liminf_{w\to\infty}\tau(w)=\alpha>\frac{\beta^+-1}{\beta^+-2}K^+K^-+1$.

For $\gamma_c>0$, we follow the proof of Theorem \ref{threshold:res} and replace the right-hand side of \eqref{eqn:expectation:upper:part} by
\[ \E\bigg[X^+\1_{\{X^->\left(\tilde{\alpha}^{-1}(1+\epsilon)z\right)^\frac{1}{\gamma_\text{c}-1}\}}\bigg] + \tilde{w}\1_{\tilde{w}>\left(\tilde{\alpha}^{-1}(1+\epsilon)z\right)^\frac{1}{\gamma_\text{c}-1}} = \frac{\beta^+-1}{\beta^+-2}K^+(K^-)^{1-\gamma_c}\frac{(1+\epsilon)z}{\tilde{\alpha}} + o(z).\qedhere \]
\end{proof}

\begin{proof}[Proof of Proposition \ref{prop:robust:capital:requirements}]
The capitals $c_i$ are chosen such that the threshold values $\tau_i$ are at least $\tau(w_i^-)$. Hence coupling the weighted network to the corresponding threshold network yields the result.
\end{proof}

\begin{proof}[Proof of Theorem \ref{cor:threshold:res}]
By the assumption of $c_i>\max_{j\in[n]\backslash\{i\}}E_{j,i}$ almost surely for all $i\in[n]$, we get $\tau_i\geq2$ almost surely. The proof of part \ref{cor:threshold:res:1} is thus completely analogue to the one of Theorem \ref{threshold:res}.

We continue by proving part \ref{cor:threshold:res:3}. By the means of the proof of Theorem \ref{threshold:res} we derive that $\limsup_{z\to0+}z^{-1}\E\left[W^+\psi_T(W^-z)\1\left\{T>(1+\epsilon)\alpha_\text{c}(W^-)^{\gamma_\text{c}}\right\}\right] < 1$ for each $\epsilon>0$. It will hence suffice to prove $\E\left[W^+\psi_T(W^-z)\1\left\{T\leq(1+\epsilon)\alpha_\text{c}(W^-)^{\gamma_\text{c}}\right\}\right] = o(z)$ in order to show resilience. To this end, choose $0<\delta<\gamma_\text{c}(t-1)$. Then
\begin{align*}
\E\Big[W^+\psi_T(W^-z)\1\Big\{T\leq(W^-)^\delta\Big\}\Big] &\leq \E\left[W^+\psi_2(W^-z)\1\left\{T\leq(W^-)^\delta\right\}\right]\\
&= \lim_{M\to\infty}\E\left[\left(W^+\wedge M\right)\psi_2(W^-z)\1\left\{T\leq(W^-)^\delta\right\}\right]\\
&\leq \liminf_{M\to\infty} \liminf_{n\to\infty} n^{-1}\sum_{i\in[n]}\left(w_i^+\wedge M\right)\psi_2(w_i^-z)\1\left\{\tau_i\leq(w_i^-)^\delta\right\},
\end{align*}
where we approximated the non-continuous integrand by continuous ones and used almost sure weak convergence by Assumption \ref{vertex:assump}. Now taking expectations with respect to the exposure lists, by Fatou's lemma we derive
\begin{align*}
\E\left[W^+\psi_T(W^-z)\1\left\{T\leq(W^-)^\delta\right\}\right] &\leq \liminf_{M\to\infty} \liminf_{n\to\infty} n^{-1}\sum_{i\in[n]}\left(w_i^+\wedge M\right)\psi_2(w_i^-z)\P\left(\tau_i\leq(w_i^-)^\delta\right)\\
&\leq K_1 \liminf_{M\to\infty} \liminf_{n\to\infty} n^{-1}\sum_{i\in[n]}\left(w_i^+\wedge M\right)\psi_2(w_i^-z)(w_i^-)^{\delta-t\gamma_\text{c}}\\
&= K_1 \liminf_{M\to\infty} \E\left[\left(W^+\wedge M\right)\psi_2(W^-z)(W^-)^{\delta-t\gamma_\text{c}}\right]\\
&= K_1 \E\left[W^+\psi_2(W^-z)(W^-)^{\delta-t\gamma_\text{c}}\right],
\end{align*}
where we used Assumption \ref{ass:exposures} to bound
\begin{align*}
\P\left(\tau_i\leq(w_i^-)^\delta\right) &= \P\Bigg(\sum_{1\leq j\leq(w_i^-)^\delta}E_{\rho_i(j),i}\geq c_i\Bigg) \leq \P\Bigg(\sum_{1\leq j\leq(w_i^-)^\delta}E_{\rho_i(j),i}\geq \tau(w_i^-)\mu_i\Bigg)\\
& \leq \P\Bigg(\sum_{1\leq j\leq(w_i^-)^\delta}E_{\rho_i(j),i}\geq \alpha_\text{c}(w_i^-)^{\gamma_\text{c}}\mu_i\Bigg)\leq K_1(w_i^-)^{\delta-t\gamma_\text{c}},
\end{align*}
for $w_i^-$ large enough and some uniform constant $K_1>\infty$. Note that for $W^-\leq\tilde{w}$, we have $\E\left[W^+\psi_T(W^-z)\1_{\{W^-\leq\tilde{w}\}}\right] = o(z)$ as in the proof of Theorem \ref{threshold:res}. Hence it holds
\[ z^{-1}\E\left[W^+\psi_T(W^-z)\1\left\{T\leq(W^-)^\delta\right\}\right] \leq K_1 \E\left[W^+\frac{\psi_2(W^-z)}{W^-z}(W^-)^{1+\delta-t\gamma_\text{c}} \right] + o(1). \]
Since $\psi_2(x)=o(x)$, by dominated convergence it is enough to prove $\E\left[W^+(W^-)^{1+\delta-t\gamma_\text{c}} \right]<\infty$ in order for $\E\left[W^+\psi_T(W^-z)\1\left\{T\leq(W^-)^\delta\right\}\right]=o(z)$. We can easily choose $t>1$ and $\delta>0$ in such a way that $1+\delta-t\gamma_\text{c}>0$ and can therefore estimate $\E[W^+(W^-)^{1+\delta-t\gamma_\text{c}} ]$ by the comonotone expectation $\E\Big[(W^-)^{(\beta^--1)/(\beta^+-1)+1+\delta-t\gamma_\text{c}} \Big]$ which is finite since by our choice of $\delta$,
\[ \frac{\beta^--1}{\beta^+-1}+1+\delta-t\gamma_\text{c}-\beta^- = \gamma_\text{c}(1-t)+\delta-1 < -1. \]

\noindent Now let $2\leq N<\gamma_\text{c}/\delta$ and consider
$\E\left[W^+\psi_T(W^-z)\1\big\{(W^-)^{(N-1)\delta}<T\leq(W^-)^{N\delta}\big\}\right]$. As in the proof of Theorem \ref{threshold:res} it is enough to consider
$\E\left[W^+\1\big\{W^->z^{{1}/({(N-1)\delta-1})},T\leq(W^-)^{N\delta}\big\}\right]$.
Similar as above, we derive $\P\left(\tau_i\leq(w_i^-)^{N\delta}\right) \leq K_N(w_i^-)^{N\delta-t\gamma_\text{c}}$ for some uniform $K_N<\infty$ and
\begin{align*}
\E\Big[W^+\1\Big\{W^->z^\frac{1}{(N-1)\delta-1},T\leq(W^-)^{N\delta}\Big\}\Big] &\leq K_N \E\left[W^+\1\left\{W^->z^\frac{1}{(N-1)\delta-1}\right\}(W^-)^{N\delta-t\gamma_\text{c}}\right]\\
&\leq K_N\E\left[(W^-)^{\frac{\beta^--1}{\beta^+-1}}\1\left\{W^->z^\frac{1}{(N-1)\delta-1}\right\}\right]z^\frac{N\delta-t\gamma_\text{c}}{(N-1)\delta-1}\\
&= K_N\frac{\beta^--1}{1-\gamma_\text{c}} z^\frac{\gamma_\text{c}-1+N\delta-t\gamma_\text{c}}{(N-1)\delta-1}= o(z)
\end{align*}
since by the choice of $\delta$ and $N$, it holds $\gamma_\text{c}-1+N\delta-t\gamma_\text{c} < (N-1)\delta-1 < 0$.

Finally, we have to consider the part
$\E\left[W^+\psi_T(W^-z)\1{\{(W^-)^{\gamma_\text{c}-\delta}<T\leq(1+\epsilon)\alpha_\text{c}(W^-)^{\gamma_\text{c}}\}}\right]$.
If we choose $\epsilon>0$ small enough such that $(1+2\epsilon)\alpha_\text{c}<\liminf_{w\to\infty}\tau(w)/w^{\gamma_\text{c}}$ and denote $\tilde{\tau}(w):=(1+\epsilon)\alpha_\text{c}w^{\gamma_\text{c}}$, then we observe that by Assumption \ref{ass:exposures} for $w_i^-$ large enough
\begin{align*}
\P\left(\tau_i\leq\tilde{\tau}(w_i^-)\right) &\leq 
\P\left(\sum_{j=1}^{\tilde{\tau}(w_i^-)} E_{\rho_i(j),i}\geq\frac{1+2\epsilon}{1+\epsilon}\tilde{\tau}(w_i^-)\mu_i\right)
\leq K\left((1+\epsilon)\alpha_\text{c}(w_i^-)^{\gamma_\text{c}}\right)^{-(t-1)}
\end{align*}
for some $K<\infty$. Since $\gamma_\text{c}-1-(t-1)\gamma_\text{c} < \gamma_\text{c}+\delta-1 < 0$, we then get
\begin{align*}
&\E\left[W^+\1{\{W^->z^\frac{1}{\gamma_\text{c}+\delta-1},T\leq(1+\epsilon)\alpha_\text{c}(W^-)^{\gamma_\text{c}}\}}\right] \\
&\leq K \E\left[W^+\1_{\{W^->z^\frac{1}{\gamma_\text{c}+\delta-1}\}}\left((1+\epsilon)\alpha_\text{c}(W^-)^{\gamma_\text{c}}\right)^{-(t-1)}\right]\\
&\leq K \left((1+\epsilon)\alpha_\text{c}\right)^{-(t-1)} \frac{\beta^--1}{1-\gamma_\text{c}}z^\frac{\gamma_\text{c}-1-(t-1)\gamma_\text{c}}{\gamma_\text{c}+\delta-1}= o(z).
\end{align*}

\noindent Altogether, we derive (note that we decomposed the expectation in finitely many summands)
\[ \E\left[W^+\psi_T(W^-z)\1\left\{T\leq(1+\epsilon)\alpha_\text{c}(W^-)^{\gamma_\text{c}}\right\}\right] = o(z) \]
and hence $\limsup_{z\to0+}z^{-1}f(z) = \limsup_{z\to0+}z^{-1}\E\left[W^+\psi_T(W^-z)\right] - 1 < 0$, which shows resilience by Theorem \ref{thm:cont:res}. Note that we can cut off $T$ at $(W^-)^\eta$ for some $\gamma_\text{c}<\eta<1$ to ensure continuity of $d(z)$ by Lemma \ref{lem:cont:differentiability} and Remark \ref{rem:cont:continuous:differentiability}.

Part \ref{cor:threshold:res:2} follows by the same calculations replacing $\gamma_\text{c}$ by $\gamma$ and using $\gamma>\gamma_\text{c}$.
\end{proof}

{\begin{multicols}{2}
\footnotesize
\bibliography{finance}
\bibliographystyle{abbrv}
\end{multicols}}

\end{document}